\documentclass[runningheads]{llncs}
\usepackage[T1]{fontenc}
\usepackage{graphicx}

\usepackage{amssymb}
\usepackage{centernot}
\usepackage{todonotes}
\usepackage{xspace}

\usetikzlibrary{arrows.meta}
\usetikzlibrary{automata, positioning}
\usetikzlibrary{calc}

\usepackage{amsmath}
\allowdisplaybreaks
\usepackage{comment}
\usepackage{cleveref}
\usepackage{thmtools}

\newcommand{\David}[1]{\todo[color=red!30]{\fontsize{6}{6}\selectfont DC: #1}}

\newcommand{\Calixte}[1]{\todo[color=yellow!30]{\fontsize{6}{6}\selectfont CG: #1}}
\newcommand{\Quentin}[1]{\todo[color=blue!30]{\fontsize{6}{6}\selectfont QM: #1}}

\newcommand{\FirstItem}{(i)\xspace}
\newcommand{\SecondItem}{(ii)\xspace}
\newcommand{\ThirdItem}{(iii)\xspace}

\newcommand{\FormatEntitySet}[1]{\ensuremath{\mathsf{#1}}}

\newcommand{\FormatFormulaSet}[1]{\ensuremath{\mathcal{#1}}\xspace}

\newcommand{\FormatFunction}[1]{\ensuremath{\mathbf{#1}}\xspace}
\newcommand{\FF}[1]{\FormatFunction{#1}}
\newcommand{\FormatPredicate}[1]{\ensuremath{\mathtt{#1}}\xspace}
\newcommand{\FP}[1]{\FormatPredicate{#1}}

\newcommand{\Classes}{\FormatEntitySet{Classes}\xspace}
\newcommand{\Roles}{\FormatEntitySet{Roles}\xspace}
\newcommand{\InvRoles}{\FormatEntitySet{InvRoles}\xspace}
\newcommand{\RoleExpressions}{\FormatEntitySet{RoleExps}\xspace}

\newcommand{\Constants}{\FormatEntitySet{Cons}\xspace}
\newcommand{\Variables}{\FormatEntitySet{Vars}\xspace}
\newcommand{\Terms}{\FormatEntitySet{Terms}\xspace}

\newcommand{\Body}{\ensuremath{\beta}\xspace}
\newcommand{\Head}{\ensuremath{\eta}\xspace}
\newcommand{\Formula}{\ensuremath{\psi}\xspace}
\newcommand{\FormulaAux}{\ensuremath{\psi'}\xspace}

\newcommand{\Fact}{\ensuremath{\alpha}\xspace}
\newcommand{\Rule}{\ensuremath{\varrho}\xspace}
\newcommand{\Query}{\ensuremath{\gamma}\xspace}
\newcommand{\QueryAux}{\ensuremath{\gamma'}\xspace}

\newcommand{\PositiveConjunctiveTwoWayRegularPathQuery}{\ensuremath{\varphi}\xspace}
\newcommand{\PCTRPQ}{\PositiveConjunctiveTwoWayRegularPathQuery}
\newcommand{\PQ}{\PCTRPQ}
\newcommand{\Substitution}{\ensuremath{\sigma}\xspace}
\newcommand{\Subs}{\Substitution}
\newcommand{\Trigger}{\ensuremath{\tau}\xspace}

\newcommand{\RuleSet}{\FormatFormulaSet{R}}
\newcommand{\RS}{\RuleSet}
\newcommand{\TBox}{\FormatFormulaSet{T}}
\newcommand{\FactSet}{\FormatFormulaSet{F}}
\newcommand{\FS}{\FactSet}

\newcommand{\SentenceSet}{\FormatFormulaSet{S}}
\renewcommand{\SS}{\SentenceSet}

\newcommand{\KB}{\FormatFormulaSet{K}}

\newcommand{\HornSHIQ}{\ensuremath{\textit{Horn-}\mathcal{SHIQ}}\xspace}
\newcommand{\HornALC}{\ensuremath{\textit{Horn-}\mathcal{ALC}}\xspace}
\newcommand{\HornALCDat}{\ensuremath{\textit{Horn-}\mathcal{ALC}_{\centernot\exists}}\xspace}
\newcommand{\HornALCHI}{\ensuremath{\textit{Horn-}\mathcal{ALCHI}}\xspace}
\newcommand{\EL}{\ensuremath{\mathcal{EL}}\xspace}

\newcommand{\quasilinELHI}{\ensuremath{\mathcal{ELH}^{ql}_i}}
\newcommand{\ELI}{\ensuremath{\mathcal{ELI}}\xspace}
\newcommand{\ELHI}{\ensuremath{\mathcal{ELHI}}\xspace}
\newcommand{\Imc}{\mathcal{I}}
\newcommand{\Jmc}{\mathcal{J}}
\newcommand{\propconcepts}{$(\star)$}
\newcommand{\roleclosure}{\sqsubseteq^*}

\newcommand{\DLAutomaton}{\ensuremath{\mathfrak{A}}\xspace}
\newcommand{\DLAut}{\DLAutomaton}
\newcommand{\NonDeterministicFiniteAutomaton}{\ensuremath{\mathfrak{N}}\xspace}
\newcommand{\NFA}{\NonDeterministicFiniteAutomaton}
\newcommand{\EmptyWord}{\ensuremath{\varepsilon}\xspace}
\newcommand{\RegularExpression}{\ensuremath{\mathfrak{E}}\xspace}
\newcommand{\RE}{\RegularExpression}
\newcommand{\InitialState}{\ensuremath{q_{\textit{init}}}\xspace}
\newcommand{\AcceptingState}{\ensuremath{q_{\textit{acc}}}\xspace}
\newcommand{\EmptyTransition}{\ensuremath{\varepsilon}\xspace}
\newcommand{\TransitionFunction}{\ensuremath{\delta}\xspace}
\newcommand{\TF}{\ensuremath{\TransitionFunction}\xspace}

\newcommand{\Transition}[3]{\ensuremath{#1 \rightsquigarrow_{#2} #3}\xspace}
\newcommand{\MultiaryTrans}[2]{\ensuremath{#1 \rightsquigarrow_{\EmptyTransition} \{ #2 \}}\xspace}
\newcommand{\BinaryTransitionRelation}[3]{\ensuremath{#2 \to_{#1}^2 #3}\xspace}
\newcommand{\BinTransRel}[2]{\BinaryTransitionRelation{#1}{#2}}
\newcommand{\MultiTransitionRelation}[3]{\ensuremath{#2 \to_{#1}^{m} #3}\xspace}
\newcommand{\MultiTransRel}[2]{\MultiTransitionRelation{#1}{#2}}
\newcommand{\BothTransitionRelation}[3]{\ensuremath{#2 \to_{#1} #3}\xspace}
\newcommand{\BothTransRel}[2]{\BothTransitionRelation{#1}{#2}}
\newcommand{\TransitiveReflexiveTransitionRelation}[3]{\ensuremath{#2 \to_{#1}^* #3}\xspace}
\newcommand{\TRTransRel}[2]{\TransitiveReflexiveTransitionRelation{#1}{#2}}
\newcommand{\InHigherStrata}[3]{\ensuremath{#2 \succ_{#1} #3}\xspace}

\newcommand{\QueryMapping}{\FF{QM}}
\newcommand{\Language}{\FF{Lang}}
\newcommand{\Paths}[1]{\ensuremath{\FF{Paths}_{#1}}\xspace}
\newcommand{\Invert}{\FF{Inv}}
\newcommand{\Domain}{\FF{Domain}}
\newcommand{\HornALCToDLAutomaton}{\FF{DLA}}
\newcommand{\ToDLA}{\HornALCToDLAutomaton}
\newcommand{\StratificationFunction}[1]{\ensuremath{\FF{Str}_{#1}}\xspace}
\newcommand{\StrFun}[1]{\StratificationFunction{#1}}
\newcommand{\RegularExpressionNFAStateToState}[3]{\ensuremath{\RE_{#1 \triangleright #2}^{#3}}\xspace}

\newcommand{\NFAToPostiveQuery}{\FF{PQ}_\textit{NFA}}
\newcommand{\StratDLAToPostiveQuery}{\FF{PQ}}
\newcommand{\StratToPQ}{\StratDLAToPostiveQuery}

\begin{document}

\title{A General Sufficient Condition for Rewriting \HornALCHI Atomic Queries into GQL}
\titlerunning{A Sufficient Condition for Rewriting \HornALCHI Queries into GQL}
\author{David Carral\inst{1}, Calixte Gruson\inst{2}, and Quentin Mani\`ere\inst{1}}
\institute{$^1$LIRMM, Inria, University of Montpellier, CNRS, Montpellier, France \\
$^2$DI ENS, ENS, CNRS, PSL University \& Inria, France}
%
\authorrunning{David Carral, Calixte Gruson, and Quentin Mani\`ere}

\maketitle

\begin{abstract}
The emergence of the ISO standard GQL introduces a powerful query language extending first-order logic with controlled recursion, raising the question of its applicability to evaluation of ontology-mediated queries (OMQs). 
We focus on OMQs consisting of atomic queries over ontologies expressed in \HornALCHI, an expressive Description Logic that is not, in general, first-order rewritable. 
To address this, we introduce DL automata, a novel formalism that captures the semantics of such OMQs via runs over fact sets.
We then identify a large class of DL automata that can be rewritten into unions of conjunctive two-way regular path queries (UC2RPQs), a central fragment of GQL. 
Our class of automata relies on a stratification of their states, ruling out specific forms of cyclic dependencies known to raise the complexity. 
This yields a broad class of \HornALCHI OMQs that are GQL-rewritable.
\keywords{Description Logics \and $\HornALCHI$ \and Query Rewriting \and Positive Regular Path Queries \and GQL}
\end{abstract}

\section{Introduction}
\label{section:introduction}

In ontology-mediated query answering, user queries are enriched with conceptual knowledge about the domain of interest, described in the form of an ontology.
This conceptual layer enables the use of a more user-friendly vocabulary and supports the retrieval of implicit information derived from the explicit knowledge in the ontology, thereby enriching query answers.
Description Logics (DLs) are a predominant formalism for representing ontologies \cite{DBLP:conf/dlog/2003handbook}.
They provide the logical foundations of the OWL web ontology language, a W3C-standardized language for the Semantic Web \cite{DBLP:books/crc/Hitzler2010}.
In this work, we focus on the DL $\HornALCHI$, an expressive language that allows the use of inverse roles and role constructors; the following example illustrates its usefulness.

\begin{example}
\label{example:introduction-ontology}
Consider the set \TBox of \HornALCHI axioms\footnote{In the literature, DL axioms are often presented using a dedicated syntax \cite{dl-primer}; we use equivalent first-order formulas, which are more familiar to a broader audience.} given below, which formalises the notion of a \FP{Trust}{}ed user in a computer network.
\begin{align*}
\forall x, y . &\FP{Link}(x, y) \wedge \FP{Sens}(y) \to \FP{Sens}(x) & \forall x . &\FP{Gate}(x) \to \exists y . \FP{Link}(x, y) \wedge \FP{Onl}(y) \\
\forall x, y . &\FP{Link}(x, y) \wedge \FP{Onl}(y) \to \FP{Onl}(x) & \forall x, y . &\FP{Crit}(x) \wedge \FP{Acc}(x, y) \to \FP{Trust}(y) \\
\forall x . &\FP{Sens}(x) \wedge \FP{Onl}(x) \to \FP{Crit}(x)
\end{align*}
Furthermore, consider the fact set $\FS$, defined below and illustrated in \Cref{figure:example}.
\begin{align*}
\{& \FP{Acc}(c_1, \textit{alice}), \FP{Link}(c_1, c_2), \FP{Sens}(c_3), \FP{Link}(c_1, c_4), \FP{Link}(c_2, c_3), \notag \\
&\FP{Gate}(c_4), \FP{Acc}(c_5, \textit{bob}), \FP{Link}(c_5, c_4), \FP{Link}(c_5, c_6), \FP{Link}(c_6, c_7), \FP{Onl}(c_7)\} \notag
\end{align*}

The \HornALCHI ontology $\langle \TBox, \FS \rangle$ entails the fact $\FP{Crit}(c_1)$: indeed, node $c_1$ has a $\FP{Link}$ path to a $\FP{Sens}${}itive node and another such path to a $\FP{Gate}${}way node (see \Cref{figure:example}); moreover, by the second axiom of \TBox, every $\FP{Gate}${}way node has a $\FP{Link}$ to some $\FP{Onl}${}ine node.
Therefore, the ontology also entails $\FP{Trust}(\textit{alice})$.
\end{example}

\newcommand{\XSep}{2.5}
\newcommand{\YSep}{0.8}
\definecolor{NiceBlue}{RGB}{70,130,180} 

\begin{figure}[t]
\centering
\begin{tikzpicture}
\node (alice) [circle, fill, inner sep=1.5pt, label={left:$\color{NiceBlue} \FP{Trust} \color{black}: \textit{alice}$}] at (1*\XSep, 3*\YSep) {};
\node (a2) [circle, fill, inner sep=1.5pt, label={[xshift=12pt]above:$c_1 : \color{NiceBlue} \FP{Crit}$}] at (2*\XSep, 3*\YSep) {};
\node (a3) [circle, fill, inner sep=1.5pt, label={above:$c_2$}] at (3*\XSep, 3*\YSep) {};
\node (a4) [circle, fill, inner sep=1.5pt, label={right:$c_3 : \FP{Sens}$}] at (4*\XSep, 3*\YSep) {};

\node (m3) [circle, fill, inner sep=1.5pt, label={right:$c_4 : \FP{Gate}$}] at (3*\XSep, 2*\YSep) {};

\node (bob) [circle, fill, inner sep=1.5pt, label={left:$\textit{bob}$}] at (1*\XSep, 1*\YSep) {};
\node (b2) [circle, fill, inner sep=1.5pt, label={above:$c_5$}] at (2*\XSep, 1*\YSep) {};
\node (b3) [circle, fill, inner sep=1.5pt, label={above:$c_6$}] at (3*\XSep, 1*\YSep) {};
\node (b4) [circle, fill, inner sep=1.5pt, label={right:$c_7 : \FP{Onl}$}] at (4*\XSep, 1*\YSep) {};

\draw[solid, thick, -{Stealth[scale=1.2]}] (a2) -- node[anchor=center, fill=white] {\FP{Acc}} (alice);
\draw[solid, thick, -{Stealth[scale=1.2]}] (a2) -- node[anchor = center, fill=white] {\FP{Link}} (a3);
\draw[solid, thick, -{Stealth[scale=1.2]}] (a2) -- node[anchor = center, fill=white] {\FP{Link}} (m3);
\draw[solid, thick, -{Stealth[scale=1.2]}] (a3) -- node[anchor = center, fill=white] {\FP{Link}} (a4);

\draw[solid, thick, -{Stealth[scale=1.2]}] (b2) -- node[anchor = center, fill=white] {\FP{Acc}} (bob);
\draw[solid, thick, -{Stealth[scale=1.2]}] (b2) -- node[anchor = center, fill=white] {\FP{Link}} (b3);
\draw[solid, thick, -{Stealth[scale=1.2]}] (b3) -- node[anchor = center, fill=white] {\FP{Link}} (b4);
\draw[solid, thick, -{Stealth[scale=1.2]}] (b2) -- node[anchor = center, fill=white] {\FP{Link}} (m3);

\end{tikzpicture}
\caption{The fact set \FS (depicted in black), together with some of the unary facts entailed by the ontology $\langle \TBox, \FS \rangle$ (depicted in blue), defined in \Cref{example:introduction-ontology}.}
\label{figure:example}
\end{figure}

In the presence of DL axioms, a central reasoning task is to decide whether a given unary fact is entailed; together with the axioms, such a query forms a \emph{rule query}.
A common approach to evaluating a rule query over a database is to rewrite it into a classical query language, typically an FO query, while preserving equivalence, that is, ensuring that the original rule query and its rewriting are entailed by the same fact sets.
This approach requires a target query language that is expressive enough to capture the ontological knowledge encoded in the axioms, but it also allows one to reuse existing database technologies, thereby avoiding the need to develop new evaluation tools.

The recently released ISO standard GQL,\footnote{\url{https://www.iso.org/standard/76120.html}} for graph query languages, extends the expressive power of traditional FO queries (as captured by the SQL standard) by allowing controlled forms of regular recursion.
GQL can express conjunctions of two-way regular path queries and their unions (UC2RPQs), and the evaluation problem for queries at its core has been shown to be NL-complete~\cite{sigmod22:gql,icdt23:digest-of-gql}.
As a new standard, GQL is expected to play a central role in a wide range of applications, making it important to understand which rule queries can be rewritten into this language.
Example~\ref{example:introduction-query} presents an equivalent rewriting into UC2RPQs, and hence into GQL, of a rule query based on Example~\ref{example:introduction-ontology}.

\begin{example}
\label{example:introduction-query}
Consider the \HornALCHI query $\langle \TBox, \FP{Trust}(\textit{alice}) \rangle$, where $\TBox$ is the set of DL axioms described in \Cref{example:introduction-ontology}.
This query is equivalent to the following positive conjunctive regular path query:
\begin{align*}
&q_\textit{alice}^\FP{TU} = \FP{Trust}(\textit{alice}) \vee \exists y . \Big[ \FP{Acc}(y, \textit{alice}) \wedge \FP{Crit}(y)\Big] ~\vee \\
&~~\exists y, z, w . \Big[ \FP{Acc}(y, \textit{alice}) \wedge \FP{Link}^*(y, z) \wedge \FP{Sens}(z) 
\wedge \FP{Link}^*(y, w) \wedge \big(\FP{Onl}(w) \vee \FP{Gate}(w)\big)\Big]
\end{align*}
That is, these two boolean queries induce the same partition over the set of all fact sets.
Put differently, for any given fact set $\FS'$, the DL ontology $\langle \TBox, \FS' \rangle$ entails the fact $\FP{Trust}(\textit{alice})$ if and only if $\FS'$ satisfies the query $q_\textit{alice}^\FP{TU}$.
Consequently, instead of verifying whether $\langle \TBox, \FS' \rangle$ entails $\FP{Trust}(\textit{alice})$ using a dedicated DL reasoner, we may simply check whether $\FS'$ satisfies $q_\textit{alice}^\FP{TU}$ using a database management system.
To conclude the example, observe that the fact set $\FS$ from \Cref{example:introduction-ontology} satisfies $q_\textit{alice}^\FP{TU}$, as expected, since $\langle \TBox, \FS \rangle$ does entail $\FP{Trust}(\textit{alice})$.
Moreover, the rule query $\langle \TBox, \FP{Trust}(\textit{alice}) \rangle$ is not expressible as an FO query, thereby motivating the need for rewriting techniques that go beyond this limited database query language.
\end{example}

\paragraph{Contributions.}
We focus on rule queries with unary facts such as the one in the previous example, with rules expressed in the positive fragment of $\HornALCHI$ DL, defined in \Cref{section:preliminaries}.
We show that these rules can be normalised into Datalog ones, in \Cref{section:horn-alc-to-normalised-horn-alc}.
We then identify a large class of rule queries that can be rewritten into GQL; we proceed in two main steps.
First, in \Cref{section:normalised-horn-alc-to-automata}, we turn our rule queries into equivalent automata of a new form, closely related to the usual tree automata and which we refer to as DL automata.
Second, in \Cref{section:automata-to-acyclic-pqs}, we identify a class of DL automata that can be rewritten into positive two-way regular path queries (PQs), and thus into UC2RPQs which in turn fall within GQL (\Cref{section:ayclicic-c2rpqs-to-gql}).
Our criteria for rewritability relies on the existence of a stratification of the DL automata, which forbids some (but not all!) cyclic dependencies within its transition function.
Interestingly, rule queries based upon the DL-Lite family trivially enjoy stratified automata, and thus the class we pinpoint extends DL-Lite. 
Additionally, our class guarantees the rewritability of rule queries that were not covered by prior attempts at rewriting into GQL \cite{eswc25-rew-quasilinear}, such as the one from \Cref{example:introduction-query}.
\Cref{section:related-work} elaborates further on related work; we discuss our conclusions and future work in \Cref{section:work-conclusions-future}.
This extended version of a paper to appear at ISWC 2026 contains proofs for all results in the appendix.

\section{Preliminaries}
\label{section:preliminaries}

We assume that the reader is familiar with the syntax and semantics of first-order~(FO) logic, as well as with basic notions from automata theory.

\paragraph{Existential Rules and FO Facts.}
We fix pairwise disjoint, countably infinite sets \Classes, \Roles, \Constants, and \Variables of unary predicates, binary predicates, constants, and variables respectively.
We often refer to unary and binary predicates as \emph{classes} and \emph{roles}, respectively.
An \emph{inverse role} is an element of $\InvRoles = \{\FP{R}^- \mid \FP{R} \in \Roles\}$.
A \emph{role expression} is an element of $\RoleExpressions = \Roles \cup \InvRoles$.
A \emph{term} is an element of $\Terms = \Constants \cup \Variables$.
Lists of terms $t_1, \ldots, t_n$ are written as $\vec{t}$ and are often treated as sets.

A \emph{class atom} is a formula of the form $\FP{C}(t)$ with $\FP{C} \in \Classes$ and $t \in \Terms$.
A \emph{role atom} is a formula of the form $\FP{R}(t, u)$ with $\FP{R} \in \Roles$ and $t, u \in \Terms$.
A \emph{FO atom} is either a class atom or a role atom.
\emph{Class facts}, \emph{role facts}, and \emph{FO facts} are variable-free class atoms, role atoms, and FO atoms, respectively.

For an FO formula \Formula and a list $\vec{x}$ of variables, we write $\Formula[\vec{x}]$ to indicate that~$\vec{x}$ is the set of all free variables in \Formula.
An \emph{(existential) rule} \Rule is a FO sentence of the form $\forall \vec{x}, \vec{y} . \big(\Body[\vec{x}] \rightarrow \exists \vec{z} . \Head[\vec{y}, \vec{z}]\big)$ where $\vec{x}$, $\vec{y}$, and~$\vec{z}$ are lists of variables such that $(\vec{x} \cup \vec{y}) \cap \vec{z} = \emptyset$, and \Body and \Head are conjunctions of FO atoms with $\Head \neq \emptyset$.

The rule \Rule is \emph{Datalog} if $\vec{z}$ is the empty list.
We refer to \Body and $\exists \vec{z} . \Head$ as the \emph{body} and the \emph{head} of \Rule, respectively.
When discussing rules, we omit universal quantifiers and often identify conjunctions of atoms such as \Body and~\Head above with the corresponding set.
We allow rules to contain \emph{unsafe variables}--universally quantified variables that appear in the head but not in the body of the rule.
A \emph{knowledge base} (KB) is a pair consisting of a finite rule set and a finite fact set.

\paragraph{Description Logics.}
A \emph{(positive) \HornALCHI rule} is one of the following rules:
\begin{align}
\FP{A}_1(x) \wedge \ldots \wedge \FP{A}_n(x) &\to \FP{B}(x) & \FP{A}_1 \sqcap \ldots \sqcap \FP{A}_n  &\sqsubseteq \FP{B} \label{rule:conjunction} \\
\FP{R}(x, y) \wedge \FP{A}(y) &\to \FP{B}(x) & \exists \FP{R}.\FP{A} &\sqsubseteq \FP{B} \label{rule:lhs-existential} \\
\FP{A}(x) \wedge \FP{R}(x, y) &\to \FP{B}(y) & \FP{A} &\sqsubseteq \forall \FP{R}.\FP{B} \label{rule:universal} \\
&\to \FP{B}(x) & \top &\sqsubseteq \FP{B} \label{rule:top} \\
\FP{A}(x)&\to \exists y . \FP{R}(x, y) \wedge \FP{B}(y) & \FP{A} &\sqsubseteq \exists \FP{R}.\FP{B} \label{rule:rhs-existential} \\
\FP{R}(x, y) &\to \FP{S}(x, y) & \FP{R} &\sqsubseteq \FP{S} \label{rule:subrole} \\
\FP{R}(y, x)&\to \FP{S}(x, y) & \FP{R}^- &\sqsubseteq \FP{S} \label{rule:inverse}
\end{align}
Here, $\FP{A}$, $\FP{B}$, and $\FP{C}$ are classes, and $\FP{R}$ and $\FP{S}$ are roles.
We restrict ourselves to the positive fragment of \HornALCHI, in which the bottom class~$\bot$ is not allowed; in \Cref{section:work-conclusions-future}, we discuss how our rewriting techniques could be adapted to handle this class.
We consider \HornALCHI rules in a restricted normal form, which is without loss of generality \cite{normal-form-horn-alc}.
The right-hand side shows the corresponding Description Logic syntax, included for reference only and not used further in this paper.
We adopt the rule-based presentation because rules are FO logic formulas and are therefore accessible to a wider audience, and because this syntax allows us to define fact entailment for \HornALCHI naturally via the well-known chase procedure, which we briefly present below.
We consider entailment under active domain semantics as discussed later, and hence rules of type~\eqref{rule:top} are not domain-dependent.
A \emph{\HornALCHI KB} is a KB whose rule set is \HornALCHI.

\paragraph{Query Languages.}
In this paper, a \emph{rule query} is a pair consisting of a finite rule set and a class fact.
A \emph{Datalog query} is a rule query whose rule set is Datalog.
A \emph{\HornALCHI query} is a rule query whose rule set is \HornALCHI.

A \emph{regular expression} is an expression generated by the grammar $$\RE ::= \EmptyTransition \mid \FP{R} \mid \FP{R}^- \mid \big(\RE \cup \RE\big) \mid \big(\RE \cdot \RE\big) \mid \big(\RE\big)^*,$$ where $\FP{R} \in \Roles$.
For a regular expression~$\RE$, let $\Language(\RE)$ denote the set of all words over role expressions generated by~$\RE$.
A \emph{RE atom} is a formula of the form~$\RE(t, u)$ with \RE a regular expression and $t, u \in \Terms$.
A \emph{RE fact} is a variable-free RE atom.
A \emph{positive two-way regular path query} (PQ) is a formula without occurrences of free variables generated by the grammar
$$\PCTRPQ ::= \top \mid \FP{A}(t) \mid \RE(t, u) \mid \big(\PCTRPQ \wedge \PCTRPQ \big) \mid \big(\PCTRPQ \vee \PCTRPQ \big) \mid \exists y . \PCTRPQ,$$
where $\FP{A}(t)$ is a class atom, $\RE(t, u)$ is a RE atom, and $y$ is a variable.
Note that, under our definition, all PQs are Boolean queries since every occurrence of a variable in a PQ lies within the scope of an existential quantifier.
Without loss of generality, we assume that every variable is quantified exactly once in a PQ.
 
\paragraph{Semantics.}
Consider some FO sentence \Formula.
Given another FO sentence \FormulaAux, we write $\Formula \models \FormulaAux$ to denote entailment under under active domain semantics, that is, standard FO semantics with quantifiers ranging over the active domain.
Given a finite set $\SS$ of FO sentences, we write~$\SS \models \Formula$ to denote $\bigwedge_{\FormulaAux \in \SS} \FormulaAux \models \Formula$.
Given a KB $\langle \RS, \FS \rangle$, we write $\langle \RS, \FS \rangle \models \Formula$ to denote $\RS \cup \FS \models \Formula$.

A \emph{substitution} is a function mapping variables to terms.
For a substitution \Subs and a FO formula $\Formula[\vec{x}]$, let $\Subs(\Formula)$ denote the formula obtained from \Formula by replacing all occurrences of every variable $x \in \Domain(\Subs) \cap \vec{x}$ with $\Subs(x)$.

Consider a fact set \FS.
We define $\Invert(\FS) = \{\FP{R}^-(c, d) \mid \FP{R}(d, c) \in \FS\}$.
A (possibly empty) word over role expressions $\FP{R_1}\ldots \FP{R_n}$ is a \emph{path} from a constant~$c$ to a constant $d$ in \FS if both $c$ and $d$ occur in \FS, and there exists a list of constants $e_0, \ldots, e_n$ such that $e_0 = c$, $e_n = d$, and $\FP{R_i}(e_{i-1}, e_i) \in \FS \cup \Invert(\FS)$ for every $1 \leq i \leq n$.
In particular, for $n = 0$, the \emph{empty word} \EmptyWord is a path from any constant occuring in \FS to itself.
For some constants $c$ and $d$ occurring \FS, we write $\Paths{\FS}(c, d)$ for the set of all paths from $c$ to $d$ in \FS.
Entailment of a PQ \PCTRPQ by \FS, written~$\FS \models \PCTRPQ$, is defined inductively as follows:
\begin{itemize}
\item If $\PCTRPQ = \top$, then $\FS \models \PCTRPQ$.
\item If \PCTRPQ is a class fact, then $\FS \models \PCTRPQ$ if $\PCTRPQ \in \FS$.
\item If \PCTRPQ is a RE fact $\RE(c, d)$, then $\FS \models \PCTRPQ$ if $\Paths{\FS}(c, 
d) \cap \Language(\RE) \neq \emptyset$.
\item If \PCTRPQ is of the form $\big(\PCTRPQ_1 \wedge \PCTRPQ_2\big)$, then $\FS \models \PCTRPQ$ if $\FS \models \PCTRPQ_1$ and $\FS \models \PCTRPQ_2$.
\item If \PCTRPQ is of the form $\big(\PCTRPQ_1 \vee \PCTRPQ_2\big)$, then $\FS \models \PCTRPQ$ if $\FS \models \PCTRPQ_1$ or $\FS \models \PCTRPQ_2$.
\item If \PCTRPQ is of the form $\exists x . \PCTRPQ'$, then $\FS \models \PCTRPQ$ if $\FS \models \Subs(\PCTRPQ')$ for some substitution~\Subs mapping $x$ to some constant in \FS.
\end{itemize}
Since PQs contain no free variables, defining $\FS \models \Fact$ for any ground atom \Fact suffices to fully specify the semantics of entailment.

For a rule query $\langle \RS, \Fact \rangle$, we write~$\QueryMapping(\langle \RS, \Fact \rangle)$ or $\QueryMapping(\RS, \Fact)$ to refer to the set of all finite fact sets $\FS$ such that $\langle \RS, \FS \rangle \models \Fact$.
For a PQ $\PCTRPQ$, let $\QueryMapping(\PCTRPQ)$ be the set of all finite fact sets that entail $\PCTRPQ$.
Here, \QueryMapping stands for ``query mapping'', since it associates every rule query or PQ with its semantic extension, namely the set of finite fact sets that satisfy it.
A rule query or PQ $\Query$ is \emph{contained} in another such query~$\QueryAux$ if $\QueryMapping(\Query) \subseteq \QueryMapping(\QueryAux)$; these are \emph{equivalent} if $\QueryMapping(\Query) = \QueryMapping(\QueryAux)$.

\paragraph{The Chase.}
The chase algorithm can be used to decide fact entailment and is defined here only for Datalog, since \HornALCHI rules with existentially quantified variables can be normalised away, as explained in the following section.
A \emph{trigger}~\Trigger is a pair $\langle \Rule, \Substitution \rangle$ consisting of a Datalog rule $\Rule = \Body \to \Head$ and a substitution \Subs that is defined exactly on the set of all variables occurring in~\Rule.
The trigger \Trigger is \emph{applicable} to a fact set \FS if $\Subs(\Body) \subseteq \FS$ and all constants in the range of $\Subs$ occur in \FS.
We define $\FF{Output}(\Trigger) = \Subs(\Head)$ and, for a fact set \FS, we define $\FF{App}_\Rule(\FS)$ as the set including $\FF{Output}(\Trigger)$ for every trigger \Trigger with \Rule that is applicable to \FS.
For a Datalog rule set $\RS$ and a fact set $\FS$, let $\FF{App}_\RS(\FS) = \bigcup_{\Rule \in \RS}\FF{App}_\Rule(\FS)$.
For a Datalog KB $\KB = \langle \RS, \FS \rangle$, let $\FF{Ch}_1(\KB), \FF{Ch}_2(\KB), \ldots$ be the sequence of fact sets such that \FirstItem $\FF{Ch}_1(\KB) = \FS$, and \SecondItem $\FF{Ch}_i(\KB) = \FF{Ch}_{i-1}(\KB) \cup \FF{App}_\RS(\FF{Ch}_{i-1}(\KB))$ for every $i \geq 2$.
We define the \emph{chase} of \KB, written~$\FF{Ch}(\KB)$, as the union of all fact sets in this sequence.
By our definition of applicable triggers, the chase is always finite, and it yields a set of facts that can be used to decide fact entailment under active domain semantics \cite{foundations-of-databases-95}.

\begin{proposition}
\label{propoosition:chase-correctness}
A Datalog KB \KB entails a fact \Fact if and only if $\Fact \in \FF{Ch}(\KB)$.
\end{proposition}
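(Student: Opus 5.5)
We prove the two directions separately, viewing $\FF{Ch}(\KB)$ for $\KB = \langle \RS, \FS \rangle$ as a first-order interpretation $\Imc$ whose domain is the set of constants occurring in $\FS$. Note first that no chase step introduces a new constant: a trigger is applicable only if all constants in the range of its substitution occur in the current fact set. This covers unsafe head variables, which are mapped to existing constants. Hence every $\FF{Ch}_i(\KB)$ has the same active domain as $\FS$. The sequence $\FF{Ch}_1(\KB) \subseteq \FF{Ch}_2(\KB) \subseteq \ldots$ is increasing. All its members are subsets of the finite set of facts over the finitely many predicates in $\KB$ and the constants in $\FS$, so it stabilises at some index $k$, and $\FF{Ch}(\KB) = \FF{Ch}_k(\KB)$.

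\emph{Soundness (if).} We show by induction on $i$ that $\KB \models \Fact$ for every $\Fact \in \FF{Ch}_i(\KB)$. The base case holds because $\FF{Ch}_1(\KB) = \FS$. For the step, take a fact $\Fact \in \FF{App}_\RS(\FF{Ch}_{i-1}(\KB))$. It is produced by some applicable trigger $\langle \Rule, \Subs \rangle$ with $\Rule = \Body \to \Head$ and $\Subs(\Body) \subseteq \FF{Ch}_{i-1}(\KB)$. By the induction hypothesis, every model of $\KB$ under active domain semantics satisfies $\Subs(\Body)$. Since $\Rule$ is universally quantified over the active domain and $\Subs$ maps into the active domain, every such model also satisfies $\Subs(\Head)$, and therefore $\Fact$.

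\emph{Completeness (only if).} We argue that $\Imc$ is a model of $\KB$ under active domain semantics. It contains $\FS$. Suppose some rule $\Rule \in \RS$ were violated in $\Imc$. Then there would be an assignment $\Subs$ of all variables of $\Rule$ to active domain constants with $\Subs(\Body) \subseteq \FF{Ch}_k(\KB)$ and $\Subs(\Head) \not\subseteq \FF{Ch}_k(\KB)$. Here all variables includes the unsafe ones, which universal quantification over the active domain ranges over. The trigger $\langle \Rule, \Subs \rangle$ would then be applicable to $\FF{Ch}_k(\KB)$, so $\Subs(\Head) \subseteq \FF{Ch}_{k+1}(\KB) = \FF{Ch}_k(\KB)$, a contradiction. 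Thus $\Imc \models \KB$. If $\KB \models \Fact$, then $\Imc \models \Fact$, and since $\Imc$ is exactly the set of facts true in it, $\Fact \in \FF{Ch}(\KB)$.

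The main subtlety is the interaction with active domain semantics. First, in the completeness direction the model must have the active domain as its domain, so unsafe variables and rules of type~\eqref{rule:top} are evaluated only over constants of $\FS$; this matches the trigger definition. Second, if $\Fact$ mentions a constant not occurring in $\FS$, then $\Imc$ does not satisfy it, so both sides are false. No further obstacles arise; the statement is the standard least-model characterisation of Datalog \cite{foundations-of-databases-95}.
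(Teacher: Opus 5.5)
Your proposal is correct. The paper gives no proof of its own and simply defers to the standard least-model characterisation of Datalog \cite{foundations-of-databases-95}. Your argument is exactly that standard proof adapted to the paper's active-domain triggers: soundness by induction along the chase sequence, and completeness by showing that the stabilised chase, read as a structure over the constants of $\FS$, is a model of the KB. Your ``no new constants'' step tacitly assumes that rules contain no constants, which holds for every rule set considered in the paper.
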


For a Datalog KB \KB, the \emph{chase depth} of a fact $\Fact \in \FF{Ch}(\KB)$ with respect to~\KB is the least $k \geq 1$ such that $\Fact \in \FF{Ch}_{k}(\KB)$.

\section{From \HornALCHI to \HornALCDat Queries}
\label{section:horn-alc-to-normalised-horn-alc}

This section briefly recalls how to normalize away axioms of the form \eqref{rule:rhs-existential}, \eqref{rule:subrole}, and \eqref{rule:inverse} from a \HornALCHI rule set.

\begin{definition}
A \emph{\HornALCDat rule} is a rule of the form \eqref{rule:conjunction}, \eqref{rule:lhs-existential}, \eqref{rule:universal}, or \eqref{rule:top}.
\HornALCDat queries and KBs are defined in the obvious way.
\end{definition}

To reduce to such rules, we proceed in two main steps.
Assume we have a \HornALCHI rule set $\RS$.
In the first step, we get rid of axioms of the forms \eqref{rule:subrole} and \eqref{rule:inverse}.
Let $\roleclosure$ be the reflexive and transitive closure of the relation $\sqsubseteq$ induced by $\RS$ on $\RoleExpressions$.
Let $\RS'$ be a copy of $\RS$ from which we remove all axioms of the forms \eqref{rule:subrole} and \eqref{rule:inverse} but add, for every $\FP{R}, \FP{S} \in \Roles$ such that $\FP{R} \roleclosure \FP{S}$ or $\FP{R}^{-} \roleclosure \FP{S}^{-}$, the axioms:
\begin{itemize}
    \item $\FP{A}(x) \wedge \FP{R}(x, y) \to \FP{B}(y)$ for every axiom $\FP{A}(x) \wedge \FP{S}(x, y) \to \FP{B}(y)$ in $\RS$; and
    \item $\FP{R}(x, y) \wedge \FP{A}(y) \to \FP{B}(x)$ for every axiom $\FP{S}(x, y) \wedge \FP{A}(y) \to \FP{B}(x)$ in $\RS$;
\end{itemize}
and, for every $\FP{R}, \FP{S} \in \Roles$ such that $\FP{R}^- \roleclosure \FP{S}$ or $\FP{R} \roleclosure \FP{S}^{-}$, the axioms:
\begin{itemize}
    \item $\FP{R}(x, y) \wedge \FP{A}(y) \to \FP{B}(x)$ for every axiom $\FP{A}(x) \wedge \FP{S}(x, y) \to \FP{B}(y)$ in $\RS$; and
    \item $\FP{A}(x) \wedge \FP{R}(x, y) \to \FP{B}(y)$ for every axiom $\FP{S}(x, y) \wedge \FP{A}(y) \to \FP{B}(x)$ in $\RS$.
\end{itemize}
It is then routine to verify that for every class fact $\Fact$, the queries $\langle \RS, \Fact \rangle$ and $\langle \RS', \Fact \rangle$ are equivalent.
We here strongly rely on the queries being based upon class facts, as this procedure would fail already for those based on role facts.

In the second step, we get rid of axioms of the form \eqref{rule:rhs-existential} by applying the translation to Datalog rules proposed in \cite[Tables~2 and 3]{normal-form-horn-shiq} for the more expressive \HornSHIQ.
This reduction to their procedure is possible as our $\RS'$ is already compliant with their notion of a \HornSHIQ rule set.
We then remark that, given that we only input axioms of the forms \eqref{rule:conjunction}--\eqref{rule:rhs-existential} (or (F1), (F2), (F3) in the reference), their translation only outputs axioms of the form \eqref{rule:conjunction}, \eqref{rule:lhs-existential}, \eqref{rule:universal}, or \eqref{rule:top}, as desired.
In summary, we obtain the following:

\begin{theorem}\label{theorem:normalization-is-correct}
There is a computable function mapping every \HornALCHI query to an equivalent \HornALCDat query.
\end{theorem}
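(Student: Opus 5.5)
The proof composes two translations and shows that each preserves the query mapping \QueryMapping. The composite is computable because each step is: $\roleclosure$ is the reflexive--transitive closure of a finite relation on the finitely many role expressions occurring in \RS, and the \HornSHIQ translation of \cite{normal-form-horn-shiq} is an explicit saturation procedure.

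\textbf{Step 1 (role axioms).} Given a \HornALCHI query $\langle \RS, \Fact \rangle$, build $\RS'$ as described above. We must show that, for every finite fact set \FS and class fact \Fact, we have $\langle \RS, \FS\rangle \models \Fact$ iff $\langle \RS', \FS\rangle \models \Fact$. The direction from right to left is soundness. Every added rule follows from \RS. For instance, if $\FP{R}^- \roleclosure \FP{S}$ and $\FP{A}(x)\wedge\FP{S}(x,y)\to\FP{B}(y) \in \RS$, then $\FP{R}(x,y)$ entails $\FP{S}(y,x)$, and this gives $\FP{R}(x,y)\wedge\FP{A}(y)\to\FP{B}(x)$. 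For the converse, we take a model \Jmc of $\RS' \cup \FS$ and turn it into a model of $\RS\cup\FS$ with the same class extensions. We do this by closing the role extensions under $\roleclosure$: set $\FP{S}^{\Jmc'}$ to be the union of all $\FP{R}^\Jmc$ with $\FP{R}\roleclosure\FP{S}$, together with the inverses of $\FP{R}^\Jmc$ for $\FP{R}^-\roleclosure\FP{S}$. Here we use the fact that the closure is taken over role expressions and is compatible with inversion. Rules \eqref{rule:subrole} and \eqref{rule:inverse} then hold by construction. Rules \eqref{rule:lhs-existential} and \eqref{rule:universal} hold in $\Jmc'$ because every new role edge originates from an edge in \Jmc on which $\RS'$ already contains the corresponding rewritten rule. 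Rules \eqref{rule:rhs-existential} hold because \Jmc already provides the required $\FP{R}$-successor. Class extensions are unchanged, so $\Jmc'$ refutes \Fact whenever \Jmc does. This step relies on \Fact being a class fact, as the text notes.

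\textbf{Step 2 (existential axioms).} $\RS'$ contains only rules of forms \eqref{rule:conjunction}--\eqref{rule:rhs-existential}. We apply the translation of \cite[Tables~2 and~3]{normal-form-horn-shiq} and invoke its correctness theorem, which states that the resulting Datalog program entails exactly the same facts over the original signature on every fact set. It remains to check syntactically that, on inputs of forms (F1)--(F3), every output rule is of the form \eqref{rule:conjunction}, \eqref{rule:lhs-existential}, \eqref{rule:universal}, or \eqref{rule:top}. This holds because the translation only produces conjunctive class inclusions and $\forall$/$\exists$-restrictions with atomic fillers, and adds no role inclusions since $\RS'$ has none. Composing the two steps gives equivalence of $\langle\RS,\Fact\rangle$ with the resulting \HornALCDat query.

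\textbf{Main obstacle.} The main difficulty is matching semantics in Step 2. The cited correctness result is stated for standard FO semantics and for consistency or instance checking, whereas here entailment is under active-domain semantics, and rule \eqref{rule:top} and unsafe variables are allowed. I would handle this by noting two things. First, for Horn KBs without $\bot$, atomic fact entailment coincides with membership in the (possibly infinite) chase or universal model. Second, facts over constants in that universal model are the same under both semantics, because a top rule \eqref{rule:top} ranging over the active domain only adds $\FP{B}$ to constants, and these are exactly the constants the query inspects. If any mismatch remains, I would fall back on re-verifying the cited translation directly. Soundness is immediate. For completeness, I would argue by induction on the chase depth of the original \HornALCHI KB, showing that each class fact on a constant is derived by the saturated Datalog rules.
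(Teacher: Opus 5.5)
Your proposal is correct and follows the paper's proof essentially step for step. You use the same model construction, closing role extensions under $\roleclosure$ (including inverses), to show that $\RS'$ loses no class entailments, then apply the \HornSHIQ Datalog translation of \cite{normal-form-horn-shiq} with a syntactic check that its output contains only rules of forms \eqref{rule:conjunction}--\eqref{rule:top}. The paper does not discuss the active-domain subtlety at all: it reasons with ordinary FO models and cites the reference's Theorem~3 directly. If you keep that paragraph, your argument about rule \eqref{rule:top} holds only for the Datalog output, because under standard semantics a top rule in the original KB also applies to the anonymous witnesses of rule \eqref{rule:rhs-existential}, and these can affect facts about constants through rules of form \eqref{rule:lhs-existential}.
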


\begin{example}
\label{example:normalise-away-existentials}
Consider the set of \HornALCDat rules $\TBox_\textit{NF}$ obtained from the set $\TBox$ from \Cref{example:introduction-ontology} by removing $\FP{Gate}(x) \to \exists y . \FP{Link}(x, y) \wedge \FP{Onl}(y)$ and adding $\FP{Gate}(x) \to \FP{Onl}(x)$.
Furthermore, observe that the DL queries $\langle \TBox, \FP{Trust}(\textit{alice}) \rangle$ and $\langle \TBox_\textit{NF}, \FP{Trust}(\textit{alice}) \rangle$ are equivalent.
\end{example}
We observe that, in general, the resulting \HornALCDat query has an exponential size compared to the original \HornALCHI query.
This blow-up arises in the second step and cannot be avoided and already occurs in the part of the translation borrowed from \cite{normal-form-horn-shiq}: when removing even a single axiom of form \eqref{rule:rhs-existential}, one may need exponentially-many axioms of form \eqref{rule:conjunction} to preserve equivalence of the rule queries.
Such pathological cases can be extracted from proofs of the ExpTime-hardness of basic reasoning tasks already in \HornALC, see \emph{e.g.}\ \cite{normal-form-horn-alc}.

\section{From \HornALCDat Queries to DL Automata}
\label{section:normalised-horn-alc-to-automata}

We introduce DL automata as finite-state devices closely related to regular tree automata, tailored to capture the expressive power of \HornALCDat queries.
We first formally introduce these in \Cref{definition:DL automaton} and then explain their semantics in \Cref{definition-run-automaton}.
Next, in \Cref{definition:reduction-dl-query-to-DL automaton}, we present a reduction from \HornALCDat queries to DL automata and show that this translation preserves the semantics of the input queries.
Finally, in \Cref{remark:DL automaton-properties}, we state some properties of the automata produced by this translation, which are applied in the following section to develop an equivalence-preserving reduction from a class of DL automata to PQs.

\begin{definition}
\label{definition:DL automaton}
We distinguish a special \emph{accepting state} \AcceptingState.
A \emph{binary transition} is an expression of the form \Transition{q}{\FP{S}}{Q}, where $q$ is a state, $\FP{S} \in \{\EmptyTransition\} \cup \Classes \cup \RoleExpressions$, and $Q$ is a set of states with $\vert Q \vert = 1$.
A \emph{multi-ary transition} is an expression of the form \Transition{q}{\EmptyTransition}{Q}, where $q$ is a state and $Q$ is a set of states with~$\vert Q \vert \geq 2$.
A \emph{transition} is either a binary transition or a multi-ary transition.
A \emph{DL automaton} is a tuple~$\langle s, Q, \InitialState, \TF \rangle$, where $s$ is a constant, $Q$ is a finite set of states, $\InitialState$ is a state, and $\TF$ is a \emph{transition function} over $Q$, that is, a finite set of transitions featuring only states in $Q$.
\end{definition}

Every DL automaton defines a boolean query and thereby partitions the set of all fact sets into those it accepts and those it rejects.
As for rule queries and PQs, this partition is formalized by the ``query mapping'' function $\QueryMapping$, defined below.
To characterize the accepted fact sets, we first introduce the notion of a run of a DL automaton, analogous to a run of a tree automaton on a tree.



\begin{definition}
\label{definition-run-automaton}
A \emph{run} of a DL automaton $\langle s, Q, \InitialState, \TF \rangle$ over a fact set $\FS$ from~$\langle p, e \rangle$, where~$p$ is a state and $e$ is a constant, is a triple $\langle N, E, L \rangle$ such that:\begin{itemize}
\item The pair $\langle N, E \rangle$ is a finite rooted directed tree. 
Moreover, $L$ is a labeling function that maps every node to a pair consisting of a state and a constant, and maps the root to $\langle p, e \rangle$.
\item For every node $n$ with $L(n) = \langle q, c \rangle$ that has exactly one child $m$ in the tree, there is a binary transition $\Transition{q}{\FP{S}}{\{q'\}} \in \TF$ satisfying all of the following implications:
\FirstItem If $\FP{S} = \EmptyTransition$, then $L(m) = \langle q', c \rangle$.
\SecondItem If $\FP{S}$ is a class, then $\FP{S}(c) \in \FS$ and $L(m) = \langle q', c \rangle$.
\ThirdItem If $\FP{S}$ is a role expression, then there is some constant $d$ such that $\FP{S}(c, d) \in \FS \cup \Invert(\FS)$ and $L(m) = \langle q', d \rangle$.
\item For every node $n$ with $L(n) = \langle q, c \rangle$ with at least two children $m_1, \ldots, m_k$, there is a multi-ary transition $\Transition{q}{\EmptyTransition}{\{q_1, \ldots, q_k\}} \in \TF$ such that $L(m_i) = \langle q_i, c \rangle$ for every $1 \leq i \leq k$.
\end{itemize}
Such a run is \emph{accepting} if all leaf labels in the tree feature the accepting state, and the constant $s$ occurs in \FS.

A DL automaton  $\DLAutomaton = \langle s, \InitialState,  Q, \TF \rangle$ \emph{accepts} a fact set \FS if there is an accepting run of \DLAut over \FS from $\langle \InitialState, s\rangle$.
Furthemore, let $\QueryMapping(\DLAutomaton)$ be the set of all fact sets accepted by \DLAutomaton.
Such an automaton is \emph{contained} in a rule query or a PQ~$\Query$ if $\QueryMapping(\DLAutomaton) \subseteq \QueryMapping(\Query)$; these two are \emph{equivalent} if $\QueryMapping(\DLAutomaton) = \QueryMapping(\Query)$.
\end{definition}

\renewcommand{\XSep}{1.2}
\renewcommand{\YSep}{0.35}

\begin{figure}[t]
\begin{tikzpicture}
\node[state] (qTrusted) at  (0*\XSep, 10*\YSep) {$q_\FP{Trust}$};
\node[state, accepting] (qAccept) at  (2*\XSep, 10*\YSep) {\AcceptingState};
\node[state] (qGate) at  (4*\XSep, 10*\YSep) {$q_\FP{Gate}$};
\node[state] (qSens) at  (2*\XSep, 3*\YSep) {$q_\FP{Sens}$};
\node[state] (qCrit) at  (0*\XSep, 0*\YSep) {$q_\FP{Crit}$};
\node[circle, fill, inner sep=0pt] (int) at (2*\XSep, 0*\YSep) {};
\node[state] (qOnl) at  (4*\XSep, 0*\YSep) {$q_\FP{Onl}$};

\path[->] (qSens) edge[out=225,in=180,looseness=8] node[rounded corners, draw, pos=0.5, fill=white, inner sep=2pt, yshift=3pt] {\FP{Link}} (qSens);

\path[->] (qOnl) edge[out=160,in=135,looseness=22] node[rounded corners, draw, pos=0.5, fill=white, inner sep=2pt, yshift=-8pt] {\FP{Link}} (qOnl);

\draw[->] ($(qTrusted.west)+(-\XSep*0.2,0)$) -- (qTrusted);
\draw[->] (qTrusted) -- (qAccept);
\node[rounded corners, draw, fill=white, inner sep=2pt] at ($(qTrusted)!0.5!(qAccept)$) {$\FP{Trust}$};

\draw[->] (qTrusted) -- (qCrit);
\node[rounded corners, draw, fill=white, inner sep=2pt] at ($(qTrusted)!0.5!(qCrit)$) {$\FP{Acc}^-$};

\path[-] (qCrit) edge (int);
\path[->] (int) edge (qSens);
\path[->] (int) edge (qOnl);
\node[rounded corners, draw, fill=white, inner sep=2pt] at ($(qCrit)!0.5!(qOnl)$) {\EmptyTransition};
\draw[->] (qCrit) -- (qAccept);
\node[rounded corners, draw, fill=white, inner sep=2pt] at ($(qCrit)!0.5!(qAccept)$) {\FP{Crit}};

\draw[->] (qSens) -- (qAccept);
\node[rounded corners, draw, fill=white, inner sep=2pt] at ($(qSens)!0.5!(qAccept)$) {\FP{Sens}};

\draw[->] (qOnl) -- (qGate);
\node[rounded corners, draw, fill=white, inner sep=2pt] at ($(qOnl)!0.5!(qGate)$) {\EmptyTransition};
\draw[->] (qOnl) -- (qAccept);
\node[rounded corners, draw, fill=white, inner sep=2pt] at ($(qOnl)!0.5!(qAccept)$) {\FP{Onl}};

\draw[->] (qGate) -- (qAccept);
\node[rounded corners, draw, fill=white, inner sep=2pt] at ($(qGate)!0.5!(qAccept)$) {\FP{Gate}};

\node[circle, fill, inner sep=1.2pt, label=right:{$n_1 : \langle q_\FP{Trust}, \textit{alice} \rangle $}] (n1) at (6.75*\XSep, 10*\YSep) {};
\node[circle, fill, inner sep=1.2pt, label=right:{$n_2 : \langle q_\FP{Crit}, c_1 \rangle $}] (n2) at (6.75*\XSep, 8*\YSep) {};
\node[circle, fill, inner sep=1.2pt, label=right:{$n_3^r : \langle q_\FP{Sens}, c_1 \rangle $}] (n3l) at (7.75*\XSep, 6*\YSep) {};
\node[circle, fill, inner sep=1.2pt, label=right:{$n_4^r : \langle q_\FP{Sens}, c_2 \rangle $}] (n4l) at (7.75*\XSep, 4*\YSep) {};
\node[circle, fill, inner sep=1.2pt, label=right:{$n_5^r : \langle q_\FP{Sens}, c_3 \rangle $}] (n5l) at (7.75*\XSep, 2*\YSep) {};
\node[circle, fill, inner sep=1.2pt, label=right:{$n_6^r : \langle \AcceptingState, c_3 \rangle $}] (n6l) at (7.75*\XSep, 0*\YSep) {};
\node[circle, fill, inner sep=1.2pt, label={[below right]{$n_3^\ell : \langle q_\FP{Onl}, c_1 \rangle$}}] (n3r) at (5.75*\XSep, 6*\YSep) {};
\node[circle, fill, inner sep=1.2pt, label=right:{$n_4^\ell : \langle q_\FP{Onl}, c_4 \rangle $}] (n4r) at (5.75*\XSep, 4*\YSep) {};
\node[circle, fill, inner sep=1.2pt, label=right:{$n_5^\ell : \langle q_\FP{Gate}, c_4 \rangle $}] (n5r) at (5.75*\XSep, 2*\YSep) {};
\node[circle, fill, inner sep=1.2pt, label=right:{$n_6^\ell : \langle \AcceptingState, c_4 \rangle $}] (n6r) at (5.75*\XSep, 0*\YSep) {};

\draw[->] (n1) -- (n2);
\draw[->] (n2) -- (n3l);
\draw[->] (n2) -- (n3r);
\draw[->] (n3l) -- (n4l);
\draw[->] (n3r) -- (n4r);
\draw[->] (n4l) -- (n5l);
\draw[->] (n4r) -- (n5r);
\draw[->] (n5l) -- (n6l);
\draw[->] (n5r) -- (n6r);

\end{tikzpicture}
\caption{Consider the DL automaton $\langle \textit{alice}, Q, q_\FP{Trust}, \TF \rangle$, where \TF is the transition function over $Q$ depicted on the left, and the only accepting run of this automaton from~$\langle q_\FP{Trust}, \textit{alice} \rangle$ over the fact set \FS introduced in \Cref{example:introduction-ontology} depicted on the right.}
\label{figure:example-automata-accepting-run}
\end{figure}

\begin{example}
In \Cref{figure:example-automata-accepting-run}, we present a DL automaton together with one of its accepting runs over the fact set \FS introduced in \Cref{example:introduction-ontology}.
We will later argue that this automaton is equivalent to the \HornALCDat query $\langle \TBox_\textit{NF}, \FP{Trust}(\textit{alice}) \rangle$ introduced in \Cref{example:normalise-away-existentials}; for now, the figure should help the reader understand why this automaton accepts this fact set.
\end{example}

In the following definition, we present a reduction from \HornALCDat queries to DL automata that preserves equivalence (cf. \Cref{theorem:correspondence-normalised-horn-alc-automata}).

\begin{definition}
\label{definition:reduction-dl-query-to-DL automaton}
For every $\FP{C} \in \Classes$, let $q_\FP{C}$ be a fresh state unique for $\FP{C}$.
For a $\HornALCDat$ query $\Query = \langle \RS, \FP{D}(s) \rangle$, let $\ToDLA(\Query) = \langle s, Q, q_\FP{D}, \TF \rangle$ denote the DL automaton where $Q$ is the set of states containing \AcceptingState and $q_\FP{C}$ for every class $\FP{C}$ occurring in \Query, and \TF is the transition function containing all of the following:
\begin{itemize}
\item For every class $\FP{C}$ occurring in $\Query$, add $\Transition{q_\FP{C}}{\FP{C}}{\{\AcceptingState\}} \in \TF$.
\item For every rule such as $\bigwedge^n_{i = 1} \FP{A}_i(x) \to \FP{B}(x) \in \RS$, add $\Transition{q_\FP{B}}{\EmptyTransition}{\{q_\FP{A_1}, \ldots, q_\FP{A_n}\}} \in \TF$.
\item For every rule such as $\FP{R}(x, y) \wedge \FP{A}(y) \to \FP{B}(x) \in \RS$, add $\Transition{q_\FP{B}}{\FP{R}}{\{q_\FP{A}\}} \in \TF$.
\item For every rule such as $\FP{A}(x) \wedge \FP{R}(x, y) \to \FP{B}(y) \in \RS$, add $\Transition{q_\FP{B}}{\FP{R}^-}{\{q_\FP{A}\}} \in \TF$.
\item For every rule such as $\to \FP{B}(x) \in \RS$, add $\Transition{q_\FP{B}}{\EmptyTransition}{\{\AcceptingState\}} \in \TF$.
\end{itemize}
\end{definition}

\begin{example}
Consider the rule set $\TBox_\textit{NF}$ and the fact set \FS introduced in \Cref{example:introduction-ontology}.
Observe that the automaton $\HornALCToDLAutomaton(\TBox_\textit{NF}, \FP{Trust}(\textit{alice}))$, described in \Cref{figure:example-automata-accepting-run}, accepts \FS, as witnessed by the accepting run depicted in that same figure.
By contrast, the automaton $\HornALCToDLAutomaton(\TBox_\textit{NF}, \FP{Trust}(\textit{bob}))$ does not accept \FS, as there is no accepting run from $\langle q_\FP{Trust}, \textit{bob} \rangle$.
\end{example}

\begin{restatable}{theorem}{CorrespondenceNormHornALCAutomata}
\label{theorem:correspondence-normalised-horn-alc-automata}
A \HornALCDat query $\Query$ is equivalent to the DL automaton $\ToDLA(\Query)$.
\end{restatable}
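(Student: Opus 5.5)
We prove, for a \HornALCDat query $\Query = \langle \RS, \FP{D}(s) \rangle$ and a finite fact set $\FS$, the following stronger statement: for every class $\FP{C}$ occurring in $\Query$ and every constant $c$ occurring in $\FS$, we have $\FP{C}(c) \in \FF{Ch}(\langle \RS, \FS \rangle)$ if and only if $\ToDLA(\Query)$ has a run over $\FS$ from $\langle q_\FP{C}, c \rangle$ whose leaves are all labelled with \AcceptingState. Instantiating with $\FP{C} = \FP{D}$ and $c = s$, and using \Cref{propoosition:chase-correctness}, gives the theorem. One boundary case needs separate handling: if $s$ does not occur in $\FS$, then the automaton rejects by definition. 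The query is also not entailed under active domain semantics. Indeed, the chase only produces facts over constants of $\FS$ (applicable triggers range over constants occurring in $\FS$), so $\FP{D}(s)$ can only be in the chase if it is already in $\FS$, which would force $s$ to occur in $\FS$. Classes $\FP{C}$ not occurring in $\Query$ play no role, since all rule heads mention classes of $\Query$.

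For the \emph{only if} direction, we proceed by induction on the chase depth $k$ of $\FP{C}(c)$.
\begin{itemize}
\item If $k = 1$, then $\FP{C}(c) \in \FS$. The run with root $\langle q_\FP{C}, c \rangle$ and a single child $\langle \AcceptingState, c \rangle$ works via the transition $\Transition{q_\FP{C}}{\FP{C}}{\{\AcceptingState\}}$.
\item If $k > 1$, then $\FP{C}(c)$ is the output of an applicable trigger for some rule in $\RS$, and we case split on the rule form.
\begin{itemize}
\item For a rule of the form \eqref{rule:conjunction}, the facts $\FP{A}_i(c)$ have smaller depth. By induction we obtain runs from $\langle q_{\FP{A}_i}, c \rangle$, and we glue them under a root using the multi-ary transition. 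If $n=1$, the transition is binary with label \EmptyTransition and we use a single child. A repeated $\FP{A}_i$ only collapses the set, which is harmless. We need to check that the definition of DL automaton allows that set of size $1$: the construction adds a set $\{q_{\FP{A}_1},\dots\}$, which becomes a binary \EmptyTransition transition if it is a singleton.
\item For a rule of the form \eqref{rule:lhs-existential}, we have $\FP{R}(c,d) \in \FS$ (roles are never derived in \HornALCDat) and $\FP{A}(d)$ has smaller depth. We prepend the $\FP{R}$-step.
\item For a rule of the form \eqref{rule:universal}, we have $\FP{R}(d,c) \in \FS$, hence $\FP{R}^-(c,d) \in \Invert(\FS)$. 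We prepend the $\FP{R}^-$-step to the run from $\langle q_\FP{A}, d \rangle$.
\item For a rule of the form \eqref{rule:top}, the root has a single child $\langle \AcceptingState, c \rangle$ via $\Transition{q_\FP{C}}{\EmptyTransition}{\{\AcceptingState\}}$.
\end{itemize}
\end{itemize}

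For the \emph{if} direction, we proceed by induction on the height of the run. We first note that \AcceptingState has no outgoing transitions in $\ToDLA(\Query)$, so a node labelled \AcceptingState is a leaf. Conversely, all leaves are accepting by assumption, so every node labelled by some $q_\FP{C}$ has children. We then inspect the transition used at the root; by construction of \TF, each transition originates from exactly one item of \Cref{definition:reduction-dl-query-to-DL automaton}.
\begin{itemize}
\item A class transition gives $\FP{C}(c) \in \FS$.
\item An \EmptyTransition-transition to \AcceptingState comes from a rule $\to \FP{C}(x)$, which applies because $c$ occurs in $\FS$.
\item A transition to $\{q_{\FP{A}_1},\dots\}$ comes from a rule of the form \eqref{rule:conjunction}; by induction $\FP{A}_i(c)$ is in the chase, so the rule fires.
\item A role or inverse-role step to $\langle q_\FP{A}, d \rangle$ yields a fact of $\FS$ together with $\FP{A}(d)$ in the chase by induction, so the rule of the form \eqref{rule:lhs-existential} or \eqref{rule:universal} fires.
\end{itemize}
In every case $\FP{C}(c)$ belongs to the chase, since the chase is closed under applicable triggers.

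The main obstacle is bookkeeping rather than depth. The points to handle carefully are the ambiguity of singleton bodies versus multi-ary transitions, the correspondence between $\FP{R}^-$ and $\Invert(\FS)$, and the active-domain condition on constants. All of these are routine once the strengthened invariant above is fixed.
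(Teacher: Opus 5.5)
Your proof is correct and follows the paper's route: the paper splits the theorem into \Cref{lemma-aut-hornalcdat}, proved by induction on chase depth by gluing runs under a new root, and \Cref{lemma-hornalc-aut}, proved by induction on run height by reading off the rule behind the root's transition. Your version fixes the single automaton $\ToDLA(\Query)$ and varies only the start state, and it explicitly treats the case where $s$ does not occur in $\FS$ and singleton $\EmptyTransition$-bodies; this tidies up points the paper leaves implicit.
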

\begin{proof}[Sketch]
The theorem follows from the fact that, for every fact set \FS and fact $\FP{B}(b)$, the KB $\langle \RS, \FS \rangle$ entails $\FP{B}(b)$ if and only if there is an accepting run of $\HornALCToDLAutomaton(\RS, \FP{B}(b))$ over \FS starting from $\langle q_\FP{B}, b \rangle$.
The ``if'' direction is proved by induction on the height of the accepting run, while the ``only if'' direction is established by induction on the chase depth of $\FP{B}(b)$ with respect to $\langle \RS, \FS \rangle$.
\end{proof}

To conclude the section, the following remark states several properties satisfied by every DL automaton produced by the reduction of \Cref{definition:reduction-dl-query-to-DL automaton}.

\begin{remark}
\label{remark:DL automaton-properties}
Consider a \HornALCDat query $\Query$ and the DL automaton $\ToDLA(\Query) = \langle s, Q, \InitialState, \TF \rangle$.
Then, all of the following hold:
\begin{enumerate}
\item There are no transitions originating from the accepting state; that is, $\TF$ does not contain transitions of the form $\Transition{\AcceptingState}{\FP{S}}{P}$.
\label{definition:DL automaton-accepting-state-sink}
\item Every (binary) transition labelled with a class points to the accepting state; that is, for every $\Transition{q}{\FP{C}}{\{p\}} \in \TF$ with $\FP{C} \in \Classes$, we have $p = \AcceptingState$.
\label{definition:DL automaton-class-labelled-to-accepting}
\item Every non-final state directly reaches the accepting state through a binary transition; that is, for every $q \in Q \setminus \{\AcceptingState\}$, there is some transition of the form $\Transition{q}{\FP{S}}{\{\AcceptingState\}} \in \TF$.
\label{definition:DL automaton-all-states-reach-accept}
\end{enumerate}
\end{remark}

Properties~\ref{definition:DL automaton-accepting-state-sink} and \ref{definition:DL automaton-class-labelled-to-accepting} together ensure that transitions labelled with classes are considered only at the very end of an accepting run.
This, in turn, simplifies the form of the regular expressions used in the following section to encode DL automata, since these only need to involve role expressions.
Property~\ref{definition:DL automaton-all-states-reach-accept} ensures that every state admits a path to the accepting state.
If not for this restriction, a DL automaton may not accept any fact set.
Such an automaton cannot be translated into a PQ, since the empty query is not a PQ under our definition.

\section{From Stratified DL Automata to PQs}
\label{section:automata-to-acyclic-pqs}

In this section, we present an equivalence-preserving reduction from a class of DL automata to PQs.
More precisely, this reduction applies only to stratified automata, which we formally introduce in the following definition.

\begin{definition}
\label{definition:stratified-DL automaton}
Consider a DL automaton $\DLAut = \langle s, Q, \InitialState, \TF \rangle$.
\begin{itemize}
\item For states $q$ and $p$, we write $\BinTransRel{\DLAut}{q}{p}$ if $\TF$ contains a binary transition of the form $\Transition{q}{\FP{S}}{\{p\}}$, and $\MultiTransRel{\DLAut}{q}{p}$ if $\TF$ contains a multi-ary transition of the form~$\Transition{q}{\EmptyTransition}{P}$ with $p \in P$.
\item We define $\BothTransRel{\DLAut}{}{}$ as the union of $\BinTransRel{\DLAut}{}{}$ and $\MultiTransRel{\DLAut}{}{}$, and we define $\TRTransRel{\DLAut}{}{}$ as the reflexive transitive closure of $\BothTransRel{\DLAut}{}{}$.
\item For states~$q$ and~$p$, we write $\InHigherStrata{\DLAut}{q}{p}$ if there are $q', q'', p', p'' \in Q$ and $\FP{E} \in \RoleExpressions$ with $\TRTransRel{\DLAut}{q}{q'}$, $\Transition{q'}{\FP{E}}{\{q''\}} \in \TF$, $\TRTransRel{\DLAut}{q''}{p'}$, $\MultiTransRel{\DLAut}{p'}{p''}$, and $\TRTransRel{\DLAut}{p''}{p}$.
\end{itemize}

A DL automaton $\DLAut$ is \emph{stratified} if the relation $\InHigherStrata{\DLAut}{}{}$ is acyclic, and if it satisfies the conditions in \Cref{remark:DL automaton-properties}.
If $\DLAut$ is stratified, we define $\StrFun{\DLAut} : Q \to \mathbb{N}$ as the function mapping a state $q \in Q$ to $1$ if $q$ has no successors with respect to $\InHigherStrata{\DLAut}{}{}$, and to $\FF{Max}(\{\StrFun{\DLAut}(p) \mid \InHigherStrata{\DLAut}{q}{p}\}) + 1$ otherwise.
\end{definition}


\begin{example}
\label{example:strafication}
Consider the DL automaton $\DLAut = \langle \textit{alice}, Q, q_\FP{Trust}, \TF \rangle$ described in \Cref{figure:example-automata-accepting-run}.
For instance, we have $\InHigherStrata{\DLAut}{q_\FP{Trust}}{q_\FP{Gate}}$, since all of the following hold:
\begin{align*}
\TransitiveReflexiveTransitionRelation{\DLAut}{q_\FP{Trust}}{q_\FP{Trust}}
\qquad \Transition{q_\FP{Trust}}{\FP{Acc}^-}{\{q_\FP{Crit}\}} \in \TF
\qquad \TransitiveReflexiveTransitionRelation{\DLAut}{q_\FP{Crit}} {q_\FP{Crit}}
\\
\MultiTransitionRelation{\DLAut}{q_\FP{Crit}}{q_\FP{Onl}}
\qquad \TransitiveReflexiveTransitionRelation{\DLAut}{q_\FP{Onl}}{q_\FP{Gate}}
\end{align*}
Observe that \DLAut is stratified, since every cycle in its transition function involves only binary transitions.
\end{example}

Our reduction from automata to PQs, formally introduced in \Cref{section:DL automata-to-pqs-inductive-definition}, proceeds by associating with every state $q$ of a stratified DL automaton~\DLAut a PQ describing the behaviour of \DLAut when started in~$q$.
This association is defined inductively: for states in stratum~$1$, the corresponding PQ is defined directly, whereas for states in stratum~$k \geq 2$, it is defined using the PQs already associated with states in strictly lower strata.
The role of stratification is to guarantee that this recursive construction is well defined. 
Indeed, the relation~\InHigherStrata{\DLAut}{}{} captures precisely those dependencies between states that arise when the translation encounters a path containing a binary transition labelled with a role expression and, possibly after intermediate transitions, a multi-ary transition.
In this situation, our inductive construction branches out and introduces additional existentially quantified variables, thereby producing formulas such as the third disjunct of the PQ in \Cref{example:introduction-query}.
If cycles over \InHigherStrata{\DLAut}{}{} were allowed, the translation could follow the same dependencies indefinitely, repeatedly expanding the formula and introducing new existential variables, thus failing to produce a finite PQ.

We point out that in \Cref{definition:stratified-DL automaton} we require stratified automata to satisfy the conditions of \Cref{remark:DL automaton-properties}, since these are necessary to prove some of the results that follow.
This is without loss of generality when considering DL automata corresponding to \HornALCDat queries since, as explained in the previous section, all such automata satisfy these conditions.
We define our reduction from automata to PQs in \Cref{section:DL automata-to-pqs-inductive-definition}; its definition relies on a preliminary translation for DL automata without multi-ary transitions, presented in \Cref{section:nfas-to-pqs}.

\subsection{DL Non-Deterministic Finite Automata}
\label{section:nfas-to-pqs}

In the following subsection, we present a reduction that transforms DL automata into equivalent PQs, that is, to positive FO formulas featuring regular expressions.
To define such a transformation, we leverage existing automata-theoretic techniques defined for automata without multi-ary transitions.

\begin{definition}
\label{definition:nfa}
A DL \emph{non-deterministic finite automaton} (NFA) \DLAut is a DL automaton without multi-ary transitions.
Given states~$q$ and $p$, constants $s$ and $t$, and a fact set \FactSet, the automaton \DLAut \emph{transitions} from~$\langle q, s \rangle$ to $\langle p, t \rangle$ in \FactSet if there is a run of \DLAut over \FactSet from $\langle q, s \rangle$ whose (unique) leaf is labelled with $\langle p, t \rangle$.
\end{definition}

Observe that, by \Cref{definition-run-automaton}, in every run of a DL automaton without multi-ary transitions, each node has at most one child; hence, such a run has exactly one leaf.
It is well known from classical automata theory that the set of all paths induced by such an automaton can be characterized by a regular expression \cite{intro-automata-theory}.

\begin{lemma}
\label{lemma-nfa-transition}
Consider a DL NFA $\NFA$ that satisfies \Cref{remark:DL automaton-properties},\footnote{Recall that NFAs are DL automata by \cref{definition:nfa}.} and some states $q$ and $p$ of \NFA such that $p$ is different from $\AcceptingState$.
Then, there is a regular expression~$\RegularExpressionNFAStateToState{q}{p}{\NFA}$ such that, for every fact set~$\FactSet$ and constants $s$ and $t$, the NFA $\NFA$ can transition from~$\langle q, s \rangle$ to $\langle p, t \rangle$ in~$\FactSet$ if and only if $\FactSet \models \RegularExpressionNFAStateToState{q}{p}{\NFA}(s, t)$.
\end{lemma}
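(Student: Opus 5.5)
Since $p \neq \AcceptingState$, a run of $\NFA$ from $\langle q, s\rangle$ whose unique leaf is labelled $\langle p, t\rangle$ never uses a class-labelled transition. Indeed, by Property~\ref{definition:DL automaton-class-labelled-to-accepting} of \Cref{remark:DL automaton-properties}, any class-labelled transition leads to \AcceptingState. By Property~\ref{definition:DL automaton-accepting-state-sink}, no transition leaves \AcceptingState, so such a node would have to be the leaf, contradicting $p \neq \AcceptingState$. The same argument shows that no node of the run carries \AcceptingState at all.

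Hence the run is a sequence of nodes labelled $\langle q_0, e_0\rangle, \ldots, \langle q_m, e_m\rangle$ with $q_0 = q$, $e_0 = s$, $q_m = p$, $e_m = t$. Each step uses a transition $\Transition{q_{i-1}}{\FP{S}_i}{\{q_i\}}$ with $\FP{S}_i \in \{\EmptyTransition\} \cup \RoleExpressions$. If $\FP{S}_i = \EmptyTransition$ then $e_i = e_{i-1}$; otherwise $\FP{S}_i(e_{i-1}, e_i) \in \FS \cup \Invert(\FS)$. Also, $s$ occurs in $\FS$ (this is needed so that $\Paths{\FS}(s,t)$ is defined), which I take as implicit in the run notion or handle as noted below.

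Next, I view $\NFA$ restricted to the states $Q \setminus \{\AcceptingState\}$, with only its $\EmptyTransition$- and role-labelled binary transitions, as a classical $\EmptyTransition$-NFA $N_{q,p}$ over the alphabet $\RoleExpressions$. Its initial state is $q$ and its only final state is $p$. By Kleene's theorem \cite{intro-automata-theory}, there is a regular expression $\RegularExpressionNFAStateToState{q}{p}{\NFA}$ with $\Language(\RegularExpressionNFAStateToState{q}{p}{\NFA})$ equal to the language of $N_{q,p}$. The grammar of regular expressions in the paper includes $\EmptyTransition$, role names, inverse roles, union, concatenation and star, so the classical construction fits. The empty language is not directly expressible, but it is harmless in the cases that matter: if $q=p$ the language contains $\EmptyWord$. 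If the language is empty and $q\neq p$, I instead pick a role expression not occurring anywhere, so the resulting atom is never satisfied. (This corner case needs care; I would note that a fresh role name gives a regular expression with no matching path in any $\FS$ lacking that role, and otherwise restrict to the relevant signature.)

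It remains to prove the equivalence. For ``only if'', the run above yields the word $\FP{S}_1\cdots\FP{S}_m$ with the $\EmptyTransition$'s deleted. This word is accepted by $N_{q,p}$, and it is a path from $s$ to $t$ in $\FS$ via the constants $e_i$ at role steps. Conversely, a path $\FP{R}_1\cdots\FP{R}_n \in \Paths{\FS}(s,t) \cap \Language(\RegularExpressionNFAStateToState{q}{p}{\NFA})$ has an accepting computation of $N_{q,p}$, interleaving $\EmptyTransition$-moves. I then build the run node by node, reusing the witnesses $e_0,\ldots,e_n$ of the path, with $\EmptyTransition$-moves keeping the constant fixed. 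The main obstacle is the bookkeeping about constants occurring in $\FS$: I would handle it by checking that for $m=0$ the empty path requires $s=t$ to occur in $\FS$, matching $\EmptyWord\in\Paths{\FS}(s,s)$. Both directions are then straightforward inductions on the length of the run and of the computation, respectively.
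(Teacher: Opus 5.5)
Your proposal is correct and follows the same route as the paper. Both apply Kleene's theorem to the $\EmptyTransition$- and role-labelled part of $\NFA$, using that $\AcceptingState$ is a sink and that class-labelled transitions target it to exclude class labels from runs ending in $p$; you additionally spell out the run/path correspondence the paper leaves implicit.

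The corner cases you flag are defects of the statement, not of your argument. Every expression in the grammar has a nonempty language, so no expression works when $p$ is unreachable from $q$, and your fresh-role patch cannot hold for all fact sets. Runs using only $\EmptyTransition$-steps also need $s$ to occur in $\FS$. Both missing hypotheses hold wherever the paper invokes the lemma.
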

\begin{proof}
By Kleene’s theorem \cite{kleene-theorem}, there is a regular expression such as $\RegularExpressionNFAStateToState{q}{p}{\NFA}$ over role expressions such that $\Language(\RE_{q \triangleright p}^\DLAut)$ is the set of all words labelling the paths from $q$ to $p$ in~\NFA.
This applies because no path from $q$ to $p$ visits the accepting state~\AcceptingState since $p \neq \AcceptingState$ and \AcceptingState is a sink in~\NFA by Item~\ref{definition:DL automaton-accepting-state-sink} of \Cref{remark:DL automaton-properties}.
Therefore, these paths contain no classes, since by Item~\ref{definition:DL automaton-class-labelled-to-accepting} of the same remark, all binary transitions of \NFA\ labelled with classes have \AcceptingState\ as target.
\end{proof}

The previous auxiliary result allows us to readily define an equivalence-preserving reduction from NFAs to PQs.

\begin{definition}
\label{definition:reduction-nfa-to-pq}
Given an NFA $\NFA = \langle s, Q, \InitialState, \TF \rangle$ that satisfies \Cref{remark:DL automaton-properties}, a state~$q$ of \NFA, and a term~$t$, we define $\NFAToPostiveQuery(\NFA, q, t)$ as the following PQ:
\begin{itemize}
\item If $q$ is the accepting state \AcceptingState, then this PQ is equal to $\top$.
\item Otherwise, this PQ is the disjunction containing, for every $p \in Q \setminus \{\AcceptingState\}$ with $\TransitiveReflexiveTransitionRelation{\NFA}{q}{p}$, all of the following:
\FirstItem If $\Transition{p}{\FP{C}}{\{\AcceptingState\}} \in \NFA$ for some $\FP{C} \in \Classes$, add the disjunct $\exists w . \big[ \RegularExpressionNFAStateToState{q}{p}{\NFA}(t, w) \wedge \FP{C}(w) \big]$ with $w$ a fresh variable.
\SecondItem If $\Transition{p}{\EmptyTransition}{\{\AcceptingState\}} \in \NFA$, add the disjunct $\exists w . \RegularExpressionNFAStateToState{q}{p}{\NFA}(t, w)$ with $w$ a fresh variable.
\end{itemize}
\end{definition}

Since \NFA above satisfies \Cref{remark:DL automaton-properties}, every state distinct from the accepting state reaches it via a binary transition annotated with a class.
Therefore, the corresponding PQ is either $\top$ or a non-empty disjunction.

\begin{restatable}{lemma}{lemmatransitioncrpq}
\label{lemma-transition-crpq}
Consider a DL NFA $\NFA$ that satisfies \Cref{remark:DL automaton-properties}, a fact set \FS, a state~$q$ of \NFA different from \AcceptingState, and a constant~$s$ in \FS.
There is an accepting run of \NFA over \FS starting at~$\langle q, s \rangle$ if and only if $\FS$ satisfies $\NFAToPostiveQuery(\NFA, q, s)$.
\end{restatable}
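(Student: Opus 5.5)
We argue each direction by decomposing an accepting run at its last step, and then invoke \Cref{lemma-nfa-transition} for the part of the run that avoids \AcceptingState. Since \NFA has no multi-ary transitions, every run is a single path of nodes $n_0, \ldots, n_k$, with labels $L(n_i) = \langle q_i, c_i \rangle$, $q_0 = q$ and $c_0 = s$.

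For the ``only if'' direction, let such an accepting run be given. Its unique leaf $n_k$ is labelled with \AcceptingState. Because $q \neq \AcceptingState$, we have $k \geq 1$. By Item~\ref{definition:DL automaton-accepting-state-sink} of \Cref{remark:DL automaton-properties}, \AcceptingState has no outgoing transitions, so it can only occur at the leaf. Hence $p := q_{k-1} \neq \AcceptingState$, and $\TRTransRel{\NFA}{q}{p}$ follows from the transitions used along the run.

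The prefix $n_0, \ldots, n_{k-1}$ is itself a run of \NFA over \FS from $\langle q, s \rangle$ whose leaf is labelled $\langle p, c_{k-1} \rangle$. By \Cref{lemma-nfa-transition}, this yields $\FS \models \RegularExpressionNFAStateToState{q}{p}{\NFA}(s, c_{k-1})$.

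The last step uses a transition $\Transition{p}{\FP{S}}{\{\AcceptingState\}}$, and we split on its label.
\begin{itemize}
\item If $\FP{S}$ is a class $\FP{C}$, then $\FP{C}(c_{k-1}) \in \FS$. Instantiating $w$ with $c_{k-1}$ witnesses disjunct \FirstItem.
\item If $\FP{S} = \EmptyTransition$, then disjunct \SecondItem holds with the same witness.
\item If $\FP{S}$ is a role expression, the plan is to fold this last edge into the regular-expression part, i.e.\ to witness $\exists w . \RegularExpressionNFAStateToState{q}{p}{\NFA}(t, w)$ for some suitable $p$.
\end{itemize}

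The third case is the main obstacle, since \Cref{definition:reduction-nfa-to-pq} has no disjunct for role-labelled transitions into \AcceptingState. I would try the following. By Item~\ref{definition:DL automaton-all-states-reach-accept}, $p$ has some transition to \AcceptingState. If that transition is labelled $\EmptyTransition$, we can simply use \SecondItem with $w = c_{k-1}$, dropping the final role edge. Otherwise I would observe that automata produced by \ToDLA never have role transitions into \AcceptingState. For the general case I would either add this as an implicit assumption or argue that the disjunct \SecondItem is intended to cover such endings. Failing that, one can note that a final role step $p \rightsquigarrow_{\FP{R}} \AcceptingState$ makes $\exists w . \RegularExpressionNFAStateToState{q}{p}{\NFA}(s,w)$ insufficient, so the intended reading must be that such transitions do not occur.

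For the ``if'' direction, assume $\FS$ satisfies some disjunct, associated with a state $p \neq \AcceptingState$ and a constant $d$ witnessing $w$. Then $\FS \models \RegularExpressionNFAStateToState{q}{p}{\NFA}(s, d)$, so \Cref{lemma-nfa-transition} gives a run from $\langle q, s \rangle$ with leaf $\langle p, d \rangle$. We extend this leaf by one child $\langle \AcceptingState, d \rangle$, using the transition $\Transition{p}{\FP{C}}{\{\AcceptingState\}}$ (justified since $\FP{C}(d) \in \FS$) or $\Transition{p}{\EmptyTransition}{\{\AcceptingState\}}$. The only leaf now carries \AcceptingState. The automaton's constant occurring in \FS is handled by the hypothesis that $s$ occurs in \FS, which is the acceptance condition used here. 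So the extended run is accepting.
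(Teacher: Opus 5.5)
Your argument is the same as the paper's. You cut the accepting run before its last edge and apply \Cref{lemma-nfa-transition} to the prefix ending in $\langle p, c_{k-1} \rangle$. You then read off disjunct \FirstItem or \SecondItem from the label of the last transition. For the converse, you append $\langle \AcceptingState, d \rangle$ to the run supplied by \Cref{lemma-nfa-transition}. Your uniform treatment of one-step runs via $\EmptyWord \in \Language(\RegularExpressionNFAStateToState{q}{q}{\NFA})$ is slightly cleaner than the paper's separate case $k = 2$.

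The role-labelled case you isolate is a real problem, but it lies in the statement, not in your approach, and you can say so outright. \Cref{remark:DL automaton-properties} does not exclude transitions $\Transition{p}{\FP{E}}{\{\AcceptingState\}}$ with $\FP{E} \in \RoleExpressions$. The paper's proof simply assumes they do not occur: in both of its cases it argues that if there is no $\EmptyTransition$-transition into \AcceptingState, the last transition is labelled with a class. Without this assumption the lemma is false. Take the NFA with states $q$ and \AcceptingState and transitions $\Transition{q}{\FP{R}}{\{\AcceptingState\}}$ and $\Transition{q}{\FP{C}}{\{\AcceptingState\}}$; it satisfies all three items of the remark. Over $\FS = \{\FP{R}(s, d)\}$ it has the accepting run $\langle q, s \rangle \to \langle \AcceptingState, d \rangle$. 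Yet $\Language(\RegularExpressionNFAStateToState{q}{q}{\NFA}) = \{\EmptyWord\}$, so $\NFAToPostiveQuery(\NFA, q, s)$ is equivalent to $\FP{C}(s)$, which \FS does not satisfy. Without the $\FP{C}$-transition the disjunction is even empty. This automaton is also stratified, so the same example refutes \Cref{theorem:correspondence-automata-crpq} for arbitrary stratified automata.

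Your final remark is therefore the right conclusion, and only two of your options are viable. The first is to add the hypothesis that every transition into \AcceptingState is labelled by a class or by $\EmptyTransition$. It holds for every $\ToDLA(\Query)$ and survives the passage to $\FF{NFA}(\DLAut, p)$, which only deletes transitions. The second is to extend \Cref{definition:reduction-nfa-to-pq} with a disjunct $\exists w . (\RegularExpressionNFAStateToState{q}{p}{\NFA} \cdot \FP{E})(t, w)$ for every $\Transition{p}{\FP{E}}{\{\AcceptingState\}}$ with $\FP{E} \in \RoleExpressions$. Reading \SecondItem as covering such endings does not work. The formula $\exists w . \RegularExpressionNFAStateToState{q}{p}{\NFA}(t, w)$ only says that $p$ is reached, and says nothing about an outgoing $\FP{E}$-edge. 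The example above shows that nothing else in the query compensates. With the added hypothesis, your proof is complete.
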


The previous result yields an equivalence-preserving reduction from NFAs to PQs illustrated in the following example.

\begin{example}
\label{example:reduction-nfa-to-pq}
Consider the DL automaton $\langle \textit{alice}, Q, q_\FP{Trust}, \TF \rangle$ depicted \Cref{figure:example-automata-accepting-run} and the NFA $\NFA = \langle s, Q, q_\FP{Onl}, \TF' \rangle$ where $\TF'$ is the set of all binary transitions in~$\TF$.
By \Cref{lemma-transition-crpq}, this NFA is equivalent to the following PQ:
$$\NFAToPostiveQuery(\NFA, q_\FP{Onl}, s)= \exists u_1 . \big[ \FP{Link}^*(s, u_1) \wedge \FP{Onl}(u_1) \big] \vee \exists u_2 . \big[\FP{Link}^*(s, u_2) \wedge \FP{Gate}(u_2)\big]$$
\end{example}

In the following subsection, we use the previously discussed reduction as a subroutine in a more general translation applicable to stratified DL automata, which may contain multi-ary transitions.
To apply this subroutine, we define the NFA reachable from a given state in a DL automaton.

\begin{definition}
Given a DL automaton $\DLAut$ and a state $q$ of \DLAut, we define $\FF{NFA}(\DLAut, p)$ as the NFA obtained from $\DLAut$ by performing the following modifications in sequence: remove all multi-ary transitions, remove all states that are not reachable from $p$ through the remaining (binary) transitions, and remove all (binary) transitions that mention a state removed in the previous step.
\end{definition}

\subsection{An Inductive Reduction for Stratified DL Automata}
\label{section:DL automata-to-pqs-inductive-definition}

Before presenting our reduction, we define an extension of transition functions that allows adjacent \EmptyTransition-transitions to be handled in single step.

\begin{definition}
For a transition function $\TF$ over a set of states~$Q$, let $\TF^*_\EmptyTransition$ be the (minimal) set of multi-ary transitions such that:
For every $q \in Q$, we have $\MultiaryTrans{q}{q} \in \TF^*_\EmptyTransition$.
For every~$\Transition{q}{\EmptyTransition}{P} \in \delta_\EmptyTransition^*$ and $\Transition{p}{\EmptyTransition}{P'} \in  \TF$ with $p \in P$, we have $\Transition{q}{\EmptyTransition}{(P\setminus \{p\}) \cup P'} \in  \TF^*_\EmptyTransition$.
\end{definition}

Every DL automaton $\langle s, Q, \InitialState, \TF \rangle$ is equivalent to the extended automaton $\langle s, Q, \InitialState, \TF \cup \TF^*_\EmptyTransition \rangle$, since every transition in $\TF^*_\EmptyTransition$ can be simulated by a finite sequence of \EmptyTransition-transitions already present in $\TF$.
Nevertheless, this extension allows every combination of states reachable via \EmptyTransition-transitions to be reached directly in a single step, thus avoiding potential cyclic dependencies in our reduction.

In the following definition, we formally introduce our equivalence-preserving reduction, which is admittedly concise and technically involved.
We suggest that the reader first skim through it, then consult the examples that follow, and only then return to the definition for a clearer understanding.

\begin{definition}
\label{definition:reduction-stratified-DL automaton-to-pq}
Given a stratified DL automaton $\DLAut = \langle s, Q, \InitialState, \TF \rangle$, a state~$q \in Q$, a term $t$, we inductively define $\StratDLAToPostiveQuery(\DLAut, q, t)$ as the following PQ:
\begin{align*}
&\bigvee\nolimits_{q \rightsquigarrow_\EmptyTransition P \in \TF^*_\EmptyTransition} \bigwedge\nolimits_{p \in P} \Big(\NFAToPostiveQuery(\NFA, p, t)~\vee \\
&\quad \bigvee_{\substack{\forall p', p_1, \ldots, p_n \in Q \textit{ such that } \TRTransRel{\NFA}{p}{p'}, \\ \Transition{p'}{\EmptyTransition}{\{p_1,\ldots, p_n\}},~\InHigherStrata{\DLAut}{p}{p_1}, \ldots, \text{ and } \InHigherStrata{\DLAut}{p}{p_n}}} \exists w. \big[\RegularExpressionNFAStateToState{p}{p'}{\NFA}(t, w) \wedge \bigwedge\nolimits_{i = 1}^n \StratDLAToPostiveQuery(\DLAut, p_i, w)\big] \Big)
\end{align*}
In the above, $\NFA$ is the NFA $\FF{NFA}(\DLAut, q)$ and $w$ is a fresh variable.

In the following, we will sometimes write $\StratDLAToPostiveQuery(\DLAut)$ to refer to $\StratDLAToPostiveQuery(\DLAut, \InitialState, s)$.
\end{definition}

Observe that the above reduction is well defined since, for every stratified DL automaton~\DLAut, the relation \InHigherStrata{\DLAut}{}{} is acyclic.
Hence, we can first compute the PQs corresponding to all states in the lowest stratum, that is, states without successors with respect to \InHigherStrata{\DLAut}{}{}.
We then proceed stratum by stratum, computing the subsequent PQs by relying on those obtained previously.
The following example illustrates both the base case and the inductive step of our construction.

\begin{example}
Consider the DL automaton $\DLAut$ from \Cref{figure:example-automata-accepting-run} and a constant $s$.
To illustrate the base case of our construction, we consider the state $q_\FP{Crit}$, which belongs to the lowest stratum of \DLAut, and present the corresponding PQ.
\begin{align*}
\StratDLAToPostiveQuery(\DLAut, q_\FP{Crit}, s) = ~& \big(\NFAToPostiveQuery(\NFA, q_\FP{Sens}, s) \wedge \NFAToPostiveQuery(\NFA, q_\FP{Onl}, s)\big)\, \vee \notag \\
\big(\NFAToPostiveQuery(\NFA, &\, q_\FP{Sens}, s) \wedge \NFAToPostiveQuery(\NFA, q_\FP{Gate}, s)\big) \, \vee \, \NFAToPostiveQuery(\NFA, q_\FP{Crit}, s) \\[0.75ex]
\NFAToPostiveQuery(\NFA, q_\FP{Crit}, s) = ~&\FP{Crit}(s) \\[0.75ex]
\NFAToPostiveQuery(\NFA, q_\FP{Gate}, s) = ~&\FP{Gate}(s) \\[0.75ex]
\NFAToPostiveQuery(\NFA, q_\FP{Sens}, s) = ~&\exists u_3 . \big[\FP{Link}^*(s, u_3) \wedge \FP{Sens}(u_3)\big]
\end{align*}
The formula $\NFAToPostiveQuery(\NFA, q_\FP{Onl}, s)$ was already given in \Cref{example:reduction-nfa-to-pq}.
To illustrate the inductive step of our construction, we consider the state $q_\FP{Trust}$, which belongs to stratum~$2$ of \DLAut, and present the corresponding PQ.
\begin{align*}
\StratDLAToPostiveQuery(\DLAut, q_\FP{Trust}, s) = ~&\FP{Trust}(s) \, \vee \, \exists u_4 . \big[\FP{Acc}^-(s, u_4) \wedge \FP{Crit}(u_4)\big]\, \vee \, \notag \\
\exists u_5 . \big[\FP{Acc}^-(s, u_5) &\wedge \StratDLAToPostiveQuery(\DLAut, q_\FP{Sens}, u_5) \wedge \StratDLAToPostiveQuery(\DLAut, q_\FP{Onl}, u_5)\big]
\end{align*}
\end{example}

\begin{restatable}{theorem}{CorrespondenceAutomataCTRPQ}
\label{theorem:correspondence-automata-crpq}
A stratified DL automaton $\DLAutomaton$ is equivalent to $\StratDLAToPostiveQuery(\DLAutomaton)$.
\end{restatable}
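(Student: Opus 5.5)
We prove the stronger, state-parametric claim: for every state $q \neq \AcceptingState$ of a stratified DL automaton $\DLAut = \langle s, Q, \InitialState, \TF \rangle$, every fact set $\FS$ and every constant $c$ occurring in $\FS$, there is an accepting run of $\DLAut$ over $\FS$ from $\langle q, c\rangle$ if and only if $\FS \models \StratDLAToPostiveQuery(\DLAut, q, c)$. The theorem is the instance $q = \InitialState$, $c = s$; the requirement that $s$ occurs in $\FS$ matches on both sides, since a RE fact or class fact over $s$ is only satisfied when $s$ occurs in $\FS$. If $\InitialState = \AcceptingState$, the PQ is $\top$. The claim is proved by well-founded induction on $\StrFun{\DLAut}(q)$, which is well defined because $\InHigherStrata{\DLAut}{}{}$ is acyclic. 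Every recursive call $\StratDLAToPostiveQuery(\DLAut, p_i, w)$ in \Cref{definition:reduction-stratified-DL automaton-to-pq} has $\InHigherStrata{\DLAut}{p}{p_i}$. Since $\TRTransRel{\DLAut}{q}{p}$, we get $\InHigherStrata{\DLAut}{q}{p_i}$: prepend the path from $q$ to $p$ to the witnessing chain. Hence $\StrFun{\DLAut}(p_i) < \StrFun{\DLAut}(q)$, so the induction hypothesis applies to each $p_i$.

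\emph{Soundness ($\Leftarrow$).} Fix a satisfied disjunct, given by some $\MultiaryTrans{q}{P} \in \TF^*_\EmptyTransition$. First build a finite tree of $\EmptyTransition$-steps at $c$ from $q$ whose leaves are labelled $\langle p, c\rangle$ for $p \in P$; this tree unfolds the inductive definition of $\TF^*_\EmptyTransition$. For each $p \in P$, one of two cases holds.
\begin{itemize}
\item If the $\NFAToPostiveQuery(\NFA, p, c)$ disjunct holds, \Cref{lemma-transition-crpq} gives an accepting run from $\langle p, c\rangle$. It uses only binary transitions of $\NFA \subseteq \DLAut$; note that $\NFA = \FF{NFA}(\DLAut,q)$ still satisfies \Cref{remark:DL automaton-properties}, because item 3 transitions lead to $\AcceptingState$ and therefore survive the pruning.
\item Otherwise some branch disjunct holds with witness $d$ for $w$. \Cref{lemma-nfa-transition} yields an NFA run from $\langle p, c\rangle$ to $\langle p', d\rangle$. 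We append the multi-ary transition to $\langle p_i, d\rangle$ and then the runs supplied by the induction hypothesis. If some $p_i = \AcceptingState$, the node is simply a leaf.
\end{itemize}
Gluing these pieces yields an accepting run.

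\emph{Completeness ($\Rightarrow$).} This direction is the main obstacle. Take an accepting run $\rho$ from $\langle q, c\rangle$. Let $n_0$ be the root, and consider the maximal top portion of $\rho$ consisting only of $\EmptyTransition$-labelled nodes at constant $c$, cutting at the first binary transition labelled by a role expression or a class. Its frontier states form a set $P$, and we must argue $\MultiaryTrans{q}{P} \in \TF^*_\EmptyTransition$. I would do this by induction on the size of that portion, using the closure rule. A subtlety is binary $\EmptyTransition$-transitions, which change state without branching. I would treat these as part of the NFA path of the corresponding $p$ rather than absorbing them into $P$; this is consistent since $\TRTransRel{\NFA}{p}{p'}$ and $\RegularExpressionNFAStateToState{p}{p'}{\NFA}$ both allow $\EmptyTransition$-moves. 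So the portion is defined as the part above the first binary transition, and at each multi-ary node we recurse. Below each frontier node $\langle p, c\rangle$, follow the unique chain of binary transitions. If the chain ends at a leaf, $\rho$ restricted to it is an accepting NFA run, and \Cref{lemma-transition-crpq} gives $\NFAToPostiveQuery(\NFA, p, c)$. This requires that the states visited lie in $\FF{NFA}(\DLAut, q)$; they are binary-reachable from $p$, but the question is whether $p$ itself is binary-reachable from $q$. This is exactly the delicate point. If it fails, I would replace $\NFA$ by $\FF{NFA}(\DLAut,p)$ and check that the relevant regular expressions coincide, since paths from $p$ only involve states reachable from $p$. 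Otherwise the chain reaches a multi-ary node $\langle p', d\rangle$ with children $\langle p_i, d\rangle$, and \Cref{lemma-nfa-transition} gives $\RegularExpressionNFAStateToState{p}{p'}{\NFA}(c,d)$.

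Two cases then arise at that multi-ary node. If the chain from $p$ to $p'$ contains a role-expression transition, then $\InHigherStrata{\DLAut}{p}{p_i}$ holds for all $i$, and the induction hypothesis yields $\StratDLAToPostiveQuery(\DLAut, p_i, d)$, so the branch disjunct is satisfied. If instead the chain uses only $\EmptyTransition$-transitions, then $d = c$. In that case we enlarge the top portion, absorbing the chain and the multi-ary step into $\TF^*_\EmptyTransition$. The intended induction measure is the number of role-labelled edges separating frontier nodes from the root, though it is not yet settled how the closure rule of $\TF^*_\EmptyTransition$ handles binary $\EmptyTransition$-steps. The clean invariant I would aim for is this: choose $P$ as the set of maximal nodes reachable from the root by $\EmptyTransition$-steps only, so that every frontier chain begins with a role or class transition. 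Then any later multi-ary node is preceded by a role transition and lies in a strictly lower stratum.
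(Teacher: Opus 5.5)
Your proposal is correct and is essentially the paper's proof. It uses induction on strata in both directions. For soundness, it glues the runs supplied by Lemmas~\ref{lemma-nfa-transition} and~\ref{lemma-transition-crpq} below a $\TF^*_\EmptyTransition$-transition; the paper instead builds a run of the extended automaton and appeals to Lemma~\ref{equivalence-epsilon-automaton}, where you unfold the closure. For completeness, it cuts the run at its maximal $\EmptyTransition$-only top part (the paper's ``epsilon-section''), which is exactly your final invariant. The point you left unsettled is settled by the definition: the closure rule for $\TF^*_\EmptyTransition$ applies to every $\EmptyTransition$-transition of $\TF$, binary ones included.

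One caveat, which the paper's proof also overlooks, concerns your remark that the occurrence of $s$ ``matches on both sides''. It fails when $\MultiaryTrans{\InitialState}{\AcceptingState} \in \TF^*_\EmptyTransition$, e.g.\ if $\Transition{\InitialState}{\EmptyTransition}{\{\AcceptingState\}} \in \TF$ or $\InitialState = \AcceptingState$. In that case $\StratToPQ(\DLAut)$ has the disjunct $\top$ and is satisfied by fact sets not mentioning $s$, which the automaton rejects. So the statement is literally true only after conjoining something like $\varepsilon(s,s)$.
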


\begin{proof}[Sketch]
The theorem follows from the fact that, for every fact set \FS and fact $\FP{B}(b)$, the automaton \DLAut has an accepting run from $\langle q_\FP{B}, b \rangle$ over \FS if and only if \FS entails $\StratDLAToPostiveQuery(\DLAut, q_\FP{B}, b)$.
Both directions are proven by induction on the stratum of the state $q_\FP{B}$.
\end{proof}

\section{From Positive Queries to GQL Queries}
\label{section:ayclicic-c2rpqs-to-gql}

In this section, we show that PQs can be expressed in the Graph Query Language (GQL) \cite{sigmod22:gql,icdt23:digest-of-gql}.
Namely, we present an equivalence-preserving mapping from PQs to a query language that is already known to be expressible in GQL \cite{pods23:uc2rpq-to-gql}.

\begin{definition}
A \emph{union of conjunctive two-way regular path queries (UC2RPQ)} is a PQ of the form $\bigvee\nolimits_{i = 1}^n \exists \vec{y}_i . \Formula_i$ where all $\Formula_i$ are non-empty atom conjunctions.
\end{definition}

\begin{definition}
\label{definition:pq-to-uc2rpq}
Given a PQ \PQ, we define $\FF{UC2RPQ}(\PQ)$ as a PQ obtained by exhaustively applying the following replacement rules:
\begin{enumerate}
\item Replace a subformula of the form $\exists w . (\Formula \vee \FormulaAux)$ with $(\exists w . \Formula) \vee (\exists v . \FormulaAux[w / v])$, where $v$ is a fresh variable and $\FormulaAux[w / v]$ is the formula that results from replacing all occurrences of $w$ in \FormulaAux with $v$.\label{nf-rule:exist-inside-disjunction}
\item Replace a subformula of the form $(\Formula_1 \wedge (\Formula_2 \vee \Formula_3))$ with $(\Formula_1 \wedge \Formula_2) \vee (\Formula_1 \wedge \Formula_3)$.\label{nf-rule:wedge-inside-disjunction}
\item Replace a subformula of the form $(\exists w . \Formula) \wedge (\exists v . \FormulaAux)$ with $\exists w, v . (\Formula \wedge \FormulaAux)$.\label{nf-rule:exists-inside-conjunction}
\end{enumerate}
\end{definition}

The previous translation is included for completeness and is not an original contribution, as it relies on standard transformations \cite[Chapter~5]{foundations-of-databases-95}.
\begin{lemma}
For every PQ \PQ, the queries \PQ and $\FF{UC2RPQ}(\PQ)$ are equivalent, and $\FF{UC2RPQ}(\PQ)$ is a UC2RPQ.
\end{lemma}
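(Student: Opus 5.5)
The proof has three parts: each replacement rule preserves equivalence, the rewriting terminates, and the resulting normal form is a UC2RPQ.

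\textbf{Soundness of each step.} I would show that every replacement rule of \Cref{definition:pq-to-uc2rpq} preserves the set of fact sets entailing the query. Semantics are defined compositionally: the value of a subformula under a substitution of its free variables fully determines the value of the whole formula. So it suffices to check that each left-hand side and right-hand side agree under every substitution of their free variables by constants of \FS. This requires extending the entailment relation of the preliminaries to formulas with free variables in the obvious way.
\begin{itemize}
\item Rule~\ref{nf-rule:exist-inside-disjunction} is distributivity of $\exists$ over $\vee$ with active domain quantification. Renaming $w$ to the fresh $v$ is harmless, and it preserves the convention that every variable is quantified exactly once.
\item Rule~\ref{nf-rule:wedge-inside-disjunction} is distributivity of $\wedge$ over $\vee$, which holds pointwise. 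It may duplicate $\Formula_1$, so afterwards I rename the bound variables in one copy to keep variables quantified once.
\item Rule~\ref{nf-rule:exists-inside-conjunction} is prenexing. It is sound because $v$ does not occur free in \Formula and $w$ does not occur free in \FormulaAux; this follows from the unique-quantification convention.
\end{itemize}
Commutativity of $\wedge$ is implicitly assumed, so that the symmetric variants of rules~\ref{nf-rule:wedge-inside-disjunction} and~\ref{nf-rule:exists-inside-conjunction} also apply.

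\textbf{Termination.} This is the main obstacle, because rule~\ref{nf-rule:wedge-inside-disjunction} duplicates subformulas. I would use a standard measure. Assign to each formula a tuple recording, for each occurrence of $\vee$ (respectively $\exists$), how many $\wedge$ and $\exists$ (respectively $\wedge$) operators lie above it. Compare these via the multiset ordering, or equivalently use the classical polynomial interpretation for DNF conversion, for example $I(\Formula\wedge\FormulaAux)=I(\Formula)\cdot I(\FormulaAux)$, $I(\Formula\vee\FormulaAux)=I(\Formula)+I(\FormulaAux)+1$, $I(\exists w.\Formula)=2I(\Formula)$, and atoms mapped to $2$. With suitable constants, each rule strictly decreases $I$. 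I would verify the three inequalities directly; if a constant choice fails, I would fall back to a lexicographic combination: first the multiset of $\vee$-depths under $\wedge/\exists$, then the number of $\exists$ under $\wedge$.

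\textbf{Shape of the normal form.} Once no rule applies, I argue by structural induction that no $\vee$ occurs under an $\wedge$ or an $\exists$. Likewise, no $\exists$ occurs under an $\wedge$, except inside $\exists$-blocks. Hence the formula is a disjunction of formulas $\exists\vec{y}_i.\Formula_i$ with each $\Formula_i$ a conjunction of class atoms, RE atoms, and $\top$.

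Occurrences of $\top$ in a conjunction with other conjuncts can be dropped. A disjunct equal to $\top$ alone is written as the non-empty conjunction $\top$, or replaced by the atom $\RE(c,c)$ with $\RE=\EmptyTransition$ when a constant is available. Each $\Formula_i$ is then a non-empty atom conjunction. The output is also a PQ without free variables, since the rules never free a bound variable. Hence $\FF{UC2RPQ}(\PQ)$ is a UC2RPQ equivalent to \PQ.
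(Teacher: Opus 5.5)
Your route (rule-by-rule soundness, then a syntactic analysis of irreducible formulas) is the paper's, and you add termination and variable hygiene. However, the normal-form step has a genuine gap. Rule~3 only fires when \emph{both} conjuncts are existentially quantified. Its symmetric variant is itself, so commutativity does not help, and your claim that no $\exists$ lies under an $\wedge$ in an irreducible formula does not follow. For instance, $\FP{A}(c)\wedge\exists v.\FP{B}(v)$ and $\exists w.\big(\FP{R}(c,w)\wedge\exists u.(\FP{S}(w,u)\wedge\FP{C}(u))\big)$ are irreducible under rules~1--3. Yet neither has the form $\bigvee_i\exists\vec{y}_i.\Formula_i$ with atom conjunctions $\Formula_i$. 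The second shape is exactly what the disjuncts $\exists w.\big(\RE(t,w)\wedge\bigwedge_i\StratToPQ(\DLAut,p_i,w)\big)$ of the stratified reduction become once disjunctions are pulled out, so this is not a corner case. You need the one-sided prenex rule $\Formula\wedge\exists v.\FormulaAux \Rightarrow \exists v.(\Formula\wedge\FormulaAux)$. It is sound because $v$ is not free in $\Formula$, and under active-domain semantics both sides fail on the empty fact set. With it, your structural induction goes through. The paper's one-line argument (``by the exhaustive application of Rule~3, conjunctions occur only at the innermost level'') has the same hole.

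Two smaller repairs are also needed. First, your interpretation does not strictly decrease on rule~3: $I\big((\exists w.\Formula)\wedge(\exists v.\FormulaAux)\big)=4I(\Formula)I(\FormulaAux)=I\big(\exists w.\exists v.(\Formula\wedge\FormulaAux)\big)$. The same tie occurs for any interpretation multiplicative in both $\wedge$ and $\exists$. Your fallback fails too, because rule~3 places an extra $\exists$ above every $\vee$ inside $\Formula$ and $\FormulaAux$, which increases its first component. Use instead the lexicographic pair $(I,N)$, where $N$ sums, over all occurrences of $\exists$, the number of $\wedge$ above them. Rules~1 and~2 strictly lower $I$ (rule~2 because $I(\Formula_1)\geq 2$), while rule~3 and the one-sided prenex rule leave $I$ unchanged and lower $N$. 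Second, replacing a bare $\top$ disjunct by $\EmptyTransition(c,c)$ does not preserve equivalence, since $\EmptyTransition(c,c)$ requires $c$ to occur in the fact set: $\top\vee\FP{A}(c)$ holds in $\{\FP{B}(d)\}$ but its replacement does not. In fact no UC2RPQ built from class and RE atoms holds in the empty fact set. So the lemma needs $\top$ to count as an atom; keep only your first option.
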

\begin{proof}
One can prove the equivalence of the above formulas by induction on the sequence of formulas obtained by applying the replacement rules from \Cref{definition:pq-to-uc2rpq} in the computation of $\FF{UC2RPQ}(\PQ)$.
Since the latter formula is obtained by exhaustively applying Rules~\ref{nf-rule:exist-inside-disjunction} and \ref{nf-rule:wedge-inside-disjunction}, disjunctions occur only at the outermost level.
Moreover, by the exhaustive application of Rule~\ref{nf-rule:exists-inside-conjunction}, conjunctions occur only at the innermost level.
Therefore, $\FF{UC2RPQ}(\PQ)$ is a UC2RPQ.
\end{proof}

The following is a corollary of the previous lemma and the equivalence-preserving reduction of unions of C2RPQs to GQL \cite[Theorem~11]{pods23:uc2rpq-to-gql}.
\begin{theorem}
\label{theorem:correspondence-pq-to-gql}
There is an equivalence-preserving reduction from PQs to GQL.
\end{theorem}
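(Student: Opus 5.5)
The plan is to compose the lemma just proved with the known translation of UC2RPQs into GQL from \cite[Theorem~11]{pods23:uc2rpq-to-gql}. Given a PQ $\PQ$, we first compute $\FF{UC2RPQ}(\PQ)$. By the preceding lemma, this is a UC2RPQ equivalent to $\PQ$, that is, $\QueryMapping(\PQ) = \QueryMapping(\FF{UC2RPQ}(\PQ))$. We then apply the reduction of \cite[Theorem~11]{pods23:uc2rpq-to-gql} to obtain a GQL query $G$ that holds on exactly the same fact sets (viewed as graph databases) as $\FF{UC2RPQ}(\PQ)$. Since equivalence is equality of query mappings and is therefore transitive, $\PQ$ and $G$ are equivalent.

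Before that, we must check that the rewriting of \Cref{definition:pq-to-uc2rpq} terminates, so that the map is well defined and computable. Rules~\ref{nf-rule:exist-inside-disjunction} and~\ref{nf-rule:wedge-inside-disjunction} move disjunctions strictly upward in the syntax tree. Rule~\ref{nf-rule:exists-inside-conjunction} moves quantifiers upward above conjunctions. A standard measure settles termination: take the multiset of depths of $\vee$-nodes together with the number of $\wedge$-nodes that lie above some $\exists$. The duplication caused by distributivity may blow the formula up exponentially, but it keeps the result finite, which is all we need.

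We must also handle the side condition that renaming in Rule~\ref{nf-rule:exists-inside-conjunction} avoids capture. This holds because variables are quantified exactly once (and Rule~\ref{nf-rule:exist-inside-disjunction} introduces fresh names). The degenerate disjunct $\top$ should be treated as the empty conjunction, or replaced by an equivalent atom such as $\exists y.\,\EmptyTransition(y,y)$, which is true whenever the fact set is nonempty. In the same way, the bare atoms $\FP{A}(t)$ and $\RE(t,u)$ with constants are allowed in C2RPQs with constants.

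The main obstacle is aligning semantics between our setting and that of \cite{pods23:uc2rpq-to-gql}. Our fact sets must be encoded as property graphs: constants become nodes, role facts become labelled edges, and class facts become node labels. Our path semantics uses $\Invert(\FS)$ for two-way navigation, and quantifiers range over the active domain, matching GQL's pattern matching over existing nodes. I would spell out this encoding and observe that $\FS \models \RE(c,d)$ corresponds to the existence of a matching path in the graph. Constants occurring in the query, such as $s$, are handled by node-property filters. The remaining edge case is that of empty fact sets.
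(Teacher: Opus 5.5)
Your proposal is the paper's proof: the theorem is stated there as a direct corollary of the lemma that $\PQ$ and $\FF{UC2RPQ}(\PQ)$ are equivalent with the latter a UC2RPQ, composed with \cite[Theorem~11]{pods23:uc2rpq-to-gql}; the paper does not discuss the graph encoding, the constants, or the $\top$/empty-fact-set corner cases you raise. The one thing to correct is your termination measure, which you can simply drop since the lemma already presupposes that $\FF{UC2RPQ}(\PQ)$ is well defined: the multiset of $\vee$-depths does not decrease under Rule~\ref{nf-rule:wedge-inside-disjunction}, because $\Formula_1$ is duplicated one level deeper (e.g.\ $(\FP{A}(a)\vee\FP{B}(a))\wedge(\FP{C}(a)\vee\FP{D}(a))$ goes from $\{1,1\}$ to $\{0,2,2\}$), so if you keep it you need an interpretation that absorbs the duplication, such as $[\phi\wedge\psi]=[\phi]\cdot[\psi]$, $[\phi\vee\psi]=[\phi]+[\psi]+1$, $[\exists\vec{w}.\phi]=2[\phi]$ with atoms and $\top$ valued $2$ and a quantifier block counted as one node, which strictly decreases under all three rules.
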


\section{Related Work}
\label{section:related-work}

In the line of work on rewriting DL queries, a cornerstone is the first-order rewritability of rule queries of the DL-Lite family, allowing translation into unions of Select-Project-Join SQL queries \cite{calvaneseetal:dllite}.
It is worth noting that DL automata associated to (the positive versions) of DL-Lite rule queries are always stratified.
Indeed, the popular fragments of DL-Lite (DL-Lite$_{core}$, DL-Lite$_\mathcal{R}$) forbid axioms of the form \eqref{rule:conjunction} for $n \geq 2$.
By unraveling the successive steps exposed in \Cref{section:horn-alc-to-normalised-horn-alc}, it is not difficult to verify that axioms of the form \eqref{rule:conjunction}, for $n \geq 2$, are not introduced throughout the normalization procedure.
Multi-ary transitions are therefore absent from the corresponding DL automata, which guarantees the existence of a stratification with a single stratum.

Closer to our setting, rewriting of rule queries from (extensions of) $\EL$ into queries based upon regular expressions has also been studied.
In \cite{linear-EL}, rule queries of linear $\EL$ are rewritten into regular-path queries, and this work was later extended to the so-called \emph{harmless linear $\ELHI$} \cite{harmless-elhi} using UC2RPQs as a target query language.
Harmless linear $\ELHI$ restricts axioms of the form \eqref{rule:conjunction} to the case of $n = 1$, as in DL-Lite, and limits the interaction of pairs of roles $\FP{S_1}, \FP{S_2}$ whenever there exists a third role $\FP{R}$ such that both $\FP{R}(x, y) \to \FP{S_1}(x, y)$ and $\FP{R}(x, y) \to \FP{S_2}(y, x)$ are entailed by the set of rules.
Already due to the restriction on conjunction, \Cref{example:introduction-query} falls outside of the rule queries that their approach can rewrite.
The nature of the restrictions in harmless linear $\ELHI$ makes a more formal comparison with our approach difficult.
We have not found an example of a harmless linear $\ELHI$ rule query that does not have an associated stratified DL automaton.

More recently, and also motivated by the navigational features of GQL, \emph{quasi-linear $\mathcal{ELH}$ with limited inverses} ($\quasilinELHI$) was introduced and has been shown to enjoy rewriting into UC2RPQs \cite{eswc25-rew-quasilinear}.
The definition of this latter extension of $\EL$ is characterized by a notion of \emph{locality}. 
Typically, our axiom $\FP{Crit}(x) \wedge \FP{Acc}(x, y) \to \FP{Trust}(y)$ makes the class $\FP{Crit}$ non-local, and $\quasilinELHI$ forbids non-local classes from appearing in the head of an axiom of form \ref{rule:conjunction}.
As a consequence, an axiom like $\FP{Sens}(x) \wedge \FP{Onl}(x) \to \FP{Crit}(x)$ is not allowed in \quasilinELHI.
This shows that Example~\ref{example:introduction-query} is not a \quasilinELHI rule query, while being rewritable with our approach since its corresponding DL automaton is stratified (see Example~\ref{example:strafication}).
We speculate that by examining closely how the notion of (non-)locality is preserved through the different normalization steps exposed in \Cref{section:horn-alc-to-normalised-horn-alc}, one can establish that every $\quasilinELHI$ rule query has an associate stratified DL automaton.
Together with the previous remark regarding Example~\ref{example:introduction-query}, such a verification would guarantee that our work strictly subsumes theirs.

A very recent work \cite{DBLP:conf/dlog/ArpasiBO26} has also pinpointed a subclass of \ELHI queries that supports rewriting into nested 2RPQs, another fragment of GQL.
The proposed subclass, namely $\ELI^{\bot}_{\preceq}$, relies on the existence of a stratification of class and role expressions, constrained by the rule set.
This stratification notably limits recursion to rules with shape $\FP{R}(x, y) \land \FP{A}(y) \to \FP{A}(x)$ or $\FP{A}(x) \land \FP{R}(x, y) \to \FP{A}(y)$.
As a consequence, their approach cannot rewrite the query $\langle \{\FP{R}(x, y) \land \FP{A}(y) \to \FP{B}(x), \FP{R}(x, y) \land \FP{B}(y) \to \FP{A}(x)\}, \FP{A}(a)\rangle$, whose corresponding DL automaton is trivially stratified.
Their work also seemingly varies from ours as the nested 2RPQs they consider support \emph{tests} along the 2RPQs.
However, we did not find an example of an $\ELI^{\bot}_{\preceq}$ query that is not already rewritable using our approach; a natural candidate exploiting the presence of tests in their formalism would have been the rule query $\langle \{ \FP{A}(x) \land \FP{B}(x) \to \FP{C}(x), \FP{R}(x, y) \land \FP{C}(y) \to \FP{A}(x) \}, \FP{C}(c) \rangle$, but both approaches fail to rewrite this query.

\section{Conclusions and Future Work}
\label{section:work-conclusions-future}

As our main contribution, we define an equivalence-preserving procedure that transforms a class of \HornALCHI queries into equivalent GQL formulas, proceeding in several steps.
First, an input \HornALCHI query is transformed into a \HornALCDat query (\Cref{theorem:normalization-is-correct}), which is then translated into a DL automaton (\Cref{theorem:correspondence-normalised-horn-alc-automata}).
If the resulting automaton is stratified, it can be further translated into a PQ (\Cref{theorem:correspondence-automata-crpq}), and subsequently into a GQL query (\Cref{theorem:correspondence-pq-to-gql}).

An immediate direction for future work is to extend our translation to handle other constructors available in standard DL languages.
For instance, our technique can be readily adapted to incorporate disjointness rules of the form $\FP{A_1}(x) \wedge \ldots \wedge \FP{A_n}(x) \to \bot$, which are typically allowed in Horn DLs \cite{normal-form-horn-alc}.
When computing the rewriting of a \HornALCHI query $\langle \RS, \FP{A}(a) \rangle$ in the presence of such rules, we may treat $\bot$ as a normal class and instead compute a GQL rewriting for the rule query~$\langle \RS, \FP{A}(a) \vee \exists w . \bot(w) \rangle$; similar ideas have already been considered in e.g. \cite{DBLP:journals/ws/CaliGL12}.
This approach is correct since, for every fact set $\FS$, the following hold: the KB~$\langle \RS, \FS \rangle$ is unsatisfiable if and only if it entails $\exists w . \bot(w)$; and $\langle \RS, \FS \rangle$ entails~$\FP{A}(a)$ if and only if it is unsatisfiable, or $\langle \RS_{\centernot\bot}, \FS \rangle \models \FP{A}(a)$ with $\RS_{\centernot\bot}$ the subset of $\RS$ without syntactic occurrences of~$\bot$.
Extending our translation to other DL constructors, such as role chains, is likely to be more challenging.
Nevertheless, complex roles appear amenable to our approach, since these closely correspond to regular languages \cite{sriq-rewriting-into-datalog,removing-complex-roles} and could in principle be handled using the techniques from \Cref{section:nfas-to-pqs}.

Another direction for future work is to relax the stratification condition in \Cref{definition:stratified-DL automaton} to capture a broader class of \HornALCHI queries expressible in GQL.
Indeed, our approach is not complete and fails to characterise queries such as $\langle \{\FP{A}(x) \wedge \FP{B}(x) \to \FP{C}(x), \FP{R}(x, y) \wedge \FP{A}(x) \to \FP{A}(x), \FP{S}(x, y) \wedge \FP{C}(x) \to \FP{B}(x)\}, \FP{C}(c)\rangle$ because this query is not equivalent to a PQ.
Nevertheless, it can be translated into an equivalent nested regular path query, which can in turn be expressed in GQL \cite[Theorem~11]{pods23:uc2rpq-to-gql}.
Considering nested regular path queries as a target language for our rewritings may therefore lead to a more general procedure.

Before diving into a deeper theoretical study, an empirical evaluation of our approach may be useful.
Namely, one could consider a collection of real-world \HornALCHI rule sets \cite{mowl-corpus} and determine which  are captured by our method, that is, determine how many correspond to a stratified DL automaton (cf. \Cref{definition:reduction-dl-query-to-DL automaton,definition:stratified-DL automaton}).
This would help assess \FirstItem whether our translation procedure is sufficiently general to be useful in practice, and \SecondItem whether simple modifications could be introduced to capture existing queries outside its current scope.

Finally, another relevant direction for future work is to characterize those \HornALCHI queries that do not admit an equivalent GQL formulation.
In particular, one could aim to obtain results analogous to those in \cite{DBLP:journals/ai/LutzS22}, which provide a decidable and complete characterization of $\mathcal{EL}$ queries that admit rewritings into first-order logic and linear Datalog.
Our translation into DL automata presented in \Cref{section:normalised-horn-alc-to-automata} may prove useful in this regard, as it could enable the transfer of techniques from automata theory, a well-developed area of theoretical computer science, for establishing inexpressibility results.

\paragraph*{Acknowledgements.} This research was funded by the Agence Nationale de la Recherche (ANR) under the Expand project (ANR-25-CE23-1215) and the France 2030 project (ANR-23-IACL-0008).



\bibliographystyle{splncs04}
\bibliography{bibliography}

\appendix

\section{Proofs of \Cref{section:horn-alc-to-normalised-horn-alc}}
\label{section:proofs-horn-alc-to-normalised-horn-alc}

In this section, we prove the claims leading to \Cref{theorem:normalization-is-correct}.
Let $\langle \RS, \Fact \rangle$ be a \HornALCHI query, and $\RS'$ the rule set obtained as described in \Cref{section:horn-alc-to-normalised-horn-alc}.

We first verify that $\langle \RS, \Fact \rangle$ and $\langle \RS', \Fact \rangle$ are equivalent.
Notice that axioms from $\RS'$ are consequences of $\RS$, and thus it is clear that $\QueryMapping(\RS', \Fact) \subseteq \QueryMapping(\RS, \Fact)$.
It remains to prove the converse, that is the following lemma.
\begin{lemma}
    $\QueryMapping(\RS', \Fact) \supseteq \QueryMapping(\RS, \Fact)$.
\end{lemma}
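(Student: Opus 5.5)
We argue by contraposition, using a model transformation. Fix a finite fact set $\FS$ with $\langle \RS', \FS \rangle \not\models \Fact$. Then there is an interpretation $\Imc$ (over the active domain) that satisfies $\RS' \cup \FS$ and falsifies the class fact $\Fact$. From $\Imc$ we build an interpretation $\Jmc$ that satisfies $\RS \cup \FS$ and still falsifies $\Fact$. This yields $\langle \RS, \FS \rangle \not\models \Fact$, which establishes $\QueryMapping(\RS, \Fact) \subseteq \QueryMapping(\RS', \Fact)$.

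\emph{Construction of $\Jmc$.} $\Jmc$ has the same domain as $\Imc$, interprets constants in the same way, and gives every class the same extension as in $\Imc$. For every role $\FP{S}$ we set
\[
\FP{S}^\Jmc = \bigcup_{\FP{R} \roleclosure \FP{S}} \FP{R}^\Imc \;\cup\; \bigcup_{\FP{R}^- \roleclosure \FP{S}} \{(e,d) \mid (d,e) \in \FP{R}^\Imc\}.
\]
We first record a convention about $\roleclosure$. The relation $\sqsubseteq$ induced by $\RS$ on $\RoleExpressions$ is read so that each axiom $\FP{R} \sqsubseteq \FP{S}$ also yields $\FP{R}^- \sqsubseteq \FP{S}^-$, and each axiom $\FP{R}^- \sqsubseteq \FP{S}$ also yields $\FP{R} \sqsubseteq \FP{S}^-$. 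Hence $\roleclosure$ is invariant under inverting both sides, and we extend $\cdot^\Jmc$ to inverse roles in the obvious way. Reflexivity of $\roleclosure$ gives $\FP{R}^\Imc \subseteq \FP{R}^\Jmc$ for every role $\FP{R}$. Since class extensions are unchanged, $\Jmc$ satisfies every class fact of $\FS$ and falsifies $\Fact$. It also satisfies every role fact of $\FS$, because roles only grew.

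\emph{$\Jmc$ satisfies each axiom of $\RS$.} We check the types in turn.
\begin{itemize}
\item Axioms of types \eqref{rule:conjunction} and \eqref{rule:top} mention no roles. They belong to $\RS'$, hence hold in $\Imc$, hence in $\Jmc$.
\item For an axiom $\FP{A} \sqsubseteq \exists \FP{R}.\FP{B}$ of type \eqref{rule:rhs-existential}, the witness provided in $\Imc$ still works in $\Jmc$, since $\FP{R}^\Imc \subseteq \FP{R}^\Jmc$.
\item Role inclusions of types \eqref{rule:subrole} and \eqref{rule:inverse} hold because $\roleclosure$ is transitive and closed under inverting both sides. For instance, if $\FP{R}^- \sqsubseteq \FP{S}$ and $(e,d) \in \FP{R}^\Jmc$, unfold the definition of $\FP{R}^\Jmc$ and compose with $\FP{R}^- \sqsubseteq \FP{S}$; this gives $(d,e) \in \FP{S}^\Jmc$.
\item Consider an axiom $\FP{S}(x,y) \wedge \FP{A}(y) \to \FP{B}(x)$ of type \eqref{rule:lhs-existential}, and let $(d,e) \in \FP{S}^\Jmc$ with $e \in \FP{A}^\Jmc$. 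By definition of $\FP{S}^\Jmc$ there are two cases. If $(d,e) \in \FP{R}^\Imc$ with $\FP{R} \roleclosure \FP{S}$, then $\RS'$ contains $\FP{R}(x,y) \wedge \FP{A}(y) \to \FP{B}(x)$. If $(e,d) \in \FP{R}^\Imc$ with $\FP{R}^- \roleclosure \FP{S}$, then $\RS'$ contains $\FP{A}(x) \wedge \FP{R}(x,y) \to \FP{B}(y)$. In either case $\Imc$ yields $d \in \FP{B}^\Imc = \FP{B}^\Jmc$.
\item Axioms of type \eqref{rule:universal} are handled symmetrically, using the other two families of added axioms.
\end{itemize}
The same-direction case also covers $\FP{R} = \FP{S}$, since $\roleclosure$ is reflexive and the added axioms then coincide with those of $\RS$.

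\emph{Main obstacle.} The delicate step is the bookkeeping of directions in the last two cases. We must check that the four families of axioms added to $\RS'$ exactly match the two ways a pair can enter $\FP{S}^\Jmc$, namely directly or by inversion. This is where the symmetry convention on $\roleclosure$ is needed: the paper's two conditions ``$\FP{R} \roleclosure \FP{S}$ or $\FP{R}^- \roleclosure \FP{S}^-$'' and ``$\FP{R}^- \roleclosure \FP{S}$ or $\FP{R} \roleclosure \FP{S}^-$'' each collapse to a single case under that convention. We would state this symmetry explicitly as a preliminary observation.

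Active-domain semantics causes no issue, because $\Jmc$ has the same domain as $\Imc$ and the existential witnesses are reused unchanged.
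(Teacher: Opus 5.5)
Your proposal is correct and follows the paper's proof: it uses the same interpretation $\Jmc$ (classes unchanged, each role enlarged by every $\roleclosure$-smaller role expression, with inverses read as inverted pairs) and the same axiom-by-axiom check, phrased contrapositively rather than directly. Stating explicitly that $\roleclosure$ is closed under inverting both sides is a useful addition. The paper's handling of the \eqref{rule:inverse} case and of the inverse branch for \eqref{rule:lhs-existential} and \eqref{rule:universal} relies on this silently, and your bookkeeping of which added axiom of $\RS'$ covers each direction is cleaner than the paper's.
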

\begin{proof}
Let $\FS$ be a fact set such that $\langle \RS, \FS \rangle \models \alpha$, we need to prove that $\langle \RS', \FS \rangle \models \alpha$.
Let $\Imc$ be an FO-model of $\langle \RS', \FS \rangle$, we denote $\Delta^\Imc$ its domain and $\cdot^\Imc$ its interpretation function.
We construct an FO-structure $\Jmc$ that extends $\Imc$ and will be a model of $\langle \RS, \FS \rangle$.
The domain of $\Jmc$ is also $\Delta^\Imc$, it interprets every class as in $\Imc$, and every role $\FP{R} \in \Roles$ as follows:
\[
\FP{R}^\Jmc := \bigcup_{\substack{\FP{S} \in \RoleExpressions \\
\text{s.t.\ } \FP{S} \roleclosure \FP{R}}} \FP{S}^\Imc.
\]
We further refer to the property that, by definition, $\Jmc$ and $\Imc$ agree on the interpretation of classes as $(\star)$.
We now verify that $\Jmc$ is a model of $\langle \RS, \FS \rangle$, which will conclude the proof since it guarantees $\Jmc \models \alpha$ using the assumption that $\langle \RS, \FS \rangle \models \alpha$ and consequently $\Imc \models \alpha$ by \propconcepts.
It is clear that $\Jmc \models \FS$, it thus remains to verify $\Jmc \models \RS$.
We examine each possible form of axiom in $\RS$:
\begin{enumerate}
    \item[\eqref{rule:conjunction}] Immediate by \propconcepts.
    \item[\eqref{rule:lhs-existential}] Assume $\FP{R}(x, y) \wedge \FP{A}(y) \to \FP{B}(x) \in \RS$, $(d, e) \in \FP{R}^\Jmc$ and $e \in \FP{A}^\Jmc$.
    By definition of $\RS'$, we have $\FP{R}(x, y) \wedge \FP{A}(y) \to \FP{B}(x) \in \RS'$ and, by \propconcepts, we have $e \in \FP{A}^\Imc$.
    By definition of $\FP{R}^\Jmc$, we have $(d, e) \in \FP{S}^\Imc$ for some $\FP{S} \roleclosure \FP{R}$.
    Now, either $\FP{S} \in \Roles$, and then we have $\FP{S}(x, y) \wedge \FP{A}(y) \to \FP{B}(x) \in \RS'$ by definition of $\RS'$, or $\FP{S} \in \InvRoles$ and we have $\FP{A}(x) \wedge \FP{S}(x, y) \to \FP{B}(y) \in \RS'$.
    Either way, $\Imc \models \RS'$ yields $d \in \FP{B}^\Imc$, and we conclude using \propconcepts.
    \item[\eqref{rule:universal}] 
    Assume $\FP{A}(x) \wedge \FP{R}(x, y) \to \FP{B}(y) \in \RS$, $(d, e) \in \FP{R}^\Jmc$ and $d \in \FP{A}^\Jmc$.
    By definition of $\RS'$, we have $\FP{A}(x) \wedge \FP{R}(x, y) \to \FP{B}(y) \in \RS'$ and, by \propconcepts, we have $d \in \FP{A}^\Imc$.
    By definition of $\FP{R}^\Jmc$, we have $(d, e) \in \FP{S}^\Imc$ for some $\FP{S} \roleclosure \FP{R}$.
    Now, either $\FP{S} \in \Roles$, and then we have $\FP{A}(x) \wedge \FP{S}(x, y)  \to \FP{B}(y) \in \RS'$ by definition of $\RS'$, or $\FP{S} \in \InvRoles$ and we have $\FP{S}(x, y) \wedge \FP{A}(y) \to \FP{B}(x) \in \RS'$.
    Either way, $\Imc \models \RS'$ yields $e \in \FP{B}^\Imc$, and we conclude using \propconcepts.
    \item[\eqref{rule:top}] Immediate by \propconcepts.
    \item[\eqref{rule:rhs-existential}]
    Assume $\FP{A}(x) \to \exists y . \FP{R}(x, y) \wedge \FP{B}(y) \in \RS$ and $d \in \FP{A}^\Jmc$.
    By definition of $\RS'$, we have $\FP{A}(x) \to \exists y . \FP{R}(x, y) \wedge \FP{B}(y) \in \RS'$ and, by \propconcepts, we have $d \in \FP{A}^\Imc$.
    Since $\Imc \models \RS'$, there exists $e \in \FP{B}^\Imc$ such that $(d, e) \in \FP{R}^\Imc$.
    Using that $\roleclosure$ is reflexive, the definition of $\FP{R}^\Jmc$ yields $(d, e) \in \FP{R}^\Jmc$.
    Using \propconcepts, we obtain $e \in \FP{B}^\Jmc$ and we are done.
    \item[\eqref{rule:subrole}]
    Immediate by construction of $\Jmc$ and $\roleclosure$ being transitive.
    \item[\eqref{rule:inverse}]
    Immediate by construction of $\Jmc$ and $\roleclosure$ being transitive.
\end{enumerate}
\end{proof}

We now explain how the Datalog translation proposed in \cite{normal-form-horn-shiq} can be applied to $\RS'$.
Let us highlight that $\RS'$ complies with the normal form demanded in \cite{normal-form-horn-shiq}: axioms of $\RS'$ with forms \eqref{rule:conjunction} or \eqref{rule:top} are exactly in their form (F1);
those with forms \eqref{rule:lhs-existential} or \eqref{rule:universal} are exactly in their form (F2) (since the latter supports inverse roles!); and those with form \eqref{rule:rhs-existential} are exactly in their form (F3).
In particular, $\RS'$ does not contain any axiom in their form (F4).
In turn, it follows that, when applying the inference rules from Table~2 in \cite{normal-form-horn-shiq}, their Rules~{$\mathbf{R}_\leq$} and $\mathbf{R}^-_\leq$ are never applied.
The same holds for their Rule~$\mathbf{R}^r_\sqsubseteq$ since $\RS'$ does not contain role inclusions, and Rule~$\mathbf{R}_\bot$ since we restricted our attention to positive \HornALCHI rules.
These observations further guarantee that, applying the final translation in Table~3 \cite{normal-form-horn-shiq}, only the first two completion rules are ever applied.
Therefore, we obtain a set $\RS''$ of Datalog rules whose rules are of form 
$B(y) \leftarrow A(x), r(x, y)$ and $B(x) \leftarrow A_1(x), \dots , A_n(x)$, where $r$ is possibly an inverse role.
Now, $B(x) \leftarrow A_1(x), \dots , A_n(x)$ rules are either in forms \eqref{rule:conjunction} or \eqref{rule:top} in our formalism, while $B(y) \leftarrow A(x), r(x, y)$ rules are either in form \eqref{rule:universal} if $r \in \Roles$, or in form \eqref{rule:lhs-existential} if $r \in \InvRoles$.
Altogether, we obtained a set $\RS''$ of $\HornALCDat$ rules such that, for every class fact $\Fact$, $\langle \RS'', \Fact \rangle$ is equivalent to $\langle \RS', \Fact \rangle$ \cite[Theorem~3]{normal-form-horn-shiq}, thus equivalent to $\langle \RS, \Fact \rangle$.
This concludes the proof of Theorem~\ref{theorem:normalization-is-correct}.

\section{Proofs of \Cref{section:normalised-horn-alc-to-automata}}
\label{section:proofs-normalised-horn-alc-to-automata}

In this section, we prove \Cref{theorem:correspondence-normalised-horn-alc-automata}, which establishes the equivalence between a \HornALCDat query \Query and the corresponding automaton $\HornALCToDLAutomaton(\Query)$.
More precisely, we prove \Cref{lemma-hornalc-aut,lemma-aut-hornalcdat}, from which the theorem follows directly as a corollary.

\begin{lemma}
\label{lemma-aut-hornalcdat}
A \HornALCDat query \Query is contained in the DL automaton $\HornALCToDLAutomaton(\Query)$.
\end{lemma}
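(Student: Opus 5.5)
We prove the stronger statement: for every fact set $\FS$ and every class fact $\FP{B}(b)$ with $\langle \RS, \FS \rangle \models \FP{B}(b)$ and $b$ occurring in $\FS$, there is a run of $\HornALCToDLAutomaton(\RS, \FP{D}(s))$ over $\FS$ from $\langle q_\FP{B}, b \rangle$ all of whose leaves are labelled with $\AcceptingState$. The lemma follows by taking $\FP{B}(b) = \FP{D}(s)$.

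We need $s$ to occur in $\FS$. Under active domain semantics this holds, since $s \notin \FS$ would give a model with no element for $s$. Equivalently, the chase only adds facts over constants of $\FS$, by the definition of applicable triggers. By \Cref{propoosition:chase-correctness}, entailment coincides with membership in $\FF{Ch}(\langle \RS, \FS \rangle)$, so we argue by induction on the chase depth $k$ of $\FP{B}(b)$.

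\textbf{Base case ($k = 1$).} Here $\FP{B}(b) \in \FS$, so $\FP{B}$ occurs in $\FS$. For the transition $\Transition{q_\FP{B}}{\FP{B}}{\{\AcceptingState\}}$ we need $\FP{B}$ to occur in $\Query$. If it does not, then $q_\FP{B}$ is not a state at all. We therefore restrict the claim to classes occurring in $\Query$: only such classes arise as $\FP{D}$ or as body classes of rules whose heads occur in $\Query$, so the induction never leaves this set. With that restriction, the run is a root $\langle q_\FP{B}, b \rangle$ with a single child $\langle \AcceptingState, b \rangle$.

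\textbf{Inductive step ($k \geq 2$).} If $\FP{B}(b) \in \FS$, we reuse the base case. Otherwise $\FP{B}(b) \in \FF{App}_\RS(\FF{Ch}_{k-1})$ via some trigger $\langle \Rule, \Subs \rangle$, and we split on the shape of $\Rule$:
\begin{itemize}
\item Rule \eqref{rule:conjunction}. Each $\FP{A}_i(b)$ has depth below $k$, so the hypothesis gives runs from $\langle q_{\FP{A}_i}, b \rangle$. If $n \geq 2$, we attach them under a root via the multi-ary transition $\Transition{q_\FP{B}}{\EmptyTransition}{\{q_{\FP{A}_1}, \ldots, q_{\FP{A}_n}\}}$. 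Repeated classes or $n = 1$ collapse the set to a singleton, and then it is a binary $\EmptyTransition$-transition with a single subrun. We must check that merging subruns for identical states is harmless; it is, since a single subrun suffices for each distinct state.
\item Rule \eqref{rule:lhs-existential} with $\Subs = \{x \mapsto b, y \mapsto c\}$. Here $\FP{R}(b, c) \in \FS$, since role facts are never derived. We use $\Transition{q_\FP{B}}{\FP{R}}{\{q_\FP{A}\}}$ and the run from $\langle q_\FP{A}, c \rangle$.
\item Rule \eqref{rule:universal}. Here $\FP{R}(c, b) \in \FS$, so $\FP{R}^-(b, c) \in \Invert(\FS)$, and we use the $\FP{R}^-$-transition.
\item Rule \eqref{rule:top}. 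We use $\Transition{q_\FP{B}}{\EmptyTransition}{\{\AcceptingState\}}$.
\end{itemize}

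The main obstacle is bookkeeping rather than depth: the multi-ary versus binary collapse when the set $\{q_{\FP{A}_1}, \ldots, q_{\FP{A}_n}\}$ has size $1$, and making sure every class reached occurs in $\Query$ so that its state exists. Both are routine once stated explicitly.
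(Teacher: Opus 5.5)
Your proposal is correct and follows the same route as the paper. Both argue by induction on the chase depth of $\FP{B}(b)$: the base case is the two-node run via $\Transition{q_{\FP{B}}}{\FP{B}}{\{\AcceptingState\}}$, and the inductive step splits over the four \HornALCDat rule shapes, grafting the subruns from the induction hypothesis under a new root labelled $\langle q_{\FP{B}}, b \rangle$. Your extra bookkeeping makes explicit details the paper's proof leaves implicit: you fix the single automaton $\HornALCToDLAutomaton(\RS, \FP{D}(s))$ and restrict to classes occurring in $\Query$, you handle the collapse to a binary $\EmptyTransition$-transition, and you check that $s$ occurs in $\FS$.
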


\begin{proof}
We fix a \HornALCDat query $\Query = \langle \RS, \FP{A}(a) \rangle$, the DL automaton $\HornALCToDLAutomaton(\Query) = \langle a, Q, q_\FP{A}, \TF \rangle$, and a fact set $\FS$.
To prove the lemma, we assume that $\FS \in \QueryMapping(\Query)$ and show that this implies $\FS \in \QueryMapping(\HornALCToDLAutomaton(\Query))$.
More precisely, we establish that for any given fact $\FP{B}(b) \in \FF{Ch}(\RS, \FS)$, there is an accepting run of~$\HornALCToDLAutomaton(\RS, \FP{B}(b))$ over \FS from $\langle q_\FP{B}, b \rangle$.
Applying this claim to the fact $\FP{A}(a)$, we obtain an accepting run of $\HornALCToDLAutomaton(\Query)$ over \FS from $\langle q_\FP{A}, a \rangle$, and therefore $\FS \in \QueryMapping(\HornALCToDLAutomaton(\Query))$.
We prove the claim by induction on the depth of $\FP{B}(b)$ with respect to $\langle \RS, \FS \rangle$.
Recall also that $\FP{B}(b) \in \FF{Ch}(\RS, \FS)$ if and only if $\langle \RS, \FS \rangle \models \FP{B}(b)$ by \Cref{propoosition:chase-correctness}.

For the base case, we assume that $\FP{B}(b)$ has depth~1 with respect to $\langle \RS, \FS \rangle$, which implies that $\FP{B}(b) \in \FS$.
Consider now the directed labelled tree $\langle \{r, r'\}, \{r \to r'\}, L \rangle$, where $L$ is the labelling function mapping $r$ to $\langle q_{\FP{B}}, b \rangle$ and $r'$ to $\langle \AcceptingState, b \rangle$.
This tree is an accepting run of $\HornALCToDLAutomaton(\Query)$ over \FS from $\langle q_\FP{B}, b \rangle$. 
Observe that it has a unique leaf labelled with the accepting state.
Moreover, the edge $r \to r'$ is valid, since $\FP{B}(b) \in \FS$ and $\Transition{q_{\FP{B}}}{\FP{B}}{\{\AcceptingState\}} \in \TF$.

For the induction step, we assume that $\FP{B}(b)$ has depth $k \geq 2$ with respect to $\langle \RS, \FS \rangle$.
Then at least one of the following cases applies:
\begin{itemize}
%
\item  There is a rule of the form $\FP{R}(x,y) \land \FP{C}(y) \to \FP{B}(x) \in \RS$ and a constant $c$ such that $\FP{R}(b,c), \FP{C}(c) \in \FF{Ch}_{k-1}(\RS, \FS)$. 
By induction hypothesis, there is an accepting run $\langle N, E, L \rangle$ of $\HornALCToDLAutomaton(\RS, \FP{C}(c))$ over~$\FS$ from $\langle q_\FP{C}, c \rangle$.
Consider the directed labelled tree $\langle N', E', L' \rangle$ where $N' = N \cup \{r\}$ for some fresh node~$r$, $E' = E \cup \{r \to r'\}$ where $r'$ is the root node of $\langle N, E, L \rangle$, and $L'$ is the function that extends $L$ and maps $r$ to $\langle q_\FP{B}, b \rangle$.
We argue that~$\langle N', E', L' \rangle$ is an accepting run for $\HornALCToDLAutomaton(\RS, \FP{B}(b))$ over $\FS$ from $\langle q_\FP{B}, b \rangle$:
\begin{itemize}
\item Since $\FP{R}(x,y) \land \FP{C}(y) \to \FP{B}(x) \in  \RS$, we have that $\Transition{q_{\FP{B}}}{\FP{R}}{\{q_{\FP{C}}\}} \in \TF$.
Since no rule allows the derivation of binary facts, we have that $\FP{R}(b,c) \in \FS$, which implies that the edge $r \to r'$ is valid as per \Cref{definition-run-automaton}.
Therefore $\langle N', E', L' \rangle$ is a run, as all other edges in this tree satisfy the same definition by induction hypothesis.
\item The leaves in $\langle N', E', L' \rangle$ are exactly the leaves of $\langle N,E,L \rangle$.
Hence, all of the leaves in $\langle N',E',L'\rangle$ feature the accepting state.
\end{itemize}
\item There is a rule of the form $\FP{C}(y) \wedge \FP{R}(x,y) \to \FP{B}(x) \in \RS$ and a constant $c$ such that $\FP{C}(c), \FP{R}(c, b) \in \FF{Ch}_{k-1}(\RS, \FS)$. 
This case is analogous to the previous one.
\item There is a rule of the form $\FP{C}_1(x) \wedge \ldots \wedge \FP{C}_n(x) \to \FP{B}(x) \in \RS$ with $\FP{C}_1(b), \ldots, \FP{C}_n(b) \in \FF{Ch}_{k-1}(\RS, \FS)$.
By induction hypothesis, there is an accepting run $\langle N_i, E_i, L_i \rangle$ of $\HornALCToDLAutomaton(\RS, \FP{C}_i(b))$ over~$\FS$ from $\langle q_{\FP{C}_i}, b \rangle$ for every $1 \leq i \leq n$.
Consider the directed labeled tree $\langle N', E', L' \rangle$ where $N' = N_1 \cup \ldots \cup N_n \cup \{r\}$ with $r$ a fresh node, $E' = E_1 \cup \ldots \cup E_n \cup \{r \to r_1 , \ldots, r \to r_n\}$ with $r_i$ the root node of $\langle N_i, E_i, L_i \rangle$ for every $1 \leq i \leq n$, and $L'$ is the function that extends $L_i \cup \ldots \cup L_n$ and maps $r$ to~$\langle q_\FP{B}, b \rangle$.
We argue that~$\langle N', E', L' \rangle$ is an accepting run for $\HornALCToDLAutomaton(\RS, \FP{B}(b))$ over $\FS$ from $\langle q_\FP{B}, b \rangle$:
\begin{itemize}
\item As $\FP{C}_1(x) \wedge \ldots \wedge \FP{C}_n(x) \to \FP{B}(x) \in \RS$, we have that $\Transition{q_{\FP{B}}}{\EmptyTransition}{\{q_{\FP{C}_1}, \ldots, q_{\FP{C}_n}\}} \in \delta$.
This implies that the edges $r \to  r_1$ through $r \to  r_n$ in $E'$ are valid as per \Cref{definition-run-automaton}.
Therefore $\langle N', E', L' \rangle$ is a run, as all other edges in this tree satisfy the same definition by induction hypothesis.
\item The leaves in $\langle N', E', L' \rangle$ are exactly the union of the leaves of the $\langle N_i,E_i,L_i \rangle$.
Hence, all of the leaves in $\langle N',E',L'\rangle$ feature the accepting state.
\end{itemize}
\item There is a rule of the form $\to \FP{B}(x) \in \RS$.
Consider the directed labelled tree $\langle {r, \ell}, {r \to \ell}, L \rangle$, where $L$ maps $r$ to $\langle q_{\FP{B}}, b \rangle$ and $\ell$ to $\langle \AcceptingState, b \rangle$.
This tree is an accepting run of $\HornALCToDLAutomaton(\RS, \FP{B}(b))$ over \FS starting from $\langle q_\FP{B}, b \rangle$.
Note that $\Transition{q_\FP{B}}{\EmptyTransition}{\{\AcceptingState\}} \in \TF$ since $\to \FP{B}(x) \in \RS$.
\end{itemize}
In either case, there is an accepting run of $\HornALCToDLAutomaton(\RS, \FP{B}(b))$ over $\FS$ from $\langle q_\FP{B}, b \rangle$.
\end{proof}

\begin{lemma}
\label{lemma-hornalc-aut}
A \HornALCDat query \Query contains the DL automaton $\HornALCToDLAutomaton(\Query)$.
\end{lemma}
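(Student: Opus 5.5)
We prove the converse of \Cref{lemma-aut-hornalcdat}: every fact set accepted by $\HornALCToDLAutomaton(\Query)$ belongs to $\QueryMapping(\Query)$. Fix $\Query = \langle \RS, \FP{A}(a) \rangle$, the automaton $\HornALCToDLAutomaton(\Query) = \langle a, Q, q_\FP{A}, \TF \rangle$, and a fact set $\FS$. We prove a stronger claim. For every accepting run $\langle N, E, L \rangle$ of $\HornALCToDLAutomaton(\Query)$ over $\FS$ and every node $n$ with $L(n) = \langle q, c \rangle$ and $q \neq \AcceptingState$, we show that $q = q_\FP{B}$ for some class $\FP{B}$ and that $\langle \RS, \FS \rangle \models \FP{B}(c)$. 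Here ``accepting'' means that all leaves carry $\AcceptingState$; the subtree rooted at $n$ is then itself such a run. Applying the claim to the root, labelled $\langle q_\FP{A}, a \rangle$, gives $\langle \RS, \FS \rangle \models \FP{A}(a)$, so $\FS \in \QueryMapping(\Query)$. The claim holds because every state other than $\AcceptingState$ has the form $q_\FP{C}$ by \Cref{definition:reduction-dl-query-to-DL automaton}, and the root is not a leaf since $q_\FP{A} \neq \AcceptingState$.

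We proceed by induction on the height of the subtree rooted at $n$. A node labelled $q_\FP{B}$ cannot be a leaf, since leaves carry $\AcceptingState$. So $n$ has children, and we inspect the transition of $\TF$ that justifies them, following the cases of \Cref{definition:reduction-dl-query-to-DL automaton}.
\begin{itemize}
\item The transition is $\Transition{q_\FP{B}}{\FP{B}}{\{\AcceptingState\}}$. Then $\FP{B}(c) \in \FS$, and entailment is immediate.
\item The transition is $\Transition{q_\FP{B}}{\EmptyTransition}{\{\AcceptingState\}}$. Then it comes from a rule $\to \FP{B}(x)$. Since $c$ occurs in $\FS$, active-domain semantics gives $\FP{B}(c)$.
\item The transition is $\Transition{q_\FP{B}}{\FP{R}}{\{q_\FP{C}\}}$. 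Then it comes from a rule $\FP{R}(x,y) \wedge \FP{C}(y) \to \FP{B}(x)$, the child is labelled $\langle q_\FP{C}, d \rangle$ with $\FP{R}(c,d) \in \FS$, and the induction hypothesis gives $\FP{C}(d)$. The rule then yields $\FP{B}(c)$.
\item The transition is $\Transition{q_\FP{B}}{\FP{R}^-}{\{q_\FP{C}\}}$. Then it comes from a rule $\FP{C}(x) \wedge \FP{R}(x,y) \to \FP{B}(y)$, the child is labelled $\langle q_\FP{C}, d \rangle$ with $\FP{R}^-(c,d) \in \Invert(\FS)$, that is, $\FP{R}(d,c) \in \FS$. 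This case is symmetric to the previous one.
\item The transition is multi-ary, $\Transition{q_\FP{B}}{\EmptyTransition}{\{q_{\FP{C}_1},\ldots,q_{\FP{C}_k}\}}$. Then it comes from a rule $\bigwedge_i \FP{C}_i(x) \to \FP{B}(x)$. The induction hypothesis on each child gives every $\FP{C}_i(c)$, and the rule yields $\FP{B}(c)$.
\end{itemize}

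Two details need care. The first is that a transition label does not determine which rule produced it: $\EmptyTransition$-labelled binary transitions can arise from a rule $\to \FP{B}(x)$ or from a conjunction rule with $n = 1$. A multi-ary transition uses a \emph{set} of states, so rules with repeated conjuncts collapse, and a child could in principle be $\AcceptingState$ rather than some $q_{\FP{C}}$. So in each case we only argue that \emph{some} rule of $\RS$ generating that transition exists, and that its body is entailed. A child labelled $\AcceptingState$ arises only from the first two cases, which need no induction hypothesis.

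The second detail is the main obstacle, mild as it is: the active-domain side condition. For a $\top$-rule, and for the final acceptance condition, we must know that the constant $c$ actually occurs in $\FS$. This holds for every node except possibly the root, since every constant reached through a role transition occurs in a fact of $\FS$. For the root, it holds because acceptance requires $a$ to occur in $\FS$. With these checks in place, the induction closes, and together with \Cref{lemma-aut-hornalcdat} it yields \Cref{theorem:correspondence-normalised-horn-alc-automata}.
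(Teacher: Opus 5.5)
Your proof is correct and follows essentially the same route as the paper: induction on the height of (sub)runs, with a case analysis on the transition used at each node, traced back to a rule of $\RS$ that generated it. Your explicit check that every constant in a run rooted at $\langle q_\FP{A}, a \rangle$ occurs in $\FS$ is more careful than the paper, which only asserts without justification that $b$ occurs in $\FS$ for runs from an arbitrary $\langle q_\FP{B}, b \rangle$. That check is what makes the case of a rule $\to \FP{B}(x)$ go through under active-domain semantics.
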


\begin{proof}
We fix a $\HornALCDat$ query $\Query = \langle \RS, \FP{A}(a) \rangle$ and a fact set $\FS$.
To prove the lemma, we assume that $\FS \in \FF{QM}(\HornALCToDLAutomaton(\Query))$ and show that $\FS \in \FF{QM}(\Query)$.
More precisely, we argue that, given some fact $\FP{B}(b)$, if there is an accepting run of $\HornALCToDLAutomaton(\RS, \FP{A}(a))$ over $\FS$ from $\langle q_\FP{B}, b \rangle$, then $\FP{B}(b) \in \FF{Ch}(\RS, \FS)$.
This implies in particular that $\FP{A}(a) \in \FF{Ch}(\RS, \FS)$, and therefore that $\FS \in \FF{QM}(\Query)$.
The proof proceeds by induction on the height of the accepting run of the fact $\FP{B}(b)$.
Recall that a fact is entailed by $\langle \RS, \FS \rangle$ if and only if it is in $\FF{Ch}(\RS, \FS)$ by \Cref{propoosition:chase-correctness}.

Given some fact $\FP{B}(b)$, there is no accepting run of $\HornALCToDLAutomaton(\RS, \FP{A}(a))$ over $\FS$ from~$\langle q_\FP{B}, b \rangle$ of height $1$.
Indeed, the unique node of such a run would have to feature both $q_\FP{B}$ and \AcceptingState in its label, which is impossible since these are distinct states.
Hence, the base case of our induction starts with runs of height $2$.

Regarding the base case, we assume that there is an accepting run of $\HornALCToDLAutomaton(\RS, \FP{A}(a))$ over $\FS$ from $\langle q_\FP{B}, b \rangle$ of height 2.
In this case, the root node of this run is labelled~$\langle q_{\FP{B}}, b \rangle$ and all of the labels of its children feature the accepting state.
Therefore, there are only two possible cases:
\begin{itemize}
\item We have that $\FP{B}(b) \in \FS$.
Note that $\Transition{q_\FP{B}}{\FP{B}}{\{\AcceptingState\}}$ is a transition of $\HornALCToDLAutomaton(\Query)$.
\item We have that $\Transition{q_\FP{B}}{\EmptyTransition}{\{\AcceptingState\}}$ is a transition of $\HornALCToDLAutomaton(\Query)$.
Note that, in this case, we have that $\to \FP{B}(x) \in \RS$.
\end{itemize}
In either case, we have that $\FP{B}(b) \in \FF{Ch}(\RS, \FS)$.
Recall that $b$ occurs in $\FS$ since there is an accepting run of $\HornALCToDLAutomaton(\RS, \FP{A}(a))$ over $\FS$ from $\langle q_\FP{B}, b \rangle$.

Regarding the induction step, let us consider an accepting run $\langle N,E,L \rangle$ for $\HornALCToDLAutomaton(\Query)$ over $\FS$ from $\langle q_\FP{B}, b \rangle$ of height $k >2$.
We consider two different cases depending if the root node $r$ of this run has one child or more.
\begin{itemize}
\item Assume that the root $r$ has a unique child $r'$, which does not feature \AcceptingState in its label by \Cref{remark:DL automaton-properties}.
Then, by \Cref{definition-run-automaton} and \Cref{remark:DL automaton-properties}, one of the following cases must hold:
\begin{itemize}
\item There is some class $\FP{C}$ such that $r'$ is labelled with $\langle q_{\FP{C}}, b \rangle$, $\Transition{q_\FP{B}}{\EmptyTransition}{{\{q_\FP{C}\}}} \in \HornALCToDLAutomaton(\Query)$, and $\FP{C}(b) \in \FS$.
Now consider the tree $\langle N', E', L' \rangle$ obtained by removing $r$ from $\langle N, E, L \rangle$, which is of height $k-1$ and is an accepting run for $\HornALCToDLAutomaton(\Query)$ from $\langle q_\FP{C}, b \rangle$ over \FS.
Hence, $\FP{C}(b) \in \FF{Ch}(\RS, \FS)$ by induction hypothesis.
Therefore, $\FP{B}(b) \in \FF{Ch}(\RS, \FS)$ since $\FP{C}(x) \to \FP{B}(x) \in \RS$.
\item There are a class $\FP{C}$ and a constant $c$ such that $r'$ is labelled with $\langle q_{\FP{C}}, c \rangle$, $\Transition{q_\FP{B}}{\FP{R}}{{q_\FP{C}}} \in \HornALCToDLAutomaton(\Query)$, and $\FP{R}(b,c) \in \FS$.
Now consider the tree $\langle N', E', L' \rangle$ obtained by removing $r$ from $\langle N, E, L \rangle$, which is of height $k-1$ and is accepting for $\HornALCToDLAutomaton(\RS, \FP{C}(c))$ over \FS from $\langle q_\FP{C}, c \rangle$.
Hence, $\FP{C}(c) \in \FF{Ch}(\RS, \FS)$ by the induction hypothesis.
Therefore, $\FP{B}(b) \in \FF{Ch}(\RS, \FS)$ since $\FP{R}(x, y) \wedge \FP{C}(y) \to \FP{B}(x) \in \RS$.
\item There are a class $\FP{C}$ and a constant $c$ such that $r'$ is labelled with $\langle q_{\FP{C}}, c \rangle$, $\Transition{q_\FP{B}}{\FP{R}^-}{{q_\FP{C}}} \in \HornALCToDLAutomaton(\RS, \FP{C}(c))$, and $\FP{R}(c, b) \in \FS$.
This case is analogous to the previous one.
\end{itemize}

\item Consider the root $r$ and its multiple children $r_1, \ldots, r_n$.
By \Cref{definition-run-automaton}, there are some $\FP{C}_1, \ldots, \FP{C}_n \in \Classes$ such that $r_1$ is labelled with $\langle q_{\FP{C}_1}, b \rangle$, $r_2$ with $\langle q_{\FP{C}_2}, b \rangle$, and so on.
Moreover, $\Transition{q_\FP{B}}{\EmptyTransition}{\{q_{\FP{C}_1}, \ldots,  q_{\FP{C}_n}\}}$ in $\HornALCToDLAutomaton(\Query)$.
Now, for every $1 \leq i \leq n$, consider the subtree $\langle N_i, E_i, L_i \rangle$ of $\langle N, E, L \rangle$ rooted at $r_i$, which is of height at most $k-1$ and is an accepting run of $\HornALCToDLAutomaton(\Query)$ over $\FS$ from $\langle q_{\FP{C}_i}, b \rangle$.
From the existence of this trees, we conclude that $\FP{C}_1(b), \ldots, \FP{C}_n(b) \in \FF{Ch}(\RS, \FS)$ by induction hypothesis.
Therefore, $\FP{B}(b) \in \FF{Ch}(\RS, \FS)$ since $\FP{C}_1(x) \wedge \ldots \wedge \FP{C}_n(x)  \to \FP{B}(x) \in \RS$.
\end{itemize}
In either case above, we have $\FP{B}(b) \in \FF{Ch}(\RS, \FS)$.
\end{proof}


\section{Proofs of \Cref{section:automata-to-acyclic-pqs}}


In this section, we provide complete proofs of the results from \Cref{section:automata-to-acyclic-pqs}.
More precisely, we prove Lemmas~\ref{lemma-transition-crpq}, \ref{lemma-crpq-dlaut}, and \ref{lemma:dl-aut-in-strat-query}.
Note that \Cref{theorem:correspondence-automata-crpq} follows directly as a corollary of the latter two lemmas.

\lemmatransitioncrpq*

\begin{proof}
We first prove the forward implication of the equivalence stated in the lemma.
Assume that there is an accepting run of $\NFA$ on $\FS$ starting from $\langle q, s \rangle$, and let $n_1, \ldots, n_k$ be the sequence of nodes occurring, in order, along the unique branch of this run.
Note that $k \geq 2$ since $q$ is different from the accepting state.
Then we are in one of the following cases:
\begin{itemize}
\item If $k = 2$, then either $\Transition{q}{\EmptyTransition}{\{\AcceptingState\}} \in \NFA$ or, otherwise, $\Transition{q}{\FP{C}}{\{\AcceptingState\}} \in \NFA$ for some $\FP{C} \in \Classes$.
In the first case, the disjunct $\top$ is in $\NFAToPostiveQuery(\NFA, q, s)$, and that disjunct is entailed by $\FS$. In the second case, the disjunct $\FP{C}(s)$ is in $\NFAToPostiveQuery(\NFA, q, s)$, and that disjunct is entailed by $\FS$ since $\FP{C}(s) \in \FS$.
Note that this fact is in \FS because the run is accepting and $\Transition{q}{\EmptyTransition}{\{\AcceptingState\}} \notin \NFA$.
\item If $k \geq 3$, then consider the node $n_{k-1}$ and its label $\langle p, t \rangle$.
Since the NFA can transition from $\langle q, s \rangle$ to $\langle p, t \rangle$ in $\FS$, we have that $\FS$ entails $\RE^\NFA_{q \triangleright p}(s,t)$ from \Cref{lemma-nfa-transition}.
Now, one of the following cases must hold:
\begin{itemize}
\item If $\Transition{p}{\EmptyTransition}{\{\AcceptingState\}} \in \NFA$, then the disjunct $\exists x . \RE^\NFA_{q \triangleright p}(s,x)$ is in $\NFAToPostiveQuery(\NFA, q,s)$ and \FS entails this disjunct, since this fact set entails $\RE^\NFA_{q \triangleright p}(s,t)$.
\item If not, then there is some $\FP{C} \in \Classes$ such that $\Transition{p}{\FP{C}}{{\AcceptingState}} \in \NFA$, and, since the run is accepting, $\FP{C}(t) \in \FS$.
In this case, the disjunct $\exists x . \RE^\NFA_{q \triangleright p}(s,x) \wedge \FP{C}(x)$ occurs in $\NFAToPostiveQuery(\NFA, q, s)$ and is entailed by $\FS$.
This is indeed the case since $\FS$ entails both $\RE^\NFA_{q \triangleright p}(s,t)$ and $\FP{C}(t)$.
\end{itemize}
\end{itemize}
In either case, we have that $\FS$ satisfies $\NFAToPostiveQuery(\NFA, q,s)$.

For the reverse direction of the equivalence, assume that $\FS$ satisfies $\NFAToPostiveQuery(\NFA, q, s)$.
Then $\FS$ satisfies at least one disjunct \Formula of this PQ.
Depending on the type of disjunct satisfied by $\FS$, one of the following cases applies:
    \begin{itemize}
        \item If $\Formula = \top$, then there is a transition $\Transition{q}{\EmptyTransition}{\{\AcceptingState\}} \in \NFA$.
        In that case, the run from $\langle q,s \rangle$ to $\langle \AcceptingState, s \rangle$ is accepting for $\FS$ in $\NFA$.
        \item If $\Formula = \FP{C}(s)$ for some $\FP{C} \in \Classes$, then we have $\Transition{q}{\FP{C}}{\{\AcceptingState\}} \in \NFA$. In that case, the run $\FS$ from $\langle q,s \rangle$ to $\langle \AcceptingState, s \rangle$ is accepting since $\FP{C}(s) \in \FS$.
        \item Assume that \Formula is of the form $ \exists w . \RegularExpressionNFAStateToState{q}{p}{\NFA}(s, w) \wedge \FP{C}(w)$. 
        Then, there is some constant $t$ such that \FS entails $\RegularExpressionNFAStateToState{q}{p}{\NFA}(s, t)$ and $\FP{C}(t)$,
        and $\Transition{p}{\FP{C}}{\{\AcceptingState\}} \in \NFA$.
        From \Cref{lemma-nfa-transition}, we conclude that \NFA can transition from $\langle q, s \rangle$ to $\langle p,t \rangle$ in~\FS, that is, that there is a run of \NFA over \FS that starts on a node labelled with $\langle q, s \rangle$ and ends on a node labelled $\langle p,t \rangle$.
        If we append one more node at the end of this run labelled with $\langle \AcceptingState, t \rangle$, we obtain an accepting run of \NFA starting $\langle q, s \rangle$ over \FS; recall that $\Transition{p}{\FP{C}}{\{\AcceptingState\}} \in \NFA$ and that $\FP{C}(t) \in \FS$.
        \item Assume that \Formula is of the form $ \exists w . \RegularExpressionNFAStateToState{q}{p}{\NFA}(s, w)$. 
        Then, there is some constant $t$ such that \FS entails $\RegularExpressionNFAStateToState{q}{p}{\NFA}(s, t)$,
        and $\Transition{p}{\EmptyTransition}{\{\AcceptingState\}} \in \NFA$.
        From \Cref{lemma-nfa-transition}, we conclude that \NFA can transition from $\langle q, s \rangle$ to $\langle p,t \rangle$ in~\FS, that is, that there is a run of \NFA over \FS that starts on a node labelled with $\langle q, s \rangle$ and ends on a node labelled $\langle p,t \rangle$.
        If we append one more node at the end of this run labelled with $\langle \AcceptingState, t \rangle$, we obtain an accepting run of \NFA starting $\langle q, s \rangle$ over \FS; recall that $\Transition{p}{\EmptyTransition}{\{\AcceptingState\}} \in \NFA$.
        
    \end{itemize}

    In any case, we have an accepting run for $\NFA$ from $\langle q, s \rangle$ over $\FS$.
\end{proof}

We then show a preliminary lemma allowing us to substitute an automaton $\DLAut$ to its extended version: for the following, we will assume that $\DLAut$ is in extended form.

\begin{lemma}\label{equivalence-epsilon-automaton}
Consider an automaton $\DLAut= \langle s, Q, \InitialState, \delta \rangle$ and its extended version $\DLAut_\EmptyTransition =\langle s, Q, \InitialState,  \delta \cup \delta_\EmptyTransition^* \rangle$.
For every fact set \FS, state $p \in Q$, and constant $c$, there is an accepting run of \DLAut over \FS from $\langle c, p \rangle$ if and only if there is an accepting run of $\DLAut_\EmptyTransition$ over \FS from $\langle c, p \rangle$.
Therefore, $\DLAut$ and $\DLAut_\EmptyTransition$ are equivalent.
\end{lemma}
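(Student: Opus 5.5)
The forward direction (from \DLAut to $\DLAut_\EmptyTransition$) is immediate. Since $\TF \subseteq \TF \cup \TF^*_\EmptyTransition$, every run of \DLAut over \FS is also a run of $\DLAut_\EmptyTransition$ over \FS with the same root and the same leaves. Hence accepting runs are preserved.

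For the converse, we simulate each transition of $\TF^*_\EmptyTransition$ by a finite tree of \EmptyTransition-transitions from \TF. The key auxiliary claim concerns any transition $\MultiaryTrans{q}{P} \in \TF^*_\EmptyTransition$ and any constant $c$. We will show that there is a finite \emph{partial run} of \DLAut from $\langle q, c \rangle$: a labelled tree satisfying the local conditions of \Cref{definition-run-automaton}, using only \EmptyTransition-transitions of \TF, whose leaves are labelled with pairs $\langle p, c \rangle$. Moreover, every $p \in P$ labels at least one leaf, and every leaf state belongs to $P$.

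We prove this claim by induction on the derivation of the transition in $\TF^*_\EmptyTransition$, following its minimal-set definition.
\begin{itemize}
\item In the base case, the transition is $\MultiaryTrans{q}{q}$, and the single-node tree labelled $\langle q, c\rangle$ works.
\item In the inductive step, the transition is $\Transition{q}{\EmptyTransition}{(P\setminus\{p\})\cup P'}$, obtained from $\MultiaryTrans{q}{P}$ and $\Transition{p}{\EmptyTransition}{P'} \in \TF$. We take the tree given by the induction hypothesis. At every leaf labelled $\langle p, c\rangle$, we attach children labelled $\langle p', c\rangle$ for $p' \in P'$, justified by the transition in \TF; this step is binary or multi-ary according to $\vert P'\vert$.
\end{itemize}
The leaf states of the new tree are exactly $(P\setminus\{p\})\cup P'$. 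Every such state still appears as a leaf, because leaves with states other than $p$ are untouched and $P'$ is nonempty.

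Next we take an accepting run of $\DLAut_\EmptyTransition$ and eliminate its uses of $\TF^*_\EmptyTransition$ by induction on the height of the run, rebuilding it bottom-up. Suppose a node $n$ labelled $\langle q, c\rangle$ is justified by a transition $\MultiaryTrans{q}{P} \in \TF^*_\EmptyTransition \setminus \TF$. Its children $m_1,\dots,m_k$ have labels $\langle q_i, c\rangle$ with $\{q_1,\dots,q_k\} = P$, and by the induction hypothesis each child roots an accepting run of \DLAut. We replace the edge from $n$ to its children by the partial run from the claim. At each leaf of the partial run labelled $\langle p, c\rangle$, we graft a copy of the accepting \DLAut-run rooted at some child $m_i$ with $q_i = p$; such a child exists because $P = \{q_1,\ldots,q_k\}$. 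If the partial run is the single-node tree (so $P=\{q\}$), we simply use the subtree of that child directly. Leaves of the result all carry \AcceptingState, and the root and the constant $s$ are unchanged, so the result is an accepting run of \DLAut.

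The main obstacle is bookkeeping rather than substance. Leaf states of the partial run may repeat, which is why we copy subtrees rather than reuse them. Runs are trees with $\vert Q\vert$ taken as a set, so the children's state multiset need not match $P$ exactly. We also need a degenerate case: the node $n$ may have a single child via a transition in $\TF^*_\EmptyTransition$ with $\vert P\vert = 1$, which is formally binary. The same grafting argument handles it.

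Equivalence of \DLAut and $\DLAut_\EmptyTransition$ then follows by instantiating the claim at $\langle \InitialState, s\rangle$.
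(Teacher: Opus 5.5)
Your proof is correct and follows the same strategy as the paper. The forward direction holds because $\delta \subseteq \delta \cup \delta^*_\varepsilon$, and the converse goes by induction over the construction of $\delta^*_\varepsilon$ (the paper's layers $\delta_i$), showing that each derived $\varepsilon$-transition is simulated by $\varepsilon$-transitions of $\delta$. Your tree-shaped ``partial run'', the explicit grafting of copied accepting subruns, and your handling of repeated child states and singleton targets make rigorous the step the paper leaves implicit when it speaks of a simulating ``finite sequence'' of transitions.
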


\begin{proof}
    The inclusion of $\DLAut$ in $\DLAut_\EmptyTransition$ is trivial. For the other direction,
     we show that every $\EmptyTransition$- transition of $\DLAut_\EmptyTransition$ can be simulated by a finite sequence of $\EmptyTransition$-transitions in $\delta$: from this, it follows that if there is an accepting run on $\DLAut_\EmptyTransition$ for some fact set $\FS$, starting node $\langle p,c \rangle$ then there is an accepting run on $\DLAut$ for the same fact set and starting node.

First note that we can express $\delta_\EmptyTransition^*$ as the (finite) limit of some $\delta_0, \ldots, \delta_n, \ldots$ defined such that $\delta_0= \{ \Transition{q}{\EmptyTransition}{q} \mid q \in Q\}$ and for every $i \geq 0$, $\delta_{i+1}= \delta_i \cup \{ \Transition{q}{\EmptyTransition}{(P \setminus \{p\}) \cup P'} \mid q \in Q\}$ where $p$ is a state, and $P$ and $P'$ are sets of states such that $\Transition{q}{\EmptyTransition}{P} \in \delta_i$, and $\Transition{p}{\EmptyTransition}{P'} \in \delta$, and $p \in P$. 
We then show, by induction on $i$, that every $\delta_i$ can be simulated by a finite sequence of transitions of $\delta$.
        For the base case, if the transition is of form $\Transition{q}{\EmptyTransition}{\{q\}}$, then the transition can be simulated by an empty sequence of transitions: therefore $\delta_0$ can be simulated by a finite sequence of transitions of $\delta$.
        
        For the induction step, we assume that $\delta_i \subseteq \delta_{\EmptyTransition}^*$, and consider $\delta_{i+1}= \delta_i \cup \{\Transition{q}{\EmptyTransition}{Q'}\}$. Then there are some $P$, $p$, $P'$ so that $\Transition{q}{\EmptyTransition}{P} \in \delta_i$, and $p \in P$, and $\Transition{p}{\EmptyTransition}{P'} \in \delta$, and $Q' = (P \setminus \{p\}) \cup P'$. 
        Since $\Transition{q}{\EmptyTransition}{P} \in \delta_i$, it can be simulated by a finite sequence $t_1, \ldots, t_n$ of transitions in $\delta$. Then, the sequence of transitions $t_1, \ldots, t_n, \Transition{p}{\EmptyTransition}{P'}$ simulates $\Transition{q}{\EmptyTransition}{Q'}$: indeed, the transition $\Transition{p}{\EmptyTransition}{P'}$ can be added to the sequence since $p \in P$. Doing so removes $p$ and adds $P'$ to the states at the end of the sequence: therefore, the states at the end of the new sequence are $ (P \setminus \{p\}) \cup P'=Q'$. Therefore we can simulate $\delta_{i+1}$ with a finite sequence of transitions of $\delta$.
\end{proof}

    We then move on to the two main lemmas.
\begin{lemma}\label{lemma-crpq-dlaut}
A stratified DL automaton $\DLAut$ contains $\StratDLAToPostiveQuery(\DLAut)$.
\end{lemma}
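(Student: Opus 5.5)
The statement asks for $\QueryMapping(\StratDLAToPostiveQuery(\DLAut)) \subseteq \QueryMapping(\DLAut)$. I will prove a stronger claim. For every state $q$ of $\DLAut$, every fact set $\FS$, and every constant $c$ occurring in $\FS$: if $\FS \models \StratDLAToPostiveQuery(\DLAut, q, c)$, then $\DLAut$ has an accepting run over $\FS$ from $\langle q, c \rangle$. Instantiating $q = \InitialState$ and $c = s$ gives the lemma. Note that $s$ occurs in $\FS$ whenever $\FS \models \StratDLAToPostiveQuery(\DLAut, \InitialState, s)$, since every disjunct is either $\top$-free and mentions $s$, or requires a path from $s$.

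The proof is by induction on $\StrFun{\DLAut}(q)$. This is well founded because $\InHigherStrata{\DLAut}{}{}$ is acyclic. By \Cref{equivalence-epsilon-automaton}, it suffices to build accepting runs of the extended automaton with transitions $\TF \cup \TF^*_\EmptyTransition$. Runs of the extended automaton whose transitions have a singleton target (such as $\MultiaryTrans{q}{q}$) are simply collapsed, or simulated by $\EmptyTransition$-steps of $\TF$.

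Fix $\FS$ and $c$ with $\FS \models \StratDLAToPostiveQuery(\DLAut, q, c)$. Some outer disjunct holds: there is $\Transition{q}{\EmptyTransition}{P} \in \TF^*_\EmptyTransition$ such that, for every $p \in P$, $\FS$ satisfies the inner disjunction for $p$. I build an accepting run $\rho_p$ from $\langle p, c \rangle$ for each $p \in P$. I then hang these runs below a root labelled $\langle q, c \rangle$ via the transition $\Transition{q}{\EmptyTransition}{P}$; if $|P| = 1$ I take $\rho_p$ directly, which is legitimate after the simulation argument. For a fixed $p$ there are two cases.

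\FirstItem $\FS \models \NFAToPostiveQuery(\NFA, p, c)$. If $p = \AcceptingState$, the single-node run works. Otherwise \Cref{lemma-transition-crpq} gives an accepting run of $\NFA$ from $\langle p, c \rangle$. This is also a run of $\DLAut$, because every transition of $\NFA$ is a binary transition of $\DLAut$. For the lemma to apply, $\NFA$ must satisfy \Cref{remark:DL automaton-properties}, which it inherits from $\DLAut$.

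\SecondItem There are $p', p_1, \ldots, p_n$ with $\TRTransRel{\NFA}{p}{p'}$, $\Transition{p'}{\EmptyTransition}{\{p_1, \ldots, p_n\}} \in \TF$, and $\InHigherStrata{\DLAut}{p}{p_i}$ for all $i$, together with a constant $d$ such that $\FS \models \RegularExpressionNFAStateToState{p}{p'}{\NFA}(c, d)$ and $\FS \models \StratDLAToPostiveQuery(\DLAut, p_i, d)$ for all $i$. By \Cref{lemma-nfa-transition}, $\NFA$ transitions from $\langle p, c \rangle$ to $\langle p', d \rangle$, and this path is a run of $\DLAut$ with a single leaf labelled $\langle p', d \rangle$.

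For the induction to apply to the $p_i$, I need $\StrFun{\DLAut}(p_i) < \StrFun{\DLAut}(q)$. I have $\InHigherStrata{\DLAut}{p}{p_i}$, and $p$ is $\EmptyTransition$-reachable from $q$. Since $\InHigherStrata{\DLAut}{}{}$ is closed under prefixing by $\TRTransRel{\DLAut}{}{}$, this gives $\InHigherStrata{\DLAut}{q}{p_i}$, hence the strict decrease. The induction hypothesis then yields accepting runs from each $\langle p_i, d \rangle$. I attach them below the leaf $\langle p', d \rangle$ using the multi-ary transition of $p'$, which produces $\rho_p$.

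In the base stratum the second case cannot occur, since it would require some $p_i$ with $\InHigherStrata{\DLAut}{q}{p_i}$. In every case all leaves of the assembled run carry $\AcceptingState$, so the run is accepting.

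The main obstacle is a bookkeeping mismatch. $\NFA$ is defined as $\FF{NFA}(\DLAut, q)$, but the states $p \in P$ are reached from $q$ through multi-ary transitions, so $p$ may have been pruned from $\NFA$. I would resolve this by observing that $\NFAToPostiveQuery(\NFA, p, \cdot)$ and $\RegularExpressionNFAStateToState{p}{p'}{\NFA}$ only depend on the binary transitions reachable from $p$. These coincide with those of $\FF{NFA}(\DLAut, p)$, or of the NFA consisting of all binary transitions of $\DLAut$. I would therefore apply \Cref{lemma-transition-crpq} and \Cref{lemma-nfa-transition} to that NFA. Checking this, and that the prefix-closure step for $\InHigherStrata{\DLAut}{}{}$ is valid, is where I expect the care to be needed.
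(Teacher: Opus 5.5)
Your proposal is correct and follows the paper's own proof: induction on the stratum, \Cref{lemma-transition-crpq} for the $\NFAToPostiveQuery$ disjuncts, \Cref{lemma-nfa-transition} plus the multi-ary transition plus the induction hypothesis for the branching disjuncts, all glued under the root via the $\TF^*_\EmptyTransition$-transition. You are in fact more careful than the paper on two points it glosses over, namely deriving \InHigherStrata{\DLAut}{q}{p_i} from closure of \InHigherStrata{\DLAut}{}{} under prefixing by \TRTransRel{\DLAut}{}{}, and the $\FF{NFA}(\DLAut, q)$ versus $\FF{NFA}(\DLAut, p)$ mismatch.

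The one slip is your justification that $s$ occurs in $\FS$. If $\Transition{\InitialState}{\EmptyTransition}{\{\AcceptingState\}} \in \TF^*_\EmptyTransition$ (e.g.\ for the query $\langle \{\to \FP{D}(x)\}, \FP{D}(s)\rangle$), then the disjunct $\NFAToPostiveQuery(\NFA, \AcceptingState, s) = \top$ makes the PQ hold even on fact sets without $s$, so the lemma itself fails there. This is a defect of the paper's definitions, which its proof also ignores, not of your method.
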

\begin{proof}
We fix a stratified DL automaton $\DLAut = \langle s, Q, \InitialState, \TF \rangle$, its extended version~$\DLAut_\EmptyTransition =\langle s, Q, \InitialState,  \delta \cup \delta_\EmptyTransition^* \rangle$, the query $\StratToPQ(\DLAut)$, and a fact set $\FS$. 
To prove the lemma, we assume that $\FS$ entails $\StratToPQ(\DLAut)$ and show that $\DLAut_\EmptyTransition$ accepts $\FS$.
More precisely, we prove the following claim by induction on $k \geq 1$: if $\FS$ entails $\StratDLAToPostiveQuery(\DLAut, p, c)$ for some state $p$ of $\DLAut$ in stratum $k$ and some constant $c$, then there is an accepting run of $\DLAut_\EmptyTransition$ on $\FS$ starting from $\langle p, c \rangle$.
Applying this claim to $p = \InitialState$ and $c = s$, we obtain an accepting run of $\DLAut_\EmptyTransition$ on $\FS$ starting from $\langle \InitialState, s \rangle$, and hence $\DLAut_\EmptyTransition$ accepts~$\FS$.
Recall that $\DLAut$ and $\DLAut_\EmptyTransition$ are equivalent by \Cref{equivalence-epsilon-automaton} and therefore, the lemma follows.


For the base case, let $p$ be a state in stratum $1$ and $c$ a constant such that~$\FS$ satisfies $\StratDLAToPostiveQuery(\DLAut, p, c)$.
Since $p$ belongs to stratum $1$, no state succeeds $p$ with respect to \InHigherStrata{\DLAut}{}{}.
Therefore, by \Cref{definition:reduction-stratified-DL automaton-to-pq}, we have:
    $$\StratDLAToPostiveQuery(\DLAut, p, c)= \bigvee\nolimits_{\Transition{p}{\EmptyTransition}{P} \in \TF^*_\EmptyTransition} \bigwedge\nolimits_{q \in P} \NFAToPostiveQuery(\FF{NFA}(\DLAut,p), q, c)$$
    Since this PQ is entailed by \FS, there is some $\Transition{p}{\EmptyTransition}{\{q_1, \ldots, q_m\}} \in \TF^*_\EmptyTransition$ such that $\FS$ satisfies $\bigwedge\nolimits_{1 \leq i \leq m} \NFAToPostiveQuery(\FF{NFA}(\DLAut,p), q_i, c)$.
    Since this formula is a conjunction, \FS satisfies $\NFAToPostiveQuery(\FF{NFA}(\DLAut, p), q_i, c)$ for every $1 \leq i \leq m$.
    Hence, there is an accepting run for $\DLAut$ over $\FS$ from every $\langle q_i, c \rangle$  by \Cref{lemma-transition-crpq}.
    Therefore, for every $1 \leq i \leq m$, there is an accepting run $\langle N_i, E_i, L_i \rangle$ for $\DLAut_\EmptyTransition$ over $\FS$ from $\langle q_i, c \rangle$ by \Cref{equivalence-epsilon-automaton}.
    Without loss of generality, we assume that the sets $N_1, \ldots, N_m$ are mutually disjoint.
    Now, consider the tree $\langle N, E, L\rangle$ defined as follows:
    \begin{itemize}
    \item Let $N = N_1 \cup \ldots \cup N_m \cup \{r\}$ with $r$ a fresh node that is the root of this tree.
    \item Let $E = E_1 \cup \ldots \cup E_m \cup \{r \to r_1, \ldots, r \to r_m\}$ with $r_1$ the root of $\langle N_1, E_1, L_1 \rangle$, $r_2$ the root of $\langle N_2, E_2, L_2 \rangle$, and so on.
    \item Let $L$ be the mapping extending $L_i$ for every $1 \leq i \leq m$ and maps $r$ to $\langle p, c \rangle$.
    \end{itemize}
    We argue that $\langle N, E, L\rangle$ is an accepting run for $\DLAut_\EmptyTransition$ over $\FS$ from $\langle p, c \rangle$.
    This follows because $\Transition{p}{\EmptyTransition}{\{q_1, \ldots, q_m\}} \in \TF^*_\EmptyTransition$ and hence, the edges $\{r \to r_1, \ldots, r \to r_m\}$ in this tree satisfy \Cref{definition-run-automaton}.
    Moreover, for every $1 \leq i \leq m$, we have that $\langle N_i, E_i, L_i \rangle$ is an accepting run over $\FS$ from $\langle q_i, c \rangle$.
    Hence, all of the edges in $E_1 \cup \ldots \cup E_m$ also satisfy this definition, and all of the leaves in $\langle N, E, L\rangle$ feature the accepting state.

For the induction step, we consider a state $p$ of stratum $k \geq 2$ and a constant~$c$ such that $\FS$ satisfies $\StratDLAToPostiveQuery(\DLAut, p, c)$.
Since \FS satisfies $\StratDLAToPostiveQuery(\DLAut, p, c)$, it satisfies some disjunct, there is some $\Transition{p}{\EmptyTransition}{\{q_1, \ldots, q_m\}} \in \TF_\EmptyTransition^*$ such that \FS satisfies the following:
    \begin{align*} & \bigwedge\nolimits_{1 \leq i \leq m}  \Big( \NFAToPostiveQuery(\FF{NFA}(\DLAut,p), q_i, c)  \\ &
    \bigvee_{\substack{\forall p', p_1,\ldots, p_n \in Q \textit{ such that } \TRTransRel{\NFA}{q_i}{p'}, \\ \Transition{p'}{\EmptyTransition}{\{p_1,\ldots,  p_n\}},~\InHigherStrata{\DLAut}{q_i}{p_1}, \ldots, \InHigherStrata{\DLAut}{q_i}{p_n}}} \big(\exists w. \RegularExpressionNFAStateToState{q_i}{p'}{\NFA}(c, w) \bigwedge_{1 \leq j \leq n} \StratDLAToPostiveQuery(\DLAut, p_j, w)  \big) \Big)\end{align*} 
    In the remainder of our argument, we define an accepting run $\langle N, E, L \rangle$ for $\DLAut_\EmptyTransition$ over \FS from $\langle p, c \rangle$.
    The root node $r \in N$ of this run is mapped by $L$ to $\langle p, c \rangle$.
    Moreover, there are some distinct nodes $r_1, \ldots, r_m \in N$ such that, for every $1 \leq i \leq m$, we have that $r \to r_i \in E$ and $L(r_i) = \langle q_i, c \rangle$.
    Now, for every $1 \leq i \leq m$, we extend the run $\langle N, E, L \rangle$ in the following manner:
        \begin{itemize}
            \item If $\FS \models \NFAToPostiveQuery(\FF{NFA}(\DLAut, p), q_i, c)$, there is an accepting run on~$\DLAut$ over~$\FS$ from $\langle q_i, c \rangle$ as argued in the base case.
            We attach this directed tree to the node $r_i$.
            \item Otherwise, there is a constant $t$ and some states $p', p_1,\ldots, p_n \in Q$ such that $q \rightarrow_\NFA^* p'$, $\Transition{p'}{\EmptyTransition}{\{p_1,\ldots p_n\}}$, $q \succ_\DLAut p_1$, \ldots, $q \succ_\DLAut p_n$, and \FS satisfies the following PQ:
            $$\RE_{q_i \triangleright p'}^\NFA(c,t) \wedge \bigwedge\nolimits_{1 \leq i \leq n} \StratDLAToPostiveQuery(\DLAut, p_i, t) $$
            Since \FS satisfies $\RE_{q_i \triangleright p'}^\NFA(c,t)$, we have from \Cref{lemma-nfa-transition} that there is a run whose root node is labeled $\langle q_i, c \rangle$ and whose unique leaf is labeled $\langle p', t \rangle$ : we attach this directed tree to $r_i$, and call the fresh leaf node $r_i'$.
            
                    Since $\FS$ satisfies $\Transition{p'}{\EmptyTransition}{\{p_1, \ldots, p_n\}}$, the edge $\langle p', t \rangle \rightarrow \{ \langle p_j, t \rangle \mid j \leq n \}$ is valid: we attach it to $r_i'$, and label the fresh leaves nodes $r'_{i, j}$. 
                    
                   Since $\FS$ satisfies $\StratDLAToPostiveQuery(\DLAut, p_1, t)$,\ldots, $\StratDLAToPostiveQuery(\DLAut, p_n, t)$, by induction hypothesis, there are accepting runs $\langle N'_j, E'_j, L'_j \rangle$ from $\langle p_1, t \rangle$ , \ldots, $\langle p_n, t \rangle$.  We attach each $\langle N'_j, E'_j, L'_j \rangle$ to $r'_{i,j}$.

        \end{itemize}
    After the modifications, $\langle N, E, L \rangle$ has only accepting leaves: we have built an accepting run for $\DLAut$ over $\FS$ from $\langle p, c \rangle$. We conclude that there is an accepting run of $\DLAut$ over $\FS$ from $\langle p, c \rangle$,  and the induction step holds. Therefore, if some $\FS$ satisfies the PQ $\PQ(\DLAut)$, we have an accepting run of $\DLAut$ from $ \langle \InitialState, s \rangle$ over $\FS$.
    \end{proof}

    \begin{lemma}
    \label{lemma:dl-aut-in-strat-query}
        A DL automaton $\DLAut$ is contained in $\StratToPQ(\DLAut)$.
    \end{lemma}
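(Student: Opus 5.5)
The plan is to prove, by induction on the stratum $k \geq 1$, the following claim. If $\DLAut$ is stratified, $p$ is a state in stratum $k$, $c$ is a constant, and there is an accepting run $\langle N, E, L \rangle$ of $\DLAut$ over $\FS$ from $\langle p, c \rangle$, then $\FS \models \StratToPQ(\DLAut, p, c)$. Applying the claim to $\langle \InitialState, s \rangle$ gives the lemma. Two remarks simplify the setting. First, by \Cref{equivalence-epsilon-automaton} I may work with runs of $\DLAut$ itself, since the PQ only refers to $\TF^*_\EmptyTransition$ syntactically. Second, by \Cref{remark:DL automaton-properties}, class-labelled edges in a run lead only to leaves labelled $\AcceptingState$.

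The core of the argument is a decomposition of the given run. I would start at the root and grow a finite subtree $T_0$ consisting only of \EmptyTransition-edges, whether multi-ary or binary, that stay on the constant $c$. The expansion of a node of $T_0$ stops as soon as one of two things happens: that node's outgoing edge is labelled by a role expression or a class, or the node is labelled $\AcceptingState$. Since the run is finite, $T_0$ is finite. Let $P$ be the set of states labelling the leaves of $T_0$. By induction on the construction of $T_0$, exactly as in the proof of \Cref{equivalence-epsilon-automaton}, we get $\MultiaryTrans{p}{P} \in \TF^*_\EmptyTransition$ (with $P = \{p\}$ via the trivial transition if $T_0$ is the root alone). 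So it suffices to show that, for each leaf of $T_0$ labelled $\langle q, c \rangle$, the conjunct indexed by $q$ in the disjunct for $P$ is entailed by $\FS$; write $\NFA = \FF{NFA}(\DLAut, p)$ as in \Cref{definition:reduction-stratified-DL automaton-to-pq}. Leaves labelled $\AcceptingState$ need care. The disjunct's own definition covers them: $\NFAToPostiveQuery(\NFA, \AcceptingState, c) = \top$ when $\AcceptingState \in P$.

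For a leaf labelled $\langle q, c \rangle$ with $q \neq \AcceptingState$, follow the subrun below it along binary transitions for as long as the nodes have a single child. There are two cases.
\begin{itemize}
\item If this chain reaches a leaf labelled $\AcceptingState$, the whole subrun is a run of the NFA. Every state on it is reachable from $q$, and hence from $p$, by binary transitions, so it is a run of $\NFA$. By \Cref{lemma-transition-crpq}, $\FS \models \NFAToPostiveQuery(\NFA, q, c)$.
\item Otherwise the chain reaches a node labelled $\langle p', t \rangle$ whose children are labelled $\langle p_1, t \rangle, \ldots, \langle p_n, t \rangle$ via a multi-ary transition. By the choice of frontier, the first edge of the chain from $q$ is a role-expression edge. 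Hence $q \to^*_\NFA p'$, and by the definition of \InHigherStrata{\DLAut}{}{} we get $\InHigherStrata{\DLAut}{q}{p_j}$ for all $j$. By \Cref{lemma-nfa-transition}, $\FS \models \RegularExpressionNFAStateToState{q}{p'}{\NFA}(c, t)$. Moreover $p \to^*_\DLAut q$, so $\InHigherStrata{\DLAut}{p}{p_j}$, and therefore each $p_j$ lies in a stratum strictly below $k$. The induction hypothesis applied to the subruns from $\langle p_j, t \rangle$ then gives $\FS \models \StratToPQ(\DLAut, p_j, t)$, so the second kind of disjunct holds with $w \mapsto t$.
\end{itemize}
The base case $k = 1$ is the same argument with the second case impossible, since no $p_j$ with $\InHigherStrata{\DLAut}{p}{p_j}$ exists.

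I expect the main obstacle to be the choice of the frontier $T_0$. A multi-ary transition reached only through \EmptyTransition-edges must be absorbed into $P$ via $\TF^*_\EmptyTransition$, since otherwise the stratum does not decrease. Conversely, every multi-ary transition reached after a role edge must genuinely yield \InHigherStrata{\DLAut}{}{}-successors. A related technical point is checking that the regular expression relative to $\FF{NFA}(\DLAut, p)$, rather than to $\FF{NFA}(\DLAut, q)$, still captures the needed path; this holds because all states on the path are binary-reachable from $p$ through $q$.
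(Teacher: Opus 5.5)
Your route is the paper's own. The frontier $T_0$ is its epsilon-section, and your per-leaf split (a purely binary tail, or a first multi-ary edge reached after a role edge, with $p \to^*_{\DLAut} q$ giving $\InHigherStrata{\DLAut}{p}{p_j}$ and hence a strictly smaller stratum) is its induction step. You are in fact more careful than the paper about $\AcceptingState$-leaves and about the stratum dropping strictly.

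However, the justification of your ``related technical point'' is false. The frontier $T_0$ generally passes through multi-ary transitions, which $\FF{NFA}(\DLAut, p)$ deletes. So a frontier state $q$ need not be binary-reachable from $p$, and need not even be a state of $\FF{NFA}(\DLAut, p)$. In the running example, $\FF{NFA}(\DLAut, q_\FP{Crit})$ has only the states $q_\FP{Crit}$ and $\AcceptingState$, while a frontier below $q_\FP{Crit}$ consists of $q_\FP{Sens}$ and $q_\FP{Onl}$. Under your reading, ``the subrun is a run of $\NFA$'' fails, and with it the appeals to \Cref{lemma-transition-crpq} and \Cref{lemma-nfa-transition}; read literally, the definition does not even make sense for such $q$. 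The repair is to interpret the conjunct for $q$ with respect to $\FF{NFA}(\DLAut, q)$, as the paper's example and the inductive step of its proof implicitly do. This is harmless because $\RegularExpressionNFAStateToState{q}{p'}{\NFA}$ and $\NFAToPostiveQuery(\NFA, q, c)$ depend only on the binary transitions reachable from $q$.

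Second, your claim that the set $P$ of \emph{all} leaf states of $T_0$ satisfies $\Transition{p}{\EmptyTransition}{P} \in \TF^*_\EmptyTransition$ is false in general. The closure $\TF^*_\EmptyTransition$ works on sets: in $(P \setminus \{p\}) \cup P'$, one set element stands for all occurrences of a state in $T_0$, and expanding it discards the occurrences that should remain leaves. Take $\Transition{p}{\EmptyTransition}{\{b_1, b_2\}}$, $\Transition{b_1}{\EmptyTransition}{\{b_2, x\}}$ and $\Transition{b_2}{\EmptyTransition}{\{b_1, y\}}$ as the only $\EmptyTransition$-transitions. Consider a run whose root uses the first, whose two children use the other two, and whose four grandchildren continue with role or class edges. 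Its frontier has leaf states $\{b_1, b_2, x, y\}$. Yet the only sets derivable from $p$ are $\{p\}$, $\{b_1, b_2\}$, $\{b_2, x\}$, $\{b_1, y\}$, $\{b_1, x, y\}$ and $\{b_2, x, y\}$. The paper's proof makes the same assertion.

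What you actually need is weaker: some $P$ with $\Transition{p}{\EmptyTransition}{P} \in \TF^*_\EmptyTransition$ such that every state of $P$ labels a leaf of $T_0$. To get it, replay the expansion on sets while keeping one witness node of $T_0$ for each current state. Keep the old witness whenever states merge, and always expand a state whose witness is internal. Each step replaces one witness by some of its children, so the process terminates by the multiset ordering, and at the end every witness is a leaf. With these two repairs, the rest of your argument goes through.
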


    \begin{proof}
        We fix a DL automaton $\DLAut= \langle s, Q, \InitialState, \delta \rangle$  and a fact set $\FS$. To prove the lemma, we assume that there is an accepting run of $\DLAut$ from $\langle \InitialState, s \rangle $ over~$\FS$, and show that this implies that $\FS$ satisfies $\StratToPQ(\DLAut, \InitialState,s)$.
         More precisely, we show the following claim by induction on $k \geq 1$: for any state $p$ of $\DLAut$ in stratum $k$ and any constant $c$, if $\DLAut$ has an accepting run from some $\langle p, c \rangle$ we have $\FS \models \StratToPQ (\DLAut, p, c )$.  Applying this claim to $\InitialState$ and $s$, it follows that if there is an accepting run for $\FS$ from $\langle \InitialState, s \rangle$, then $\FS$ satisfies $\StratToPQ(\DLAut, \InitialState, s)$.

        For the base case, consider a state $p$ of stratum 1 and an accepting run $\langle N, E, L \rangle$ whose root is labeled $\langle p,c  \rangle$ and whose leaves are labeled with $ \{ \langle \AcceptingState, w_1 \rangle , \ldots, \langle \AcceptingState, w_n \rangle \}$ for some $w_1, \ldots, w_n$ in $\DLAut$.
        We define the \emph{epsilon-section} of $\langle N, E, L \rangle$, noted $\langle N, E, L \rangle_\EmptyTransition$, as the (maximal) subset run starting from the root using only edges matched with $\EmptyTransition$-transitions.  By the definition of $\delta_\EmptyTransition^*$, there is a transition $\Transition{p}{\EmptyTransition}{Q'} \in \delta_\EmptyTransition^*$ such that the leaves of $\langle N, E, L \rangle_\EmptyTransition$ are labeled with the states of $Q'$. Since $p$ is of stratum 1, there can be only binary edges below these nodes: if that was not the case, some states of leaves of the run would be of a lower strata than the state of the root. Moreover, for every $q \in Q'$, there is an accepting run from $\langle q, c \rangle$ in $\NFA= \FF{NFA}(\DLAut, p)$ over $\FS$. By \Cref{lemma-transition-crpq}, this means that $\NFAToPostiveQuery(\NFA, q, c)$ accepts. Therefore, the PQ $ \bigvee_{\Transition{p}{\EmptyTransition}{Q'}} \bigwedge_{q \in Q'} \NFAToPostiveQuery(\NFA, q, c) = \StratToPQ(\DLAut, \InitialState, c)$ accepts and the property is verified. 

        For the induction step, if $p$ is of strata $k \geq 2$, we consider an accepting run for $\langle p, c  \rangle$ on $\DLAut$. Similarly to the base case, we consider the epsilon-section of the run: if there are only binary edges outside of it, then we are in the same situation as in the base case.
        If not, we consider the leaves $r_1, \ldots, r_m$ of the epsilon-section of the run, labeled $\{ \langle q_i, c \rangle \mid i \leq m\}$. For each $i$, we are in one of two cases:
        \begin{itemize}
    \item If $r_i$ has only binary edges descending from it, then the part of the tree rooted at $r_i$ is an accepting run for $\FF{NFA}(\DLAut, q)$. In that case, $\NFAToPostiveQuery(\FF{NFA}(\DLAut, p), q_i, c)$ is satisfied.
    \item If that is not the case, consider the first multi-ary edge descending from $r_i$, called  $r'_i \to \{ r'_{i,1}, \ldots, r'_{i, n} \}$ with $r'_i$ labeled $\langle p', t \rangle$ and each $r'_{i,j}$ labeled $\langle p_j, t \rangle$.
        \begin{itemize}
            \item By definition, $\langle q, c \rangle$ can transition to $\langle p', t \rangle$ in $\NFA= \FF{NFA}(\DLAut, q)$: therefore, by \Cref{lemma-nfa-transition} we have $\exists w .  \RE_{q \triangleright p'}^\NFA(c, t)$. 
            \item Moreover, in the automaton, we have that $\Transition{p'}{\EmptyTransition}{\{p_1,\ldots, p_n\}}$, and since $p_1$ ,  \ldots, and $p_n$ follow a multi-ary transition outside the epsilon-section of the run $q_i \succ_\DLAut p_1$,  \ldots,  and $q_i \succ_\DLAut p_n$.
            \item Since the run is accepting, the sub-runs starting at $r'_{i,1}$, \ldots,  $r'_{i,n}$ are valid runs with only accepting leaves: they are also accepting for $\DLAut$ over $\FS$. As $p_1,\ldots, p_n$ have strata at most $k$, by induction hypothesis, $ \bigwedge_{1 \leq j \leq n} \FF{PQ}(\DLAut, p_j, t)$ is satisfied by $\FS$.
        \end{itemize}
        \end{itemize}
It follows that 
\begin{align*}
& \bigwedge\nolimits_{i \leq m} \Big(\NFAToPostiveQuery(\NFA, q_i, t)~\vee \\
&\quad \bigvee_{\substack{\forall p', p_1, \ldots, p_n \in Q \textit{ such that } \TRTransRel{\NFA}{q_i}{p'}, \\ \Transition{p'}{\EmptyTransition}{\{p_1,\ldots, p_n\}},~\InHigherStrata{\DLAut}{q_i}{p_1}, \ldots, \text{ and } \InHigherStrata{\DLAut}{q_i}{p_n}}} \big(\exists w. \RegularExpressionNFAStateToState{q_i}{p'}{\NFA}(t, w) \wedge \bigwedge_{1 \leq j \leq n} \StratDLAToPostiveQuery(\DLAut, p_j, w)  \big) \Big)
\end{align*}
        is satisfied for some $Q$, and therefore that $\StratToPQ(\DLAut, p, c)$ is.

        Therefore, we have that for any state $p$ and constant $c$, if there is an accepting run of $\DLAut$ from $\langle p, c \rangle$ over $\FS$, then $\FS$ satisfies $\StratToPQ(\DLAut, p, c )$.
\end{proof}

\section{Claim of \Cref{section:related-work}}

We provide more details regarding our claim that positive fragments of DL-Lite$_{core}$ and DL-Lite$_{\mathcal{R}}$ sets of rules always have stratified associate DL automata.
We assume familiarity with DL-Lite and refer the reader to \cite{calvaneseetal:dllite}; we discuss positive DL-Lite$_{\mathcal{R}}$ since it subsumes DL-Lite$_{core}$.

Let us first highlight that, to be seen as $\HornALCHI$ rules in our normal form, positive DL-Lite$_{\mathcal{R}}$ rules requires the introduction of a placeholder class $\FP{C_\top}$ that replaces $\top$ in every occurrence of a DL-Lite concept of form $\exists \FP{R}.\top$ (or, as is common, simply $\exists \FP{R}$), where $\FP{R}$ is possibly an inverse role.
We add the rule $\rightarrow \FP{C_\top}(x)$ (axiom of form \eqref{rule:top}) to witness this behavior.

Let us now examine the normalization exposed in \Cref{section:horn-alc-to-normalised-horn-alc}.
Getting rid of axioms of form \eqref{rule:subrole} and \eqref{rule:inverse} does not introduce any axiom of form $\eqref{rule:conjunction}$ with $n \geq 2$, as desired.
In \Cref{section:proofs-horn-alc-to-normalised-horn-alc}, we already established that inference rules Rules~{$\mathbf{R}_\leq$}, $\mathbf{R}^-_\leq$, $\mathbf{R}^r_\sqsubseteq$  and $\mathbf{R}_\bot$ never apply. 
Therefore, the only inference rule in \cite[Table~2]{normal-form-horn-shiq} that may introduce axioms of form $\eqref{rule:conjunction}$ with $n \geq 2$ is the rule $\mathbf{R}_\forall$.
To apply, this rule requires an axiom with shape $\FP{A} \sqsubseteq \forall \FP{R}.\FP{B}$.
However, none of the other rules may produce such axiom, therefore it must come directly from our DL-Lite$_\mathcal{R}$ rules.
This implies that $\FP{A} = \FP{C_\top}$.
The body of the inferred rule thus, syntactically, takes the form \eqref{rule:conjunction} with $n \geq 2$.
However, due to our rule $\rightarrow \FP{C_\top}(x)$, it is equivalent to the exact same inferred rule, but in which every occurrence of $\FP{C_\top}$ has been removed from its body, which restores $n \leq 1$.

\end{document}